\newtheorem{thm}{Theorem}[section]
\newtheorem{lemma}[thm]{Lemma}
\newtheorem{remark}[thm]{Remark}
\newcommand{\bmb}{\left( \begin{array}{rr}}
\newcommand{\enm}{\end{array}\right)}
\newcommand{\Z}{{\mathbb Z}}
\newcommand{\al}{{\alpha}}
\newcommand{\beq}{\begin{equation}}
\newcommand{\eeq}{\end{equation}}
\newcommand{\beqa}{\begin{eqnarray}}
\newcommand{\eeqa}{\end{eqnarray}}
\newcommand{\nnb}{\nonumber}
\newcommand{\pd}{\partial}
\numberwithin{equation}{section}
\begin{document}

\title{Arctic Curves in path models from the Tangent Method}
\author{Philippe Di Francesco} 
\address{Department of Mathematics, University of Illinois, Urbana, IL 61821, U.S.A.  \break  e-mail: philippe@illinois.edu
}
\author{Matthew F. Lapa}
\address{Department of Physics, University of Illinois , Urbana, IL 61821, U.S.A.  \break  e-mail: lapa2@illinois.edu}

\begin{abstract}

Recently, Colomo and Sportiello introduced a powerful method, known as the \emph{Tangent Method}, for computing the
arctic curve in statistical 
models which have a (non- or weakly-) intersecting lattice path formulation. We apply
the Tangent Method to compute arctic curves in various models: the domino tiling of the Aztec diamond for which we recover the 
celebrated arctic circle; a model of Dyck paths equivalent to the rhombus tiling of a half-hexagon for which we find an arctic half-ellipse;
another rhombus tiling model with an arctic parabola; the vertically symmetric alternating sign matrices, where we find the same arctic
curve as for unconstrained alternating sign matrices. The latter case involves lattice paths that are non-intersecting but that are allowed
to have osculating contact points, for which the Tangent Method was argued to still apply.
%
For each problem
we estimate the  large size asymptotics of a certain one-point function using LU decomposition 
of the corresponding Gessel-Viennot matrices, and a reformulation of the result amenable to asymptotic analysis.
\end{abstract}

\maketitle
\date{\today}
\tableofcontents


\section{Introduction}
It is now well known that under certain conditions tiling problems of finite plane domains 
display an ``Arctic curve" phenomenon \cite{CEP,JPS} when the size of the domains becomes very large,
namely a sharp separation between ``crystalline" (i.e. regularly tiled) phases, typically induced by corners of the domain,
and ``liquid" (i.e. disordered) phases, away from the boundary of the domain.
A particular subclass of such problems are the so-called ``dimer" models, where tiling configurations are replaced by a dual
notion of dimers, i.e. occupation of the edges of a given domain of a lattice by objects (dimers) in such a way that each 
vertex of the domain is covered by a unique such object. These have received considerable attention over the years,
culminating in general asymptotic results and a characterization of the arctic curves as solving some optimization problem
\cite{KO1,KO2,KOS}.

A common denominator between all these models is the existence of a reformulation in terms of 
Non-Intersecting Lattice Paths (NILP).
For tilings, it is due to the existence of conservation laws (local properties that propagate throughout the domain), giving rise
to configurations of paths (often called De Bruijn lines \cite{DeBruijn}) in bijection with the tilings. For dimers, it is related to the description of
configurations via so-called zig-zag paths \cite{Kdim}.

Recently, Colomo and Sportiello \cite{COSPO} came up with a novel approach to the determination of arctic curves, 
coined the ``Tangent Method". It is based on the non-intersecting lattice path formulation. The idea is to modify the lattice 
path configurations of a certain domain so as to impose that a single border path escapes and reaches a distant target point outside of the domain.
It is argued that for large size this path should leave the arctic curve at some point.
Away from the other paths, and for large size, the latter is most likely to follow a straight line between the point where it leaves the arctic curve 
and the distant target. This line is moreover argued to be tangent to the actual arctic curve. The Tangent method consists of determining for a fixed
target, the most likely boundary point of the domain at which the escaping path leaves the domain, thus obtaining a parametric family of lines
that are tangent to the arctic curve. The latter is recovered as the envelope of the family of lines obtained by moving the target, say along a line.
The intriguing feature of the method is that it seems to extend beyond ordinary non-intersecting lattice path models. In particular, in \cite{COSPO}
the authors argue that the method also applies to the case of so-called osculating paths such as those in bijection with configurations of the
6 Vertex model or Alternating Sign Matrices (ASMs) at the ice point \cite{zeilberger1,kuperberg1996another}. 
As a consequence they obtain an alternative derivation of the ASMs arctic
curve providing the same result as earlier calculations based on assumptions of a very different nature \cite{CP2010,CNP}.

In this paper, we address some concrete examples and apply the tangent method to determine arctic curves. We first treat the case of domino tilings 
of the Aztec diamond for which we recover the celebrated arctic circle result of \cite{CEP}. 
Then we go on to study two lattice path models, both attached to
rhombus tilings of particular domains of the triangular lattice. Finally we treat the case of Vertically Symmetric 
Alternating Sign Matrices (VSASMs), as an extension
of Colomo and Sportiello's calculation for Alternating Sign Matrices.
In doing so, we have to deal with two main complications: 
\begin{itemize}
\item As the tangent method is based on large size estimates, we need to obtain positive
expressions for the various counting functions we wish to estimate, however LU decomposition naturally yields alternating sum expressions 
for these. We therefore have to reformulate these alternating sums as positive sums, a rather involved procedure. 
\item Each setting for applying the Tangent Method only yields a portion of the arctic curve, hence we must change the setting (and possibly
the NILP formulations) to obtain other portions.
\end{itemize}

The paper is organized as follows. 

In the preliminary Section~\ref{prelimsec}, we expose the general Tangent Method, and how to apply it
in the context of NILP models. In particular we describe a general approach to the computation of partition functions of NILP
and their version with one escaping path, based on the LU decomposition of the corresponding Gessel-Viennot matrix. 
We conclude this section with a review of how to apply generating function methods to this procedure.

Section~\ref{aztecsec} is devoted to the re-derivation of the arctic circle for domino tilings of the Aztec diamond by the tangent method
applied to their reformulation in terms of (large Schr\"oder) NILP.

Sections~\ref{dycksecone} and \ref{dycksectwo} explore another NILP problem involving non-intersecting 
Dyck paths i.e. directed paths on the square lattice
that remain in a half-plane. In Section~\ref{dycksecone}, we explore and solve the path problem and derive the naturally associated portion 
of arctic curve. To get the entire curve, we first reformulate the path problem as a rhombus tiling problem of a cut hexagon in 
Section~\ref{dycksectwo},
where we derive the rest of the arctic curve by considering an alternative NILP description of the same tiling configurations.
The complete result for the arctic curve is a half-ellipse inscribed in the cut hexagon.

In Section \ref{everyothersec}, we address yet another NILP problem involving ordinary directed paths on the square lattice, but with specific 
constraints on their starting and ending points. For this case we find that the arctic curve is a portion of parabola.

Section~\ref{sec:VSASM} is devoted to the derivation of the arctic curve for VSASMs. We find essentially the same result as for 
ordinary ASMs (which do not have the reflection symmetry constraint).

We gather a few concluding remarks in Section~\ref{conclusec}.

\medskip

\noindent{\bf Acknowledgments.} We are thankful to F. Colomo and A. Sportiello for extensive discussions on the tangent method. 
PDF thanks the organizers of the program ``Combi17: Combinatorics and interactions" held at the Institut Henri Poincar\'e, Paris
where the present work originated. MFL would like to acknowledge the support of his advisor Taylor Hughes, and support from
NSF grant DMR 1351895-CAR.
MFL thanks the Galileo Galilei Institute in Florence for hospitality during the 2017 ``Lectures on Statistical Field Theories"
winter school program where the first stages of the present project were carried out, and also the organizers of the 2017
``Exact methods in low-dimensional physics" summer school at Institut d'Etudes Scientifiques in Carg\`ese.  PDF is partially 
supported by the Morris and Gertrude Fine endowment.

\section{Preliminaries}\label{prelimsec}

\subsection{The tangent method}\label{tansec}

The method we are going to describe was devised by Colomo and Sportiello \cite{COSPO}
and applies to a number of problems. First of all tiling problems of plane domains by means of tiles
with a few specific shapes and sizes. In many cases, such problems may be reformulated in terms of Non-Intersecting Lattice Paths (NILP),
for which the tangent method is well-defined. The latter are of course combinatorial problems in their own right, a few of which we will 
consider in this paper. A more subtle application of the method seems to indicate that it also applies to {\it interacting} lattice paths,
typically allowed to ``kiss" at a vertex, with a particular interaction weight. This is the case for the ``osculating path" formulation 
of the configurations of the six vertex model \cite{COSPO}.

Let us consider configurations of families of non-intersecting (directed) lattice paths, say from a set of initial points
$v_0,v_1,...,v_n$ to a set of endpoints ${\tilde v}_0,{\tilde v}_1,...,{\tilde v}_n$. By lattice paths, we mean paths drawn on a directed graph, 
whose vertices are the lattice points and whose elementary oriented steps are taken along  oriented edges of the graph.

The tangent method allows to study the large size asymptotics of those configurations  (in a sense defined below). 
Note that the set of all possible paths from any of the $v$'s to any of the $\tilde v$'s defines a maximal domain $D$
with a shape depending on the lattice and on the positions of the $v,\tilde v$'s. By large size asymptotics, we mean the 
limit of a large (scaled) domain $D$. 

For large size, we expect the configurations to display the following pattern of behavior. As long as the paths remain too close to each other, 
they will form very regular ``crystalline" patterns, whereas far enough from the edges of $D$ we expect more disorder, i.e. a ``liquid" phase.
It turns out that in many NILP problems the separation between these phases is very sharp, giving rise asymptotically to separating curves.
This is the ``Arctic Curve Phenomenon". This can be translated back in terms of tilings by saying that tilings of a very large planar domain $D$
tend to develop crystalline phases (in particular induced by corners) along the boundaries of $D$, while a liquid phase develops away from 
the boundaries. These are separated by Arctic Curves as well.  There is a large literature on this subject originating in the determination of
the ``Arctic Circle" for the domino tiling of the so-called Aztec Diamond \cite{CEP,JPS}, and evolving to general descriptions such as that of \cite{KO1,KOS},
and further developments studying the fluctuations around the arctic curve as well.

\begin{figure}
\centering
\includegraphics[width=10.cm]{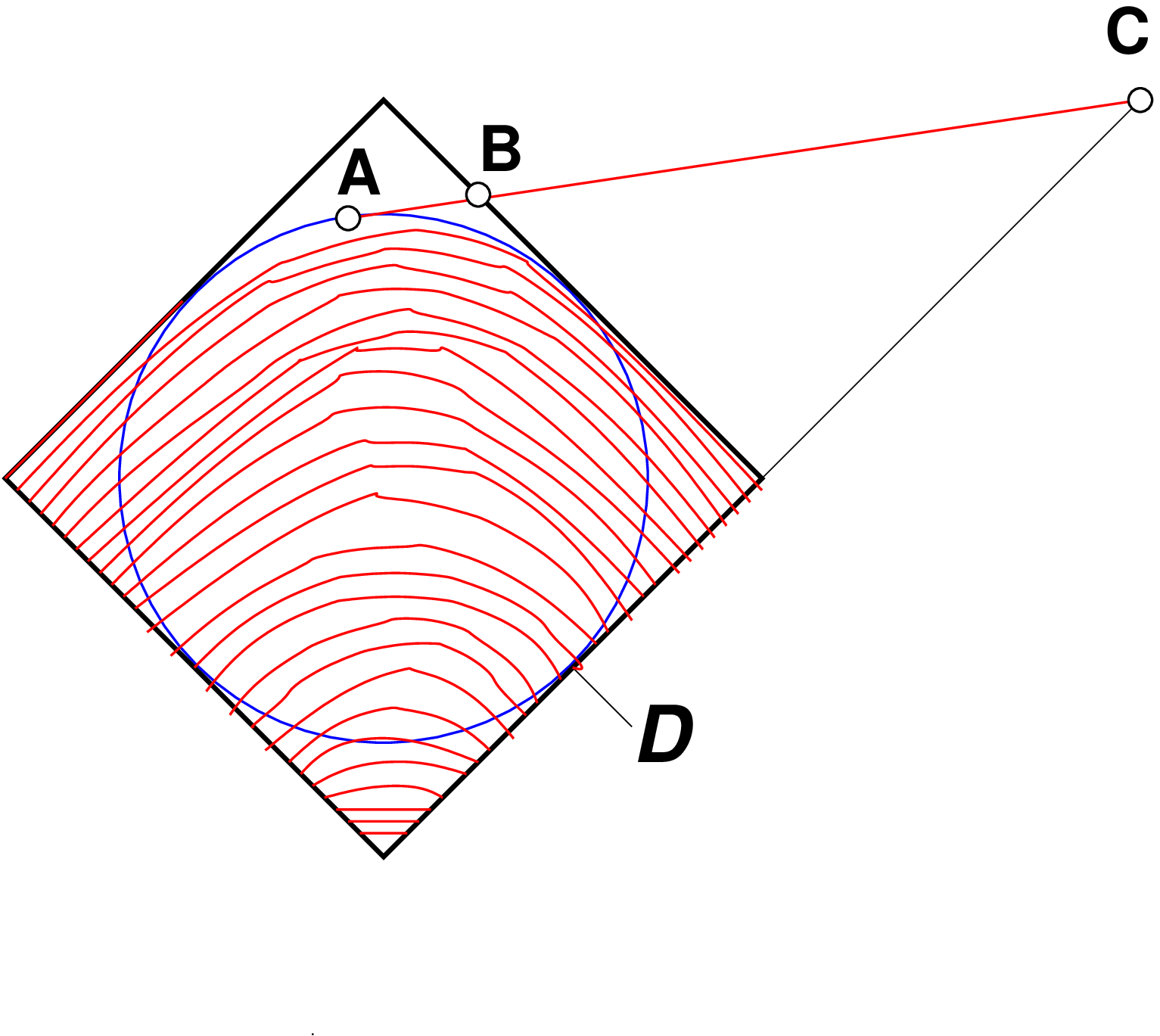}
\caption{\small The tangent method. We have represented a typical NILP problem, with starting points at the left and ending points at the right lower boundaries of a square-shaped domain $D$. The topmost path is constrained to end at the point $C$. The corresponding tangent to the arctic curve is the line $BC$, where $B$ is the most likely exit position from the domain $D$. We have also represented the tangency point $A$ which lies on the arctic curve.}
\label{fig:tangent}
\end{figure}

The tangent method allows to predict the precise shape of the arctic curve and is sketched on a particular example in 
Fig.~\ref{fig:tangent}. 
It goes as follows. The arctic curve can be explored by considering the 
most likely outermost paths of the configuration (say paths from $v_n$ to ${\tilde v}_n$), which in the large size limit will tend to portions of the arctic curve. 
To capture the exact position of the arctic curve, consider another related NILP problem (see Fig.~\ref{fig:tangent}): move the endpoint 
${\tilde v}_n$ to a reachable point ${\tilde v}$ (e.g. the point $C$ in Fig.~\ref{fig:tangent})
far away from the other endpoints, in particular outside of $D$. The corresponding ``outer path" $v_n\to {\tilde v}$ will therefore veer away from 
the other paths, and its most likely
trajectory will be a straight line, when away from the influence of the other paths. We note that this line must be asymptotically  tangent to the arctic curve.
By moving around the position of the new endpoint ${\tilde v}$, we may thus generate some parametric family of tangents to the arctic curve, which can be recovered 
as their envelope.

How to determine the family of tangents? We may record the point ${\tilde w}$ (e.g. the point $B$ on Fig.~\ref{fig:tangent}) from which the escaping path 
$v_n\to {\tilde v}$ leaves the domain $D$
(i.e. the last visit to a boundary point of $D$, and unique vertex from which a step outside of $D$ is taken). 
This point is supposed to be already away from the influence of the other paths. So from this point on,
the escaping paths will follow the tangent asymptotically, defined as the line through ${\tilde w}$ and ${\tilde v}$ (e.g. the line $BC$ in Fig.~\ref{fig:tangent}). 
Assume we can enumerate the NILP configurations
from $v_0,v_1,...,v_n$ to ${\tilde v}_0,{\tilde v}_1,...,{\tilde v}_{n-1},{\tilde v}$, 
say with partition or counting function $N_{\tilde v}$, as well as those 
from $v_0,v_1,...,v_n$ to ${\tilde v}_0,{\tilde v}_1,...,{\tilde v}_{n-1},{\tilde w}$, with partition function $Z_{\tilde w}$.
Then we can perform an asymptotic large size analysis of $N_{\tilde v}=\sum_{{\tilde w}\in \partial D} Z_{\tilde w} \, Y_{{\tilde w},{\tilde v}}$,
where $Y_{{\tilde w},{\tilde v}}$ is the partition function for paths from $\tilde w$ to $\tilde v$ exiting from $D$ at ${\tilde w}$.
Assuming scaling behaviors ${\tilde v}=n a$, ${\tilde w}=n x$, $n\to \infty$, $a,x$ two points independent of $n$,
we get the following large $n$ leading behavior:
$N_{n a}\sim \int d^2x\ e^{n S(x)}$, for some action functional $S$. By a saddle-point analysis, we find the most likely 
scaled exit point $x^*$, as a function of $a$. This gives a parametric family of tangents, i.e. the lines through
the scaled points $x^*(a)$ and $a$, where $a$ is chosen arbitrarily (so that $n a$ is outside of the scaled domain $D$).

In order to justify the tangent method, we need the following.

\begin{thm}
Let us consider the set $P_{a,b}$ of weighted paths from the origin $(0,0)$ to a point $(a,b)$ on the $\Z^2$ lattice, that use a finite family of steps 
$\{(s_i,t_i),\ i=1,2,...,k\}$ 
with $t_i>0$ for all $i$ (directed paths moving to the right), with respective weights $w_i$, $i=1,2,...,k$. 
In this model each path $p\in P_{a,b}$ is weighted by $w(p)=\prod_{{\rm steps}\  s\ {\rm of}\ p} w(s)$, where $w(s)=w_i$ if $s=(s_i,t_i)$.
In other words, defining the partition function of the model as
$$Z_{(0,0)\to (a,b)}=\sum_{p\in P_{a,b}} w(p)$$
we consider the paths of $P_{a,b}$ as a statistical ensemble with probability weights $P(p)=w(p)/Z_{(0,0)\to (a,b)}$.
Then the most likely configuration of a weighted path in 
$P_{n\al,n\beta}$ for large $n$ is the straight line from the origin to $(a,b)=n(\al,\beta)$.
\end{thm}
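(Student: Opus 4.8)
The plan is to prove that the rescaled trajectory of a $w$-random path in $P_{n\al,n\beta}$ converges, as $n\to\infty$, to the straight segment joining $(0,0)$ to $(\al,\beta)$, by showing that any macroscopic deviation from this segment has exponentially small probability. The natural object is the free energy per unit length
\begin{equation}
\sigma(\al,\beta)=\lim_{n\to\infty}\frac1n\log Z_{(0,0)\to(n\al,n\beta)} ,
\end{equation}
whose existence, for $(\al,\beta)$ in the interior of the cone of reachable directions, I would establish by a Fekete superadditivity argument: concatenating an optimal path to $(a,b)$ with one from $(a,b)$ to $(a+a',b+b')$ gives the bound $Z_{(0,0)\to(a+a',b+b')}\ge Z_{(0,0)\to(a,b)}\,Z_{(0,0)\to(a',b')}$, using that the model is translation invariant so that $Z_{u\to v}=Z_{(0,0)\to v-u}$. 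The same scaling shows $\sigma$ is positively homogeneous of degree $1$, $\sigma(\lambda\al,\lambda\beta)=\lambda\,\sigma(\al,\beta)$ for $\lambda>0$, and midpoint concavity then follows at once from superadditivity together with homogeneity.

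Since every step has $t_i>0$, the height $y$ is strictly increasing along any path and may be used as a ``time''. First I would fix an intermediate scaled height $\delta$ with $0<\delta<\beta$ and record the horizontal position at which the path reaches height $\lceil n\delta\rceil$. Because the weight is multiplicative over steps and the height is monotone, the probability that this happens near a scaled position $\gamma$ factorizes as
\begin{equation}
\mathbb{P}\big[\text{path passes through } n(\gamma,\delta)\big]=\frac{Z_{(0,0)\to n(\gamma,\delta)}\,Z_{n(\gamma,\delta)\to n(\al,\beta)}}{Z_{(0,0)\to n(\al,\beta)}} ,
\end{equation}
up to a bounded shift coming from the single step that may jump over the line $y=n\delta$; this shift is $O(1)=o(n)$ and does not affect the exponential rate. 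Using translation invariance and the asymptotics above, the right-hand side behaves like $\exp\{n[\,\sigma(\gamma,\delta)+\sigma(\al-\gamma,\beta-\delta)-\sigma(\al,\beta)\,]\}$.

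The heart of the argument is then a strict convexity statement. I would obtain a variational formula by writing $Z_{(0,0)\to(a,b)}=\sum\binom{N}{n_1,\dots,n_k}\prod_i w_i^{n_i}$ over step multiplicities $(n_i)$ subject to $\sum_i n_i s_i=a$ and $\sum_i n_i t_i=b$ with $N=\sum_i n_i$, and applying Stirling's formula; this realises $\sigma$ as the supremum of the strictly concave entropy functional $\sum_i\rho_i(\log w_i-\log\rho_i)$ over normalized frequencies $\rho_i$ and an overall scale, subject to two linear constraints. Strict concavity of the entropy makes the maximizer unique and forces $\sigma$ to be strictly concave in the directions transverse to rays. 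Combining this with homogeneity of degree $1$ yields $\sigma(u)+\sigma(v)\le\sigma(u+v)$, with equality if and only if $u$ and $v$ lie on a common ray through the origin. Applying this to $u=(\gamma,\delta)$ and $v=(\al-\gamma,\beta-\delta)$ shows the exponent in the displayed probability is $\le 0$, vanishing exactly when $(\gamma,\delta)=\tfrac{\delta}{\beta}(\al,\beta)$, i.e. when the point lies on the segment from $(0,0)$ to $(\al,\beta)$. Hence at each height the horizontal position concentrates on the line; a union bound over a fine mesh of heights $\delta$, together with the uniform bound on step sizes that controls fluctuations between consecutive mesh points, upgrades this to uniform convergence of the whole rescaled trajectory to the straight segment.

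The main obstacle is precisely this strict concavity, or ``equality forces collinearity'', step: it is what actually pins the path to the line rather than merely to a convex neighborhood of it. One must also handle degenerate regimes where strictness fails — if all admissible steps are collinear, or if the two linear constraints leave only a single feasible frequency vector — but in those cases the path is forced onto the line for elementary reasons, so the conclusion persists. The secondary technical points (existence of $\sigma$ only for $(\al,\beta)$ in the interior of the reachable cone, with the rigid boundary directions treated separately; the divisibility needed for $n(\gamma,\delta)$ to be a lattice point, absorbed by continuity of $\sigma$ and coarse-graining; and the single jump-over step) are all of lower order and do not affect the exponential estimates.
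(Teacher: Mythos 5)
Your proposal is correct, but it reaches the conclusion by a genuinely different route from the paper's. Both arguments pivot on the same decomposition --- split the partition function at an intermediate point and show the dominant contribution places that point on the segment --- but the analytic engine differs. The paper introduces the Newton polynomial $P(x,y)=\sum_i w_i x^{s_i}y^{t_i}$, represents $Z_{(0,0)\to(a,b)}$ as a contour integral in $x,y$ (with an auxiliary weight $t$ per step), and extracts collinearity $\beta\xi-\al\eta=0$ from a four-variable saddle-point computation: stationarity in $\xi,\eta$ forces the saddle parameters of the two halves to coincide ($x=u$, $y=v$), and the remaining saddle equations then align the two sub-vectors. You work on the primal side instead: the multinomial representation $Z=\sum\binom{N}{n_1,\dots,n_k}\prod_i w_i^{n_i}$ realizes $\sigma$ as a concave entropy program, and your ``equality in superadditivity forces collinearity'' step is sound --- if $\nu_u,\nu_v$ are optimal frequency vectors for $u,v$ and $\sigma(u)+\sigma(v)=\sigma(u+v)$, then writing $f(\nu)=|\nu|_1\, g(\nu/|\nu|_1)$ with $g(\rho)=\sum_i\rho_i(\log w_i-\log\rho_i)$ strictly concave on the simplex, the chain $\sigma(u+v)\ge f(\nu_u+\nu_v)\ge f(\nu_u)+f(\nu_v)$ collapses to equalities, giving $\rho_u=\rho_v$, hence $\nu_u\parallel\nu_v$ and $u\parallel v$. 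The two proofs are in fact Legendre-dual to one another: the Lagrange multipliers of your entropy program are precisely the paper's saddle variables $\log x$, $\log y$ (and $\log t$ for the overall scale). As for what each buys: the paper's version is shorter and feeds directly into the contour-integral/action formalism reused in all the tangent-method computations later on; yours is closer to a rigorous large-deviation argument --- the multinomial sum has only polynomially many terms, so Stirling gives matching upper and lower exponential bounds without justifying any contour deformation, and your union bound over a mesh of heights yields concentration of the \emph{entire} rescaled trajectory, which is strictly stronger than what the paper establishes (a single intermediate cut). Your flagged caveats (boundary directions of the reachable cone, collinear step sets, the $O(1)$ correction from the step jumping over the cut line) are the right ones and are indeed harmless.
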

\begin{proof}
Let $P$ denote the Newton polynomial of the collection of weighted steps:
$$P(x,y):=\sum_{i=1}^k w_i\, x^{s_i}\, y^{t_i}$$
Then the generating partition function for weighted paths from $(0,0)$ to $(a,b)$, with an additional weight $t$ per step reads:
$$f_{a,b}(t)=\sum_{{\rm paths}\ p\in P_{a,b}} w(p) t^{\ell(p)}=\frac{1}{1-t P(x,y)}\Bigg\vert_{x^ay^b}=\oint \frac{dx dy }{(2i\pi)^2 x^{a+1} y^{b+1}}
\frac{1}{1-t P(x,y)}
$$
where $\ell(p)$ is the length of $p$, i.e. its total number of steps, and 
where we use the notation $f(x,y)\vert_{x^ay^b}$ for the coefficient of $x^ay^b$ in the series $f(x,y)$ (here we expand the fraction as
a series of $t$).  
We may split the corresponding partition function $Z_{(0,0)\to (a,b)}$ into two pieces by recording the position of 
some intermediate point $(x,y)$ in a domain $L$, resulting in the decomposition:
$$Z_{(0,0)\to (a,b)}=\sum_{(x,y)\in L}Z_{(0,0)\to (x,y)}\, Z_{(x,y)\to (a,b)}$$
Let us now consider the scaling limit where $(a,b)=n(\al,\beta)$, and $(x,y)=n(\xi, \eta)$, $L=n{\mathcal L}$, with very large $n$.
We have asymptotically
$$Z_{(0,0)\to n(\al,\beta)}\sim \int_{\mathcal L} d\xi d\eta\   Z_{(0,0)\to n(\xi, \eta)} Z_{n(\xi, \eta)\to n(\al,\beta)} $$
where 
$$Z_{(0,0)\to n(\al,\beta)}=\oint \frac{dx dy }{(2i\pi)^2 x y} e^{n S_{\al,\beta}(x,y)},\quad 
S_{\al,\beta}(x,y)=-\al {\rm Log}(x)-\beta{\rm Log}(y)-\frac{1}{n}{\rm Log}(1-t P(x,y))$$
and 
$$
Z_{n(\xi, \eta)\to n(\al,\beta)}=Z_{(0,0)\to n(\al-\xi,\beta-\eta)}
$$
by translational invariance of the problem.
We may rewrite
$$
Z_{(0,0)\to n(\al,\beta)}\sim \int d\xi d\eta \oint \frac{dxdydudv}{(2i\pi)^4 x y u v}
e^{n S}, \quad S=S_{\xi,\eta}(x,y)+S_{\al-\xi,\beta-\eta}(u,v)
$$
The integral is dominated by the extrema of the action. Writing $\partial_\xi S=\partial_\eta S=0$ implies $x=u$ and $y=v$.
Moreover we have:
\begin{eqnarray*}
\partial_x S&=& -\frac{\xi}{x}+\frac{1}{n} \frac{t\partial_x P(x,y)}{1-t P(x,y)} =0 \\
\partial_y S&=& -\frac{\eta}{y}+\frac{1}{n} \frac{t\partial_y P(x,y)}{1-t P(x,y)} =0 \\
\partial_u S&=& -\frac{\al-\xi}{u}+\frac{1}{n} \frac{t\partial_u P(u,v)}{1-t P(u,v)} =0 \\
\partial_v S&=& -\frac{\beta-\eta}{v}+\frac{1}{n} \frac{t\partial_v P(u,v)}{1-t P(u,v)} =0
\end{eqnarray*}
At the saddle-point, as $u=x$ and $v=y$, we deduce that:
$$ 
\frac{\xi}{\eta}=\frac{\al-\xi}{\beta-\eta}\ \Rightarrow \ \beta\xi-\al\eta=0 
$$
We conclude that the most likely intermediate point lies on the line through $(0,0)$ and $(\al,\beta)$, and the theorem follows.
\end{proof}

To make the tangent method completely rigorous, we should argue that the path is most likely to follow a straight line {\it until} 
the contact with the arctic curve, when the influence of the other paths starts to play a role. One also should argue that the reasoning 
holds irrespectively of possibly stronger interactions between paths (like in the six vertex case). 

\subsection{Non-intersecting paths and $LU$ method}

\label{sec:paths-and-LU}

Most of the models that we consider in this paper, with the exception of the Vertically Symmetric Alternating Sign Matrices
in Section~\ref{sec:VSASM}, have a formulation in terms of 
NILPs on a directed graph $\mathcal{G}$. The general setup is as follows. To start,
we may allow the paths to intersect, and we only impose the non-intersecting condition after setting down the
general definitions. We consider $n+1$ lattice paths on $\mathcal{G}$, with oriented steps along the oriented edges of ${\mathcal G}$,
which begin at the vertices $v_i$ and end at vertices
$\tilde{v}_j$, with $i,j$ integers in the range $0,\dots,n$. For each oriented edge $e$ of the graph we assign a weight 
$w(e)$. To each individual path $p$ we assign the weight $w(p)= \prod_{e\in p} w(e)$. 
For a collection of $n+1$ lattice paths $\mathcal{P}=\{p_0,p_1,\dots,p_n\}$ we assign the weight
$w(\mathcal{P})= \prod_{j=0}^n w(p_j)$. Finally, we define the partition function for lattice paths on the
graph $\mathcal{G}$ which start at the vertices $\{v_i\}_{i=0}^n$ and end at $\{\tilde{v}_j\}_{j=0}^n$ to be
\beq
	Z(\{v_i\} \to \{\tilde{v}_j\})= \sum_{{\mathcal{P}}:\ \{v_i\} \to \{\tilde{v}_j\}} w({\mathcal{P}})\ .
\eeq
The sum is taken over all sets ${\mathcal{P}}$ of $n+1$ lattice paths on $\mathcal{G}$ which connect the starting vertices 
$v_i$ to the ending vertices $\tilde{v}_j$.

To address the case of non-intersecting lattice paths, we shall use the celebrated Gessel-Viennot formula \cite{GV}. 
However it is only applicable if our path setting satisfies the following crossing property: 

\noindent{}(CP)\ {\it For any $0\leq i<j\leq n$ and $0\leq k<\ell\leq n$, 
any path from $v_i$ to ${\tilde v}_\ell$ and any path from $v_j$ to ${\tilde v}_k$ must intersect at least once}. 

This puts in principle some restriction
on the choice of starting and ending points, depending on the structure of the directed graph ${\mathcal G}$. However for all the applications
in the present paper, the crossing property will always be trivially satisfied.

\begin{lemma}[\cite{GV}]
\label{lemma:LGV}
For non-intersecting lattice paths satisfying the property (CP),
the partition function $Z(\{v_i\} \to \{\tilde{v}_j\})$ is given by the determinant formula
\beq
	Z(\{v_i\} \to \{\tilde{v}_j\})= \det_{i,j\in[0,n]}\left(Z_{i,j}\right)\ ,
\eeq
where
\beq
	Z_{i,j}= \sum_{p:\ v_i\to \tilde{v}_j} w(p)\ ,
\eeq
is the weight for all paths from $v_i$ to $\tilde{v}_j$.
\end{lemma}


In applying the tangent method to non-intersecting path models we encounter the following situation. 
We start with the original model of non-intersecting paths on a certain domain $D$ of the directed lattice graph ${\mathcal G}$.
Its  partition function $Z$ is given by Lemma~\ref{lemma:LGV} as the determinant of a certain matrix $A$, $Z=\det(A)$. 

Next, we must consider the partition function $N_{\tilde v}$ of the same model, but with the endpoint ${\tilde v}_n$ 
moved to a different location ${\tilde v}$ outside of $D$, such that the Gessel-Viennot formula of Lemma \ref{lemma:LGV} still applies.
This partition function may be decomposed according to the position ${\tilde w}$ (on the boundary $\partial D$ of $D$)
of the exit point from the domain $D$ of the path from $v_n\to {\tilde v}$ (we choose the point $\tilde v$ so that the path only 
exits once, and can never visit $D$ again). Let $Z_{\tilde w}$ denote the partition function of paths on $D$ from $\{v_0,v_1,...,v_{n}\}$
to  $\{{\tilde v}_0,{\tilde v}_1,...,{\tilde v}_{n-1},{\tilde w}\}$, and $Y_{{\tilde w},{\tilde v}}$ that for paths from ${\tilde w}$ to ${\tilde v}$
that exit $D$. Then we have
\beq
N_{\tilde v}=\sum_{{\tilde w} \in \partial D} Z_{\tilde w}\, Y_{{\tilde w},{\tilde v}} 
\eeq
Applying the Gessel-Viennot formula of Lemma~\ref{lemma:LGV}, we find that $Z_{\tilde w}$ is the
determinant of a matrix $A^{(\tilde w)}$ which differs from $A$ only in its last column, $Z_{\tilde w}=\det(A^{(\tilde w)})$. 
For each ${\tilde w}$ we have
\beq
	A^{({\tilde w})}_{i,j} =\begin{cases}
	A_{i,j} &,\ j\in[0,n-1] \\
	b^{({\tilde w})}_i &,\ j=n
\end{cases}\ ,
\eeq
where 
$A_{i,j}=Z_{v_i\to {\tilde v}_j}$ and $b^{({\tilde w})}_i=Z_{v_i\to {\tilde w}}$ is the partition function for the paths from 
$v_i$ to ${\tilde w}$ on $D$.

To compute the determinant of $A^{({\tilde w})}$ we use the $LU$ decomposition. First, let us suppose that the original
matrix $A$ has an $LU$ decomposition as
\beq
	A= LU\ ,
\eeq
where $L$ is a lower triangular matrix with $1$'s on the diagonal and $U$ is an upper triangular matrix. Then we have
\beq
	\det_{i,j\in[0,n]}\left(A_{i,j}\right)= \prod_{i=0}^n U_{i,i}\ .
\eeq
In the cases we consider in this paper the $LU$ decomposition of the matrix $A$ is either known and can be found in the
literature, or we compute the matrices $L$ and $U$ explicitly using various methods.
This gives immediately an $LU$ decomposition for the matrices $A^{({\tilde w})}$ as well:

\beq
	A^{({\tilde w})}= LU^{({\tilde w})}\ ,
\eeq
where
\beq
	U^{({\tilde w})}:= L^{-1}A^{({\tilde w})}\ .
\eeq
It then follows that
\beq
	\det_{i,j\in[0,n]}\left(A^{({\tilde w})}_{i,j}\right)= \prod_{i=0}^n U^{({\tilde w})}_{i,i}\ .
\eeq

The advantage of this method is that a major simplification occurs due to the fact that $A^{({\tilde w})}$ differs from $A$ only
in the last column. Indeed, we have $U^{({\tilde w})}_{i,j}= \sum_{k=0}^n (L^{-1})_{i,k}A^{({\tilde w})}_{k,j}$, but 
$A^{({\tilde w})}_{k,j}= A_{k,j}$ for $j<n$. This means that for $j<n$, we have $U^{({\tilde w})}_{i,j}= U_{i,j}$. Thus, we find that the 
determinant of $A^{({\tilde w})}$ is given by
\beq
\det_{i,j\in[0,n]}\left(A^{({\tilde w})}_{i,j}\right) = \left(\prod_{i=0}^{n-1} U_{i,i}\right) U^{({\tilde w})}_{n,n}\ .
\eeq
In other words, the computation of the determinant of $A^{({\tilde w})}$ reduces to the computation of the single matrix element
\beq
U^{({\tilde w})}_{n,n} = \sum_{k=0}^n(L^{-1})_{n,k}A^{({\tilde w})}_{k,n}
= \sum_{k=0}^n(L^{-1})_{n,k}b^{({\tilde w})}_k\ .
\eeq
In the context of the tangent method it is more useful to consider a ``one-point function" which is defined as the following
ratio $H_{\tilde w}$ of partition functions:
\beq
H_{\tilde w}:= \frac{Z_{\tilde w}}{Z}= \frac{\det(A^{({\tilde w})})}{\det(A)}
\eeq
Using the $LU$ decomposition method we find that this
function is given by
\beq
H_{\tilde w}=\frac{U^{({\tilde w})}_{n,n}}{U_{n,n}}\ .
\eeq

This gives an efficient way for calculating $H_{\tilde w}$. The tangent method can then be applied to the decomposition of the
normalized partition function, in the limit of a large scaled domain $D$:
\beq
\frac{N_{\tilde v}}{Z}=\sum_{\tilde w} H_{\tilde w} \, Y_{{\tilde w},{\tilde v}} \ .
\eeq
However, to perform an asymptotic analysis of this sum, we need to be able to estimate $H_{\tilde w}$ in large size.
Note that the expression for $U^{({\tilde w})}_{n,n}$ obtained from the computation of $L^{-1}A^{({\tilde w})}$ is naturally
an {\it alternating} sum due to the cofactor expansion of $L^{-1}$. Such sums are not suitable for asymptotic analysis,
as large terms may cancel out. We will therefore need to reexpress $H_{\tilde w}$ as positive sums, a technically
demanding process.

\subsection{Working with generating functions}

In the core of the paper, we will use a number of manipulations involving generating functions 
which we gather here as a preliminary.

\subsubsection{Generating functions and binomial identities}

For a series or polynomial $f(x)=\sum_{i\geq 0} f_i x^i$ we write $f_n=f(x)\vert_{x^n}$ for the coefficient of $x^n$ in $f(x)$ 
(in particular when $n=0$ this denotes the
{\it constant term} $f_0$ of $f$). We also have by the Cauchy theorem:
\beq f(x)\vert_{x^n}=\oint \frac{dx}{2i\pi x^{n+1}} f(x)
\eeq
where the contour of integration is around $0$, and the integral picks up the residue at $0$.

The following lemma will be used throughout the paper.

\begin{lemma}\label{prodlem}
For any two generating functions $f(x),g(x)$, we have the following identity:
\beq 
f(x)g(x)\vert_{x^k}=\sum_{\ell=0}^k f(x)\vert_{x^\ell}\, g(x)\vert_{x^{k-\ell}}
\eeq
\end{lemma}
\begin{proof}
This expresses simply $(fg)_n$, the coefficient of $x^n$ in $fg$, as $(fg)_n=\sum_{\ell=0}^k f_\ell\, g_{k-\ell}$.
\end{proof}

In this paper, we will deal with expressions involving typically sums of products of binomial coefficients. 
The following lemma will be used repeatedly.

\begin{lemma}\label{fourwaylem}
We have four simple
ways of expressing the binomial coefficient ${n\choose k}$ as the coefficient in a generating series or polynomial:
\beq{n\choose k}=(1+x)^n\vert_{x^k}=(1+x)^n\vert_{x^{n-k}}=\frac{1}{(1-x)^{k+1}}\Bigg\vert_{x^{n-k}}=\frac{1}{(1-x)^{n-k+1}}\Bigg\vert_{x^{k}}
\eeq
\end{lemma}
\begin{proof} The two first expressions are obvious. To get the third and fourth, simply notice that
\beq\frac{1}{(1-x)^{m+1}}=\sum_{j\geq 0} {m+j\choose j} x^j \eeq
\end{proof}

We shall also use the description below, using iterated derivatives.
\begin{lemma}\label{iterdetlem}
We have the following representations of the binomial coefficient ${n\choose k}$:
\beq{n\choose k}=\frac{1}{k!}\left(\frac{d^k}{dt^k} t^n \right)\Bigg\vert_{t=1}=
\frac{1}{(n-k)!}\left(\frac{d^{n-k}}{dt^{n-k}} t^n \right)\Bigg\vert_{t=1}
\eeq
\end{lemma}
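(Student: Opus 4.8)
The plan is to evaluate the $k$-th derivative of the monomial $t^n$ in closed form and then specialize to $t=1$. First I would observe that, for $k\leq n$, repeated differentiation produces a falling factorial times the surviving power of $t$:
\[
\frac{d^k}{dt^k}\, t^n = n(n-1)(n-2)\cdots(n-k+1)\, t^{n-k} = \frac{n!}{(n-k)!}\, t^{n-k}.
\]
This is the only computational ingredient, and it is entirely routine — it follows by an immediate induction on $k$, each differentiation lowering the exponent by one and bringing down the current value of the exponent as a factor.

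Next I would set $t=1$, which eliminates the residual power $t^{n-k}$ and leaves
\[
\left(\frac{d^k}{dt^k}\, t^n\right)\Bigg\vert_{t=1} = \frac{n!}{(n-k)!}.
\]
Dividing by $k!$ then gives $\frac{1}{k!}\cdot\frac{n!}{(n-k)!} = \frac{n!}{k!\,(n-k)!} = {n\choose k}$, which establishes the first equality. For the second equality I would simply apply the same formula with $k$ replaced by $n-k$, obtaining $\frac{1}{(n-k)!}\left(\frac{d^{n-k}}{dt^{n-k}}\, t^n\right)\big\vert_{t=1} = {n\choose n-k}$, and then invoke the reflection symmetry ${n\choose k}={n\choose n-k}$ to finish.

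There is no genuine obstacle in this lemma; it is a one-line specialization of the power rule. The only point I would flag for completeness is the degenerate range $k>n$, where the $k$-th derivative of $t^n$ vanishes identically and simultaneously ${n\choose k}=0$, so the claimed identity persists trivially. Thus both sides agree for all admissible $k$, and the two displayed representations are equivalent via the binomial symmetry.
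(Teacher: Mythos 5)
Your proof is correct and is exactly the routine argument the lemma calls for: the paper in fact states this lemma without any proof, treating it as an immediate consequence of the power rule, which is precisely your computation $\frac{d^k}{dt^k}\,t^n=\frac{n!}{(n-k)!}\,t^{n-k}$ evaluated at $t=1$, combined with the symmetry ${n\choose k}={n\choose n-k}$ for the second representation. Your extra remark on the degenerate range $k>n$, where both sides vanish, is a sensible completeness check but not needed for the paper's uses of the lemma.
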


The collection of lemmas above will be used extensively throughout this paper
to evaluate summations over products of binomial coefficients.

A last formula concerns the description of the {\it inverse} of a binomial coefficient, by means of a hypergeometric series.
By these we mean precisely the series
$${}_{2}F_{1}(a,b,c;x):=\sum_{n=0}^\infty \frac{(a)_n\, (b)_n}{(c)_n \, n!} \, x^n=\int_{0}^1 dt \, t^{b-1} (1-t)^{c-b-1} (1-t x)^{-a}$$
We have the following direct application:
\begin{lemma}\label{hypergeom}
The inverse of the binomial coefficient ${n+a\choose a}$ for $n,a\geq 0$  is given by:
$${n+a\choose a}^{-1}=a\times {}_{2}F_{1}(1,1,a+1;x)\big\vert_{x^n}$$
\end{lemma}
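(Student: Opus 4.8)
The plan is to read the identity off the Euler-type integral representation of ${}_2F_1$ recorded just above the statement, which is both the shortest route and the one that naturally produces the prefactor $a$. First I would specialize the three hypergeometric arguments to $1,1,a+1$. Since the first argument is $1$, the exponent on $(1-tx)$ becomes $-1$; and with $b=1$, $c=a+1$ the remaining factors $t^{b-1}(1-t)^{c-b-1}$ collapse to $(1-t)^{a-1}$. Thus
$${}_2F_1(1,1,a+1;x)=\int_0^1 dt\,(1-t)^{a-1}\,\frac{1}{1-tx}\,.$$

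Next I would extract the coefficient of $x^n$. Expanding $\frac{1}{1-tx}=\sum_{m\ge 0}t^m x^m$ and interchanging the (single-coefficient) extraction with the $t$-integration gives
$${}_2F_1(1,1,a+1;x)\big\vert_{x^n}=\int_0^1 dt\,(1-t)^{a-1}\,t^n\,.$$
This is an Euler Beta integral $B(n+1,a)$, whose value for integers $n\ge 0$, $a\ge 1$ is $\frac{n!\,(a-1)!}{(n+a)!}$. Multiplying by $a$ promotes $(a-1)!$ to $a!$, yielding $a\,B(n+1,a)=\frac{a!\,n!}{(n+a)!}={n+a\choose a}^{-1}$, which is exactly the claim.

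The computation itself is routine, so the only points needing care are bookkeeping. I must justify the Beta evaluation $\int_0^1 t^n(1-t)^{a-1}\,dt=\frac{n!\,(a-1)!}{(n+a)!}$, which for integer exponents follows by repeated integration by parts (peeling off the powers of $t$) or from $B(p,q)=\Gamma(p)\Gamma(q)/\Gamma(p+q)$. I should also account for the prefactor $a$: for these parameters it is exactly the Euler normalization $\Gamma(c)/\bigl(\Gamma(b)\Gamma(c-b)\bigr)=\Gamma(a+1)/\bigl(\Gamma(1)\Gamma(a)\bigr)=a$, so multiplying the bare integral by $a$ returns the properly normalized hypergeometric value, whose $x^n$-coefficient $n!/(a+1)_n$ already equals ${n+a\choose a}^{-1}$; this is what makes the factor $a$ necessary and harmless. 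Finally I would note that the statement is understood for $a\ge 1$: the degenerate value $a=0$ must be excluded, since there the prefactor annihilates the right-hand side whereas ${n\choose 0}^{-1}=1$.
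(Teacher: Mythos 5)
Your proof is correct and follows exactly the route the paper intends: the lemma is presented there as a ``direct application'' of the (unnormalized) Euler integral recorded just above it, and the paper's very next computation uses precisely your integral $a\int_0^1 (1-t)^{a-1}(1-tx)^{-1}\,dt$, whose $x^n$-coefficient is the Beta integral $a\,B(n+1,a)=\frac{a!\,n!}{(n+a)!}={n+a\choose a}^{-1}$. Your two side remarks are also on point and sharpen the paper's statement: the displayed definition there omits the Euler normalization $\Gamma(c)/\bigl(\Gamma(b)\Gamma(c-b)\bigr)$, which for $(b,c)=(1,a+1)$ equals $a$ and is exactly what the prefactor restores, and the degenerate case $a=0$ must indeed be excluded despite the stated range $n,a\geq 0$.
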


Using the explicit integral formulation above, we may also write:
\begin{eqnarray*}
\sum_{n=0}^\infty {n+a\choose a}^{-1}x^n&=&a\int_{0}^1 dt \, \frac{(1-t)^{a-1}}{ 1-t x}=a\int_0^1 dt\, \frac{t^{a-1}}{1-x+t x}\\
&=&a\sum_{m=0}^\infty (-1)^m \frac{x^m}{(1-x)^{m+1}}\int_0^1dt\, t^{m+a-1}=\sum_{m=0}^\infty \frac{(-1)^m\, a}{m+a} \frac{x^m}{(1-x)^{m+1}}
\end{eqnarray*}
Extracting the coefficient of $x^n$ we get the following formula:
\begin{lemma}
The inverse of the binomial coefficient ${n+a\choose a}$ for $n,a\geq 0$  is given by:
$$ {n+a\choose a}^{-1}=\sum_{m=0}^n \frac{(-1)^m\, a}{m+a} \, {n\choose m} $$
\end{lemma}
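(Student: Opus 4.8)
The plan is to obtain the identity by reading off the coefficient of $x^n$ on both sides of the generating function identity
$$\sum_{n=0}^\infty {n+a\choose a}^{-1}x^n=\sum_{m=0}^\infty \frac{(-1)^m\, a}{m+a}\, \frac{x^m}{(1-x)^{m+1}}$$
that has just been derived. The left-hand side contributes exactly ${n+a\choose a}^{-1}$ as its $x^n$-coefficient, so all the work is on the right.

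First I would extract coefficients term by term in $m$. For fixed $m$ the coefficient of $x^n$ in $\frac{x^m}{(1-x)^{m+1}}$ equals the coefficient of $x^{n-m}$ in $\frac{1}{(1-x)^{m+1}}$; using the expansion $\frac{1}{(1-x)^{m+1}}=\sum_{j\geq 0}{m+j\choose j}x^j$ recalled in the proof of Lemma~\ref{fourwaylem}, this equals ${m+(n-m)\choose n-m}={n\choose m}$ when $0\le m\le n$ and vanishes otherwise. Since the $m$-th summand starts at order $x^m$, only the terms with $m\le n$ reach degree $n$, so the sum truncates at $m=n$ and we are left with $\sum_{m=0}^n \frac{(-1)^m\, a}{m+a}{n\choose m}$, which is the asserted formula. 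I expect no genuine obstacle here: the substance was already in deriving the generating function via the beta integral of Lemma~\ref{hypergeom}, and the only point deserving a remark is that the interchange of the $m$-summation with coefficient extraction is legitimate because only finitely many summands contribute to any fixed power of $x$.

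As a self-contained cross-check I would also verify the closed form directly, bypassing the generating function. Writing $\frac{a}{m+a}=a\int_0^1 t^{m+a-1}\,dt$ and summing the binomial theorem gives
$$\sum_{m=0}^n \frac{(-1)^m\, a}{m+a}{n\choose m}=a\int_0^1 t^{a-1}(1-t)^n\,dt=a\,B(a,n+1)=\frac{\Gamma(a+1)\Gamma(n+1)}{\Gamma(a+n+1)}={n+a\choose a}^{-1},$$
which reproduces the claim and doubles as a consistency check on the preceding derivation.
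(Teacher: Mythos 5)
Your main argument coincides with the paper's proof: the paper derives precisely the generating-function identity $\sum_{n\geq 0}{n+a\choose a}^{-1}x^n=\sum_{m\geq 0}\frac{(-1)^m a}{m+a}\frac{x^m}{(1-x)^{m+1}}$ from the integral representation behind Lemma~\ref{hypergeom} and then simply says ``extracting the coefficient of $x^n$''; you have spelled out that extraction correctly (the coefficient of $x^n$ in $x^m/(1-x)^{m+1}$ is ${n\choose m}$ for $m\leq n$ and zero otherwise, which truncates the sum, and the interchange of summation with coefficient extraction is harmless since each power of $x$ receives only finitely many contributions). Your second, self-contained verification is a genuinely different and in fact shorter route: writing $\frac{a}{m+a}=a\int_0^1 t^{m+a-1}\,dt$ and summing by the binomial theorem reduces the claim to the beta integral $a\int_0^1 t^{a-1}(1-t)^n\,dt=\frac{\Gamma(a+1)\Gamma(n+1)}{\Gamma(a+n+1)}={n+a\choose a}^{-1}$, with no generating functions at all. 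In effect this performs the paper's manipulations in the opposite order (sum in $m$ first, integrate in $t$ last, never expanding in $x$), and it buys a one-line direct proof of the stated identity; the paper's route, by contrast, also yields the generating function $\sum_n {n+a\choose a}^{-1}x^n$ as a byproduct, which is the natural companion to Lemma~\ref{hypergeom}. Both arguments are correct.
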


\subsubsection{Infinite matrices and their truncations}\label{truncsec}

In this paper, we will also deal with some finite truncations of infinite matrices, whose manipulation is greatly simplified
by use of generating functions. For an infinite matrix $A$ with entries $A_{i,j}$, $i,j\in \Z_{\geq 0}$, we introduce the
generating function
$$f_A(z,w):=\sum_{i,j\geq 0} z^i w^j\, A_{i,j} $$
Then we have the following result.
\begin{lemma}\label{convolem}
For any two infinite matrices $A,B$ with generating functions $f_A,f_B$, assuming the product $AB$ makes sense,
we have the following product formula:
$$ f_{AB}(z,w)=f_A\star f_B(z,w)=f_A(z,t^{-1})f_B(t,w)\vert_{t^0}=\oint \frac{dt}{2i\pi t} f_A(z,t^{-1})f_B(t,w)$$
where $\star$ stands for the convolution product of two-variable generating functions, and the contour integral is 
for instance over the unit circle.
\end{lemma}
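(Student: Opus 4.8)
The plan is to prove the identity by unwinding all three expressions into explicit series in the formal variables and matching coefficients. First I would start from the definition of the generating function of the product: since the $(i,j)$ entry of $AB$ is $(AB)_{i,j}=\sum_{k\geq 0} A_{i,k}B_{k,j}$, we have
\[
f_{AB}(z,w)=\sum_{i,j\geq 0} z^i w^j \sum_{k\geq 0} A_{i,k}B_{k,j}.
\]
On the other side, I would expand the two factors of the convolution directly from the definition of $f_A$ and $f_B$, namely $f_A(z,t^{-1})=\sum_{i,k\geq 0} z^i t^{-k} A_{i,k}$ and $f_B(t,w)=\sum_{k',j\geq 0} t^{k'} w^j B_{k',j}$, and multiply them to obtain
\[
f_A(z,t^{-1})\,f_B(t,w)=\sum_{i,j\geq 0}\sum_{k,k'\geq 0} z^i w^j\, t^{k'-k}\, A_{i,k}B_{k',j}.
\]

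The key step is the extraction of the coefficient of $t^0$. In the double series above the power of $t$ is $k'-k$, so selecting $t^0$ enforces $k'=k$, which contracts the two internal indices and collapses the expression to exactly $\sum_{i,j,k\geq 0} z^i w^j A_{i,k}B_{k,j}=f_{AB}(z,w)$. This establishes the first equality $f_{AB}=f_A\star f_B$. The third expression is then just the integral representation of the constant-term functional: writing $g(t):=f_A(z,t^{-1})f_B(t,w)$ as a Laurent series in $t$, its coefficient of $t^0$ is recovered by the Cauchy formula $g(t)\vert_{t^0}=\oint \frac{dt}{2i\pi t}\, g(t)$, with the contour encircling the origin once, since the residue of $g(t)/t$ at $t=0$ picks out precisely the $t^0$ term.

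The only genuine subtlety — and what I expect to be the main point to handle with care — is convergence, hidden in the phrase ``assuming the product $AB$ makes sense''. The factor $f_A(z,t^{-1})$ carries only nonpositive powers of $t$ while $f_B(t,w)$ carries only nonnegative powers, so their product is an honest two-sided Laurent series in $t$; for the contour integral to represent the constant term one must ensure this Laurent series converges on the integration contour, which amounts to convergence of the defining sums of $f_A$ and $f_B$ in a common annulus and to the legitimacy of interchanging the order of summation when forming the Cauchy product. In a formal reading these interchanges are automatic as long as, for fixed $i,j$, the contraction $\sum_k A_{i,k}B_{k,j}$ involves only finitely many nonzero terms (or is otherwise summable), which is exactly the hypothesis that $AB$ is well defined; in an analytic reading one instead chooses the radius of the $t$-contour inside the common annulus of convergence. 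Either way no further work is required, and the three expressions agree.
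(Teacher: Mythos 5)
Your proof is correct and follows essentially the same route as the paper: expand $f_A(z,t^{-1})f_B(t,w)$ as a double series and note that extracting the coefficient of $t^0$ forces the two internal indices to coincide, contracting them into the matrix product $\sum_k A_{i,k}B_{k,j}$. The paper's proof is exactly this one-line coefficient match; your additional remarks on the Cauchy integral representation and on convergence/summability merely make explicit what the paper leaves implicit in the hypothesis that $AB$ makes sense.
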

\begin{proof}
We write:
$$ f_{AB}(z,w)=\sum_{i,j\geq 0} z^i w^j \sum_{r\geq 0} A_{i,r}B_{r,j}=f_A(z,t^{-1})f_B(t,w)\vert_{t^0}$$
\end{proof}

Another important property of infinite matrices is that the $LU$ factorization of an infinite matrix $A$ descends to its
finite truncations $A^{(n)}=(A_{i,j})_{i,j\in [0,n]}$. More precisely, we have the following:
\begin{lemma}\label{truncLUlem}
Let $A$ be an infinite matrix. Assume there exist infinite matrices $L,U$ respectively uni-lower- and upper-triangular
such that $A=LU$. Then for all $n\geq 0$ we have the following $LU$ factorization:
\beq\label{truncA} A^{(n)}=L^{(n)}U^{(n)} \eeq
and moreover if $L$ is invertible, its inverse also truncates:
\beq\label{truncLU} (L^{(n)})^{-1}=(L^{-1})^{(n)} \eeq
\end{lemma}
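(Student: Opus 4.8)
The plan is to prove the two claims in Lemma~\ref{truncLUlem} by directly exploiting the triangular structure of $L$ and $U$. The key observation is that triangularity makes matrix multiplication ``local'' in the following sense: when computing a product entry $(LU)_{i,j}=\sum_k L_{i,k}U_{k,j}$, the lower-triangularity of $L$ forces $k\le i$ and the upper-triangularity of $U$ forces $k\ge \dots$ well, $U_{k,j}=0$ unless $k\le j$, so the summation index $k$ is automatically confined to $k\le\min(i,j)$. In particular, for $i,j\in[0,n]$ the sum only ever sees entries $L_{i,k}$ and $U_{k,j}$ with $k\le n$, i.e. entries that already lie inside the truncated matrices $L^{(n)}$ and $U^{(n)}$.

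First I would establish \eqref{truncA}. Fix $i,j\in[0,n]$ and write out $(A^{(n)})_{i,j}=A_{i,j}=(LU)_{i,j}=\sum_{k\ge 0}L_{i,k}U_{k,j}$. Since $L_{i,k}=0$ for $k>i$ and $i\le n$, every nonvanishing term has $k\le i\le n$, so the infinite sum truncates to $\sum_{k=0}^n L_{i,k}U_{k,j}=(L^{(n)}U^{(n)})_{i,j}$. This gives \eqref{truncA}, and I would note in passing that $L^{(n)}$ is still uni-lower-triangular and $U^{(n)}$ is still upper-triangular, so this is genuinely an $LU$ factorization of $A^{(n)}$.

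Next I would prove \eqref{truncLU}. The cleanest route is to observe that $L^{-1}$ is itself uni-lower-triangular whenever $L$ is (the inverse of a unipotent lower-triangular matrix is unipotent lower-triangular, which for an infinite matrix is well-defined entry-by-entry precisely because of the locality above — each entry $(L^{-1})_{i,j}$ is determined by finitely many entries $L_{i',j'}$ with indices $\le i$). Applying the very argument just used for \eqref{truncA} to the product $L\cdot L^{-1}=I$, truncated to $[0,n]$, yields $L^{(n)}(L^{-1})^{(n)}=I^{(n)}$. Since $L^{(n)}$ is invertible (unit diagonal, triangular), multiplying on the left by $(L^{(n)})^{-1}$ gives $(L^{-1})^{(n)}=(L^{(n)})^{-1}$, which is \eqref{truncLU}.

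The only genuine subtlety — and the step I would flag as the main obstacle — is making sure the infinite products and infinite inverse are well-defined, so that writing $A=LU$ and $L\cdot L^{-1}=I$ is legitimate before one starts truncating. For a general infinite matrix, $\sum_k L_{i,k}U_{k,j}$ need not converge, but here the triangularity guarantees each such sum is finite (only finitely many nonzero terms), so no analytic convergence issue arises; the same finiteness is what lets $L^{-1}$ be constructed recursively by forward substitution. I would therefore open the proof by recording this finiteness remark, since it is what licenses every index-truncation step that follows, and then the two displayed identities drop out of the locality argument above with essentially no further computation.
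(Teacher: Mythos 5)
Your proof is correct and follows essentially the same route as the paper's: both rest on the observation that triangularity confines the summation index to $[0,n]$ whenever $i,j\le n$, applied first to $A=LU$ and then to a product of $L$ with its inverse (you truncate $LL^{-1}=I$ where the paper truncates $L^{-1}L=I$, an immaterial difference since $L^{(n)}$ is a finite invertible matrix). Your explicit opening remarks --- that triangularity makes every entry of these infinite products a finite sum, and that $L^{-1}$ is again uni-lower-triangular via forward substitution --- merely make precise points the paper leaves implicit.
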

\begin{proof}
Assuming $A=LU$, we compute:
$$A^{(n)}=(\sum_{r\geq 0} L_{i,r}U_{r,j})_{i,j\in [0,n]}=(\sum_{r=j}^i L_{i,r}U_{r,j})_{i,j\in [0,n]}
=(\sum_{r=j}^i L_{i,r}^{(n)}U_{r,j}^{(n)})_{i,j\in [0,n]}$$
where the truncation appears {\it de facto} from triangularity and \eqref{truncA} follows.
Assuming $L^{-1}$ exists, it is also uni-triangular, and we write similarly:
\begin{eqnarray*}
I^{(n)}&=&(L^{-1}L)^{(n)}=(\sum_{r\geq 0} (L^{-1})_{i,r}L_{r,j})_{i,j\in [0,n]}=(\sum_{r\geq 0}^i (L^{-1})_{i,r}L_{r,j})_{i,j\in [0,n]}\\
&=&
(\sum_{r\geq 0}^i ((L^{-1})^{(n)})_{i,r}(L^{(n)})_{r,j})_{i,j\in [0,n]}
\end{eqnarray*}
which implies \eqref{truncLU}
\end{proof}

\section{Aztec Diamond domino tilings}\label{aztecsec}

\begin{figure}
\centering
\includegraphics[width=15.cm]{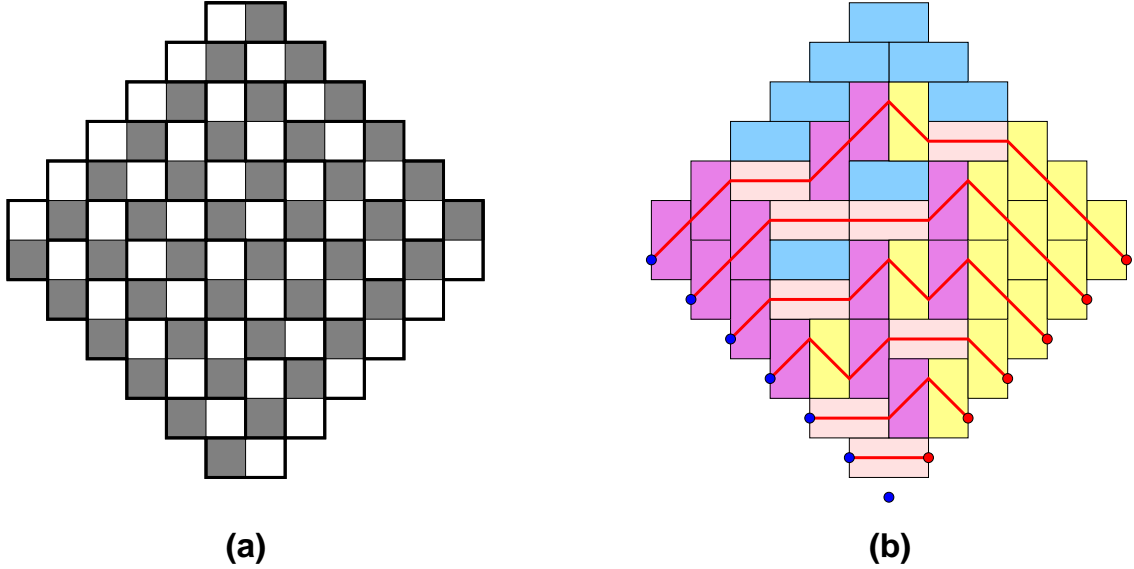}
\caption{\small The Aztec diamond domino tiling problem and its path formulation. 
(a) The face bi-colored Aztec diamond is represented in thick black line, 
together with a sample domino tiling. The size $n$ is the number
of black squares on it SW boundary ($n=6$ here). (b) The equivalent non-intersecting large Schroeder path configuration is
represented in red. We have singled out the four types of dominos with four different colors.}
\label{fig:aztec}
\end{figure}

\subsection{Path formulation}

The Aztec diamond domino tiling problem can be stated as follows: tile the size $n$ domain $D$ of the square lattice $\Z^2$
depicted in Fig.~\ref{fig:aztec} by means of dominos made of two elementary squares glued along a common edge.
After bi-coloring the square faces of $D$ (in black and white), so that South-West (SW) boundary squares are colored
in black, we see that there are $4$ distinct bi-colored tiles compatible with the bi-coloration of the faces of $D$. 

Domino tiling configurations of $D$ are in bijection with non-intersecting paths. Simply associate a portion of path drawn on the 
four possible domino tiles:
\begin{equation}\label{dominos} \raisebox{-1.cm}{\hbox{\epsfxsize=14.cm \epsfbox{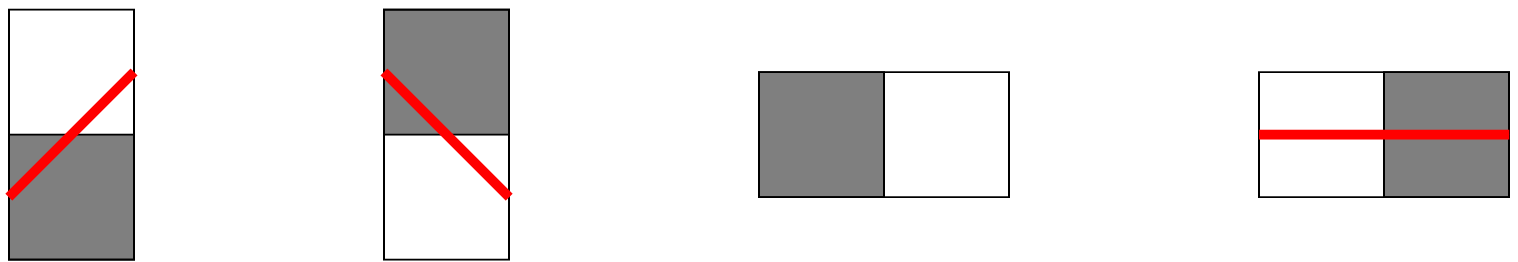}}} 
\end{equation}
These paths join vertices at the middle of edges of the original square lattice, and are non-intersecting by definition. Note that 
some regions may not be visited by paths, due to the fact that the third domino above is not crossed by any path.
For the Aztec diamond of size $n$, the domino tiling configurations are equivalent to those of $n+1$ NILPs with steps $(1,1)$, $(1,-1)$ and $(2,0)$ which we call respectively up, down and horizontal,
corresponding to the 1st, 2nd and 4th dominos above.
Moreover $n$ of the paths start at the middle of the West edge of the SW boundary black square faces introduced above.
We may choose coordinates in which these are $(-i,i)$, $i=1,2,...,n$. These paths end on the SE boundary white square faces
at points with coordinates $(j,j)$, $j=1,2,...,n$. In addition, we consider the origin to be both the starting and end point of a trivial path
of length $0$. The non-intersection constraint then forces any path from $(-1,1)$ to $(1,1)$ to have either a horizontal step
or a succession up-down, but prevents the unwanted down-up succession. 
Such paths are usually referred to as {\it Large Schroeder Paths}. 
If we assign a weight of $1$ to each edge of the lattice, then the partition function of this
system is equal to the number $2^{n(n+1)/2}$ of domino tilings of the Aztec diamond of size $n$. We now
briefly review this result.

\subsection{Exact enumeration}

We now review the calculation, in the paths formulation, of the number of domino tilings of the Aztec diamond. First, 
if we assign a weight of $1$ to each edge of the lattice, then the partition function for all paths 
from $(-i,i)$ to $(j,j)$ is readily calculated to be 
\beq
	Z_{(-i,i)\to(j,j)}=A_{i,j}:= \sum_{p=0}^{\text{min}(i,j)} \binom{i+j-p}{p,i-p,j-p}\ . \label{eq:Aztec-weights}
\eeq
where in the summation $p$ denotes the number of horizontal steps, and $i-p,j-p$ the numbers of up and down steps, respectively.
Then by Lemma~\ref{lemma:LGV}, the partition function for this system is $Z=\text{det}(A)$. This determinant can 
be computed with the help of the following Lemma. 

\begin{lemma}
\label{lemma:aztec-LU}
The matrix $A$ admits an LU decomposition $A= LU$ with
\beq
	L_{i,j}= \binom{i}{j}
\eeq
and
\beq
	U_{i,j}= 2^i \binom{j}{i}\ .
\eeq
\end{lemma}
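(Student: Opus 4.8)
The plan is to verify directly that the proposed $L$ and $U$ multiply to $A$. Since $L_{i,j}=\binom{i}{j}$ vanishes for $j>i$ it is uni-lower-triangular, and since $U_{i,j}=2^i\binom{j}{i}$ vanishes for $j<i$ it is upper-triangular, so checking $LU=A$ is exactly what is needed. Because of the triangularity the product $(LU)_{i,j}=\sum_k L_{i,k}U_{k,j}$ collapses to the range $0\le k\le\min(i,j)$, and the whole claim reduces to the single combinatorial identity
\[
A_{i,j}=\sum_{k=0}^{\min(i,j)}\binom{i}{k}\,2^{k}\,\binom{j}{k}.
\]

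First I would rewrite the left-hand side as a single coefficient-extraction. The key preliminary observation is that the multinomial in \eqref{eq:Aztec-weights} factors into two ordinary binomials,
\[
\binom{i+j-p}{p,i-p,j-p}=\binom{j}{p}\binom{i+j-p}{i-p},
\]
which is an immediate rearrangement of factorials. Then, using the fourth representation of Lemma~\ref{fourwaylem} in the form $\binom{i+j-p}{i-p}=\frac{1}{(1-x)^{j+1}}\big\vert_{x^{i-p}}$, I would absorb the factor $x^{p}$ by the shift rule (multiplying a series by $x^{p}$ turns the coefficient of $x^{i-p}$ into that of $x^{i}$) and sum the binomial series $\sum_{p}\binom{j}{p}x^{p}=(1+x)^{j}$ to obtain the compact form
\[
A_{i,j}=\frac{(1+x)^{j}}{(1-x)^{j+1}}\bigg\vert_{x^{i}}.
\]

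Next I would compute the right-hand side of the target identity in the same style. Writing $\binom{i}{k}=\frac{1}{(1-x)^{k+1}}\big\vert_{x^{i-k}}$ (the third representation of Lemma~\ref{fourwaylem}), applying the same shift by $x^{k}$, and summing $\sum_{k}\binom{j}{k}\big(\tfrac{2x}{1-x}\big)^{k}=\big(1+\tfrac{2x}{1-x}\big)^{j}=\frac{(1+x)^{j}}{(1-x)^{j}}$ gives
\[
\sum_{k}\binom{i}{k}2^{k}\binom{j}{k}=\frac{1}{1-x}\cdot\frac{(1+x)^{j}}{(1-x)^{j}}\bigg\vert_{x^{i}}=\frac{(1+x)^{j}}{(1-x)^{j+1}}\bigg\vert_{x^{i}}.
\]
Since both sides reduce to the coefficient of $x^{i}$ in the same rational function, they agree, which establishes the identity and hence $A=LU$.

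I expect the only genuine subtlety to be the bookkeeping with the coefficient-extraction shifts: one must consistently apply the rule that multiplication by $x^{k}$ converts an extraction at $x^{i-k}$ into one at $x^{i}$, and one must note that the summation ranges are automatically enforced by the vanishing of the relevant binomials, so that extending every sum to all nonnegative integers is harmless and no convergence issue arises (each step manipulates a fixed coefficient of a formal power series). The conceptual crux — and the step I would single out as the main obstacle — is spotting the factorization of the multinomial into $\binom{j}{p}\binom{i+j-p}{i-p}$ and recognizing that both expressions funnel into the common generating function $\frac{(1+x)^{j}}{(1-x)^{j+1}}$; once that is in hand the rest is routine. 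As an alternative, the identity $A_{i,j}=\sum_{k}\binom{i}{k}\binom{j}{k}2^{k}$ could be cited as a Saalschütz/Vandermonde-type specialization, but the generating-function derivation above is fully self-contained using only Lemma~\ref{fourwaylem}.
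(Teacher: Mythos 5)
Your proof is correct, but it takes a genuinely different route from the paper's. You verify the factorization entrywise: triangularity collapses $(LU)_{i,j}$ to the classical Delannoy-number identity $A_{i,j}=\sum_{k}\binom{i}{k}2^{k}\binom{j}{k}$, which you establish with one-variable coefficient extraction, funneling both sides into $\frac{(1+x)^{j}}{(1-x)^{j+1}}\big\vert_{x^{i}}$ (your multinomial factorization $\binom{i+j-p}{p,i-p,j-p}=\binom{j}{p}\binom{i+j-p}{i-p}$ and the geometric resummation $\sum_k\binom{j}{k}\bigl(\tfrac{2x}{1-x}\bigr)^k=\frac{(1+x)^j}{(1-x)^j}$ both check out). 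The paper instead works at the level of the full infinite matrices: it computes the two-variable generating function $f_A(z,w)=\frac{1}{1-z-w-zw}$, exhibits the factorization as a convolution $f_A=f_L\star f_U$ with $f_L=\frac{1}{1-z-zw}$, $f_U=\frac{1}{1-w-2zw}$ evaluated by a contour integral (residue at $t=z/(1-z)$), and then invokes the truncation lemma (Lemma~\ref{truncLUlem}) to descend to the finite $(n+1)\times(n+1)$ case. Your row-by-row computation is in fact the $w^j$-coefficient of the paper's $f_A$, so the two arguments are cousins; what yours buys is elementariness and self-containment — no infinite-matrix formalism or truncation lemma is needed, since your sum visibly collapses to $k\le\min(i,j)\le n$ and so proves the finite identity directly. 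What the paper's buys is reusability: the same two-variable machinery (Lemmas~\ref{convolem} and~\ref{truncLUlem}) is deployed again for $L^{-1}$ immediately afterward and throughout the later sections, whereas your approach would require a fresh binomial identity each time.
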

Since $L_{i,i}=1$ and $U_{i,i}= 2^i$, an immediate consequence of the Lemma is that
\beq\label{aztecpart}
	\det_{i,j\in[0,n]}(A_{i,j})= \prod_{i=0}^n U_{i,i}= 2^{\frac{n(n+1)}{2}}\ ,
\eeq
which is exactly equal to the number of domino tilings of the Aztec diamond of size $n$.

\begin{proof}
To prove Lemma~\ref{lemma:aztec-LU} we use the infinite matrix generating function methods 
of Section \ref{truncsec}. 
More precisely, we find a simple $LU$ decomposition of the infinite matrix $A=(A_{i,j})_{i,j\in \Z_{\geq 0}}$,
with generating function:
\beq\label{Agen}
	f_A(z,w):= \sum_{i,j=0}^{\infty}A_{i,j}z^i w^j= \frac{1}{1-z-w-zw}\ .
\eeq
This is proved by working backwards from the generating function. We have
\beqa
	 \frac{1}{1-z-w-zw} &=& \sum_{m=0}^{\infty}(z+w+wz)^m \nnb \\
	&=& \sum_{m=0}^{\infty}\underset{ p_1+p_2+p_3=m}{\sum_{p_1,p_2,p_3}} \binom{m}{p_1,\ p_2,\ p_3} z^{p_1}w^{p_2}(zw)^{p_3} \nnb \\
	&=& \sum_{p_1,p_2,p_3=0}^{\infty}\binom{p_1+p_2+p_3}{p_1,\ p_2,\ p_3}z^{p_1+p_3}w^{p_2+p_3} \ ,\nnb
\eeqa
by the standard trinomial identity. To extract
the coefficient of $z^i w^j$ in this we set $p_1= i-p_3$, $p_2= j-p_3$ to find:
\beq
	 \frac{1}{1-z-w-zw} \Big|_{z^i w^j}= \sum_{p_3=0}^{\text{min}(i,j)}\binom{i+j-p_3}{i-p_3,\ j-p_3,\ p_3}\ ,
\eeq
which coincides with our original expression for the weights $A_{i,j}$.

A particular $LU$ decomposition of $A$ is obtained by checking that:
\beq
	f_A(z,w)= (f_L\star f_U)(z,w)
\eeq
where
\beq
	f_L(z,w) = \frac{1}{1-z-zw} \quad {\rm and}\quad 
	f_U(z,w) = \frac{1}{1-w-2zw}\ .
\eeq
Indeed, we have:
$$(f_L\star f_U)(z,w)=\oint \frac{dt}{2i\pi} \frac{1}{t(1-z)-z}\frac{1}{1-w-2 t w}=\frac{1}{1-z}\frac{1}{1-w-2 \frac{z}{1-z} w}
=\frac{1}{1-z-w-zw}$$
where we noted that the contour integral picks up the residue at $t=z/(1-z)$.
Comparing this with \eqref{Agen} implies that the matrix $A$ factorizes as $A=LU$ where the $(i,j)$ matrix elements of $L$ and $U$ are given by
the coefficient of $z^i w^j$ in the expansion of the generating functions $f_L(z,w)$ and $f_U(z,w)$ about $z=w=0$. 
Explicitly, we have
\beq
	f_L(z,w) = \sum_{i=0}^{\infty} (1+w)^i z^i = \sum_{i=0}^{\infty}\sum_{j=0}^i \binom{i}{j}z^i w^j\ ,
\eeq
from which we find that $L$ is lower uni-triangular, with entries $L_{i,j}= \binom{i}{j}$. Similarly, we have: 
\beq
f_U(z,w)=\sum_{j=0}^{\infty} (1+2z)^j w^j=\sum_{j=0}^\infty\sum_{i=0}^j 2^{i}{j\choose i} z^i w^j
\eeq 
from which we get that $U$ is upper triangular with entries $U_{i,j}=2^{i}{j\choose i}$.
As noted in Section \ref{truncsec}, this $LU$ decomposition holds for the finite matrices obtained by 
truncation to indices $i,j\in [0,n]$,
and Lemma~\ref{lemma:aztec-LU} follows (Note that we dropped the superscript $(n)$ in $A,L,U$ for simplicity).
\end{proof}

For later use we note that the infinite matrix $L$ is invertible, and the inverse matrix $L^{-1}$ has matrix elements
\beq
	(L^{-1})_{i,j}=  (-1)^{i+j}\binom{i}{j} \label{eq:L-inv-aztec}
\eeq
and generating function 
\beq
	f_{L^{-1}}(z,w)= \frac{1}{1+z-zw}\ .
\eeq
Indeed, we easily compute:
$$(f_{L^{-1}}\star f_L)(z,w)=\oint \frac{dt}{2i\pi} \frac{1}{t(1+z)-z}\frac{1}{1-t(1+w)}
=\frac{1}{1+z}\frac{1}{1-\frac{z(1+w)}{1+z}}=\frac{1}{1-z w}=f_{\mathbb I}(z,w)$$
where the contour integral has picked the residue at $t=z/(1+z)$.
By  Lemma \ref{truncLUlem}, eq.\eqref{eq:L-inv-aztec} also holds for the finite truncation of $L$ to indices
$i,j\in [0,n]$.

As explained in Section \ref{tansec}, to set up the tangent method for the Aztec diamond we extend the original 
diamond shaped domain by appending a
rectangular region with corners at $(n,n)$, $(0,2n)$, $(k-n,k+n)$ and $(k,k)$, for some integer $k>n$. We again consider
$n+1$ paths on this domain with the same starting points $(-i,i)$, $i\in[0,n]$ as before, but now we take the end points to 
be $(j,j)$, $j\in[0,n-1]$, and $(k,k)$. Thus, in the non-intersecting configuration the top-most path will start at $(-n,n)$ and end
at $(k,k)$ at the far upper right corner of the extended domain. 

Let us denote the partition function for these $n+1$ paths on the extended domain by $N(n,k)$. To apply the tangent
method we expand $N(n,k)$ in terms of position $(\ell,2n-\ell)$, $\ell\in[0,n]$, of the point where the top-most path exits 
from the original domain $D$ into the extended domain. We have
\beq\label{Nfunction}
	N(n,k)= \sum_{\ell=0}^{n}Z(n,\ell)Y(\ell,k)\ ,
\eeq
where $Z(n,\ell)$ is the partition function for $n+1$ paths on the original domain with starting points at $(-i,i)$, $i\in[0,n]$
and ending points at $(-j,j)$, $j\in[0,n-1]$ and $(\ell,2n-\ell)$, and $Y(\ell,k)$ is the partition function for the single path
exiting $D$ at $(\ell,2n-\ell)$, and ending at $(k,k)$. In computing $Y(\ell,k)$ we should only include contributions from paths 
which start with a 
step of the form $(1,1)$, or $(2,0)$ (otherwise the path would not leave $D$). It is therefore the sum of two contributions, one
for paths $(\ell+1,2n-\ell+1)\to (k,k)$ and one for paths $(\ell+2,2n-\ell)\to (k,k)$. By translation invariance the paths
from $(a,b)\to (c,d)$ contribute the same as those from $(0,0)\to (c-a,d-b)$, with partition
function $A_{i,j}$ \eqref{eq:Aztec-weights}, where $i$ and $j$ are such that $i+j=c-a$ and $j-i=d-b$, hence $i=\frac{b+c-a-d}{2}$
and $j=\frac{c+d-a-b}{2}$.
As a consequence,
$Y(\ell,k)$ is given by:
\beq\label{Yfunction}
	Y(\ell,k)= A_{n-\ell,k-n-1}+A_{n-\ell-1,k-n-1}
\eeq

The partition function $Z(n,\ell)$ is given by Lemma~\ref{lemma:LGV} as the determinant of a certain $(n+1)\times (n+1)$
matrix $\tilde A$ which is constructed as follows. First note that the weight for all paths from $(-i,i)$ to $(\ell,2n-\ell)$ is
by translational invariance the same as that from $(0,0)\to (\ell+i,2n-\ell-i)$, hence:
\beq
	b^{(\ell)}_i :=Z_{(-i,i)\to (\ell,2n-\ell)}=A_{i+\ell-n,n}= \sum_{p=0}^{\text{min}(\ell+i-n,n)} \binom{\ell+i-p}{p,\ell+i-n-p,n-p}\ .
\eeq
Then $\tilde A$ is equal to the $(n+1)\times(n+1)$ matrix which is identical to $A$ 
(the finite version of $A$ with $n+1$ rows and columns) except for its last column, which is given 
by the weights $b^{(\ell)}_i$, i.e. we have
\beq
	{\tilde A}_{i,j}= \begin{cases}
	A_{i,j} &, j\in[0,n-1] \\
	b^{(\ell)}_i &,\ j=n
\end{cases}\ .
\eeq
Note also that ${\tilde A}|_{\ell=n}= A$. By Lemma~\ref{lemma:LGV}, the partition function for the Aztec diamond with the
last path exiting at $(\ell,2n-\ell)$ instead of ending at $(n,n)$ is equal to $\det_{i,j\in[0,n]}({\tilde A})$. For the application to the
tangent method it is more useful to consider the one point function
\beq
	H(n,\ell):= \frac{\det_{i,j\in[0,n]}({\tilde A})}{\det_{i,j\in[0,n]}(A)}= \frac{Z(n,\ell)}{Z(n,n)}\ .
\eeq
We have the following result.

\begin{thm}
\label{thm:aztec-1pt}
The one point function $H(n,\ell)$, $0\leq \ell\leq n$, for the Aztec diamond tiling problem is 
\beq\label{Hfunction}
	H(n,\ell)= 
	\frac{1}{2^n} \sum_{p=0}^{\ell}\binom{n}{p}  \ .
\eeq
\end{thm}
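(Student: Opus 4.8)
The plan is to use the $LU$ framework of Section~\ref{sec:paths-and-LU} to collapse $H(n,\ell)$ to a single matrix element, and then to turn the resulting alternating binomial sum into a positive partial sum of binomials by a short generating-function computation. Since $\tilde A$ coincides with $A$ except in its last column $b^{(\ell)}_i$, the formulas recalled in Section~\ref{sec:paths-and-LU} specialize (with the exit point $\tilde w = (\ell,2n-\ell)$) to $H(n,\ell) = U^{(\ell)}_{n,n}/U_{n,n}$ and $U^{(\ell)}_{n,n} = \sum_{k=0}^n (L^{-1})_{n,k}\, b^{(\ell)}_k$. Feeding in the data already computed, namely $U_{n,n} = 2^n$ from Lemma~\ref{lemma:aztec-LU} and $(L^{-1})_{n,k} = (-1)^{n+k}\binom{n}{k}$ from \eqref{eq:L-inv-aztec}, together with $b^{(\ell)}_k = A_{k+\ell-n,n}$, reduces the theorem to the identity $\sum_{k=0}^n (-1)^{n+k}\binom{n}{k}\, A_{k+\ell-n,n} = \sum_{p=0}^\ell \binom{n}{p}$.

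My first step toward that identity is to encode the last column of $A$ by a generating function. Extracting the coefficient of $w^n$ in $f_A(z,w) = 1/(1-z-w-zw)$ from \eqref{Agen}, rewritten as $\frac{1}{1-z}\big(1 - w\tfrac{1+z}{1-z}\big)^{-1}$, gives the column series $\sum_{m\ge 0} A_{m,n}\, z^m = \frac{(1+z)^n}{(1-z)^{n+1}}$. Because $b^{(\ell)}_k = A_{k+\ell-n,n}$ is this column shifted by $n-\ell$ (and vanishes for $k<n-\ell$), its generating function is $B(z) := \sum_{k\ge 0} b^{(\ell)}_k\, z^k = z^{\,n-\ell}\frac{(1+z)^n}{(1-z)^{n+1}}$, which is a genuine power series precisely because $\ell\le n$.

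Next I would evaluate $S := \sum_{k=0}^n (-1)^{n+k}\binom{n}{k}\, b^{(\ell)}_k$ by the constant-term pairing. The row of $L^{-1}$ has generating function $\sum_{k} (-1)^{n+k}\binom{n}{k}\, z^k = (z-1)^n$, so $S = \oint \frac{dz}{2i\pi z}\, (z^{-1}-1)^n\, B(z)$ with $(z^{-1}-1)^n = (1-z)^n/z^n$; since the sum is finite there is no convergence issue. The factors $(1-z)^n$ and $(1-z)^{n+1}$ nearly cancel and the powers of $z$ collect, leaving $S = \oint \frac{dz}{2i\pi}\, \frac{(1+z)^n}{z^{\ell+1}(1-z)} = \frac{(1+z)^n}{1-z}\Big|_{z^\ell}$. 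Finally, applying Lemma~\ref{prodlem} to the product $(1+z)^n\cdot \frac{1}{1-z}$ yields $\frac{(1+z)^n}{1-z}\big|_{z^\ell} = \sum_{j=0}^\ell \binom{n}{j}$, so $S = \sum_{p=0}^\ell \binom{n}{p}$; dividing by $U_{n,n} = 2^n$ gives the claimed formula.

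I do not expect a serious obstacle: the whole argument is residue/coefficient bookkeeping. The one point requiring care is the contour step, where one must keep the integration around $z=0$ and expand $1/(1-z) = \sum_{j\ge 0} z^j$ (reading off the coefficient of $z^\ell$) rather than touching the pole at $z=1$, and must use $\ell \le n$ to ensure $B(z)$ carries no negative powers of $z$. The real content of the proof is the cancellation of $(1-z)^n$ from the $L^{-1}$ row against $(1-z)^{n+1}$ in $B(z)$, which is exactly what converts the a priori alternating sum into the manifestly positive partial binomial sum.
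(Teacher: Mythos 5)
Your proposal is correct and follows essentially the same route as the paper's proof: both reduce $H(n,\ell)$ via the $LU$ framework to $U^{(\ell)}_{n,n}/2^n$ with $(L^{-1})_{n,k}=(-1)^{n+k}\binom{n}{k}$, and both evaluate the alternating sum by coefficient extraction from $f_A(z,w)=1/(1-z-w-zw)$, arriving at $\frac{(1+z)^n}{1-z}\big\vert_{z^\ell}=\sum_{p=0}^\ell\binom{n}{p}$. The only (cosmetic) difference is the order of the two extractions: you first read off the column series $\sum_m A_{m,n}z^m=(1+z)^n/(1-z)^{n+1}$ and then convolve with $(z-1)^n$ by a contour pairing, whereas the paper convolves in $z$ first (via Lemma~\ref{prodlem}) to get $\frac{(1-z)^n}{1-z-w-zw}\big\vert_{z^\ell w^n}$ and extracts the $w^n$ coefficient afterwards.
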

\begin{proof}
From Section~\ref{sec:paths-and-LU} we have $H(n,\ell)= \frac{{\tilde U}_{n,n}}{U_{n,n}}$, where
${\tilde U}:= L^{-1}{\tilde A}$ and $L^{-1}$ was given in Eq.~\eqref{eq:L-inv-aztec}. In addition, $U_{n,n}= 2^n$,
so we only need to prove that:
\beq
	{\tilde U}_{n,n}=
	\sum_{p=0}^{\ell}\binom{n}{p}  \ . \label{eq:Unnl-aztec}
\eeq
Using generating functions, we may write:
\begin{eqnarray*}
{\tilde U}_{n,n}&=&\sum_{r=0}^n (L^{-1})_{n,r}{\tilde A}_{r,n}=\sum_{r=0}^n (-1)^{n+r}{n\choose r} A_{r+\ell-n,n}
= \sum_{r=0}^n (-1)^{n-r}{n\choose n-r} \frac{1}{1-z-w-zw}\Bigg\vert_{z^{r+\ell-n}w^n} \\
&=&
\sum_{r=0}^n (1-z)^n\big|_{z^{n-r}}\frac{1}{1-z-w-zw}\Bigg\vert_{z^{\ell-(n-r)}w^n}
=\frac{(1-z)^n}{1-z-w-zw}\Bigg\vert_{z^{\ell}w^n}\\
&=&(1-z)^n\frac{(1+z)^n}{(1-z)^{n+1}}\Bigg\vert_{z^\ell}=\frac{(1+z)^n}{1-z}\Bigg\vert_{z^\ell}
=\sum_{r=0}^{\ell}\binom{n}{r} \ .
\end{eqnarray*}
and the theorem follows.
\end{proof}

\subsection{Tangent method}
Following Section \ref{tansec}, we wish to evaluate the function $N(n,k)$ \eqref{Nfunction} in the large size limit
corresponding to the scaling $k=n z$, $n$ large and $z>1$ independent of $n$. The leading large $n$ asymptotics are governed by:
$$\frac{N(n,nz)}{Z(n,n)}\sim \int_0^1 d\xi\ H(n,n\xi) \, Y(n\xi,n z)\ ,$$
where we have set $\ell=n\xi$ and replaced the summation by an integral.
From the explicit formulas \eqref{Yfunction} and \eqref{Hfunction}, and using the Stirling formula, we get the following asymptotic behaviors:
$$ Y(n\xi,n z)\sim 2A_{n(1-\xi),n(z-1)}\sim \int_0^{{\rm Min}(1-\xi,z-1)} d\theta\ e^{n S_0(\xi,\theta,z)}\ ,
$$
where
$$S_0(\xi,\theta,z)=
(z-\xi-\theta){\rm Log}(z-\xi-\theta)-\theta{\rm Log}(\theta)-(1-\xi-\theta){\rm Log}(1-\xi-\theta)-(z-1-\theta){\rm Log}(z-1-\theta)$$
and
$$H(n,n\xi)\sim \frac{1}{2^n}\int_0^\xi d\varphi\ e^{n S_1(\varphi,z)}\ ,$$
where
$$S_1(\varphi,z)=
-\varphi{\rm Log}(\varphi)-(1-\varphi){\rm Log}(1-\varphi)\ .$$
The extremum of the latter action is at $\varphi=\varphi^*=\frac{1}{2}$. Two cases must be distinguished:
\begin{itemize}
\item{(i)} $\xi>\frac{1}{2}$: then the asymptotics of $H(n,n\xi)$ are dominated by $e^{n S_1(\varphi^*,z)}/2^n\sim 1$
hence no contribution to the large $n$ dominant behavior.
\item{(ii)}  $\xi<\frac{1}{2}$: the integral is dominated by the value at its upper bound $\xi$, namely
$$H(n,n\xi)\sim e^{n S_1(\xi,z)-n{\rm Log}(2)}$$
\end{itemize}
Collecting both results, we find that $\frac{N(n,nz)}{Z(n,n)}$ is dominated by $e^{nS(\xi,\theta,z)}$, where:
\begin{itemize}
\item{(i)} If $\xi>\frac{1}{2}$:  $S(\xi,\theta,z)=S_0(\xi,\theta,z)$.
\item{(ii)} If $\xi<\frac{1}{2}$:  $S(\xi,\theta,z)=S_0(\xi,\theta,z)+S_1(\xi,z)-{\rm Log}(2)$.
\end{itemize}
Note that 
$$ \partial_\theta S_0(\xi,\theta,z)={\rm Log}\left( \frac{(1-\xi-\theta)(z-1-\theta)}{\theta(z-\xi-\theta)} \right) \ \ {\rm and} \ \ 
\partial_\xi S_0(\xi,\theta,z)={\rm Log}\left(  \frac{1-\xi-\theta}{z-\xi} \right) \ .$$
The extremization problem in the case (i), $\partial_\theta S_0(\xi,\theta,z)=\partial_\xi S_0(\xi,\theta,z)=0$ has clearly no solution
as one should have $\theta=1-z$, but $z>1$. We are left with (ii). Writing $\partial_\xi S=\partial_\theta S=0$, we get:
$$ (1-\xi-\theta)(z-1-\theta)=\theta(z-\xi-\theta) \ \ {\rm and}\ \ (1-\xi-\theta)(1-\xi)=(z-\xi-\theta)\xi \ .$$
Eliminating $\theta$ yields $\xi=\xi^*(z):=\frac{1}{2z}<\frac{1}{2}$ as $z>1$.

The line through the scaled points $(\xi^*(z),2-\xi^*(z))$ and $(z,z)$ has the equation:
\beq\label{paraline}
y=\frac{2-\xi^*(z)-z}{\xi^*(z)-z} x-2z \frac{1-\xi^*(z)}{\xi^*(z)-z} \ .
\eeq
Following Section \ref{tansec}, the arctic curve must be the envelope of this family of lines, hence it
is determined by the equation \eqref{paraline} together with its derivative w.r.t. $z$. 
This gives the system:
\begin{eqnarray*}
-x - y + 2 z + 4 x z - 4 z^2 - 2 x z^2 + 2 y z^2&=&0\\
1 + 2 x - y - 2 x y - 3 z - 2 x z + 2 x^2 z + 2 y z + 2 x y z&=&0
\end{eqnarray*}
whose solution is given the parametric equations, for $z>1$:
$$ x=\frac{1}{2}-\frac{( z-1) ( 2 z-1)}{2 z (z-1)+1},\quad y=\frac{3}{2}+\frac{z-1}{2 z (z-1)+1} \ .$$
Alternatively, eliminating $z$, we end up with the arctic circle:
\beq\label{arcticcircle}
x^2 +(y-1)^2=\frac{1}{2}
\eeq
centered at $(0,1)$ with radius $1/\sqrt{2}$, namely inscribed in the limiting scaled domain $\lim_{n\to\infty}D/n$ 
i.e. the square defined by the equation $|x|+|y-1|=1$.
{\it Stricto sensu},  we have only derived the portion of the arctic circle
corresponding to $z>1$, namely $-\frac{1}{2}<x<\frac{1}{2}$
and $\frac{3}{2}<y<2$.
However, due to the symmetry of the problem under rotation of $\pi/2$, the same analysis can be performed
on the rotated domains, leading to \eqref{arcticcircle} for all $x,y$.

\section{The non-intersecting Dyck path problem}\label{dycksecone}

In this section we study a different problem of non-intersecting lattice paths, and apply the tangent method
to determine the corresponding arctic curve.

\subsection{Counting problem}

\begin{figure}
\centering
\includegraphics[width=15.cm]{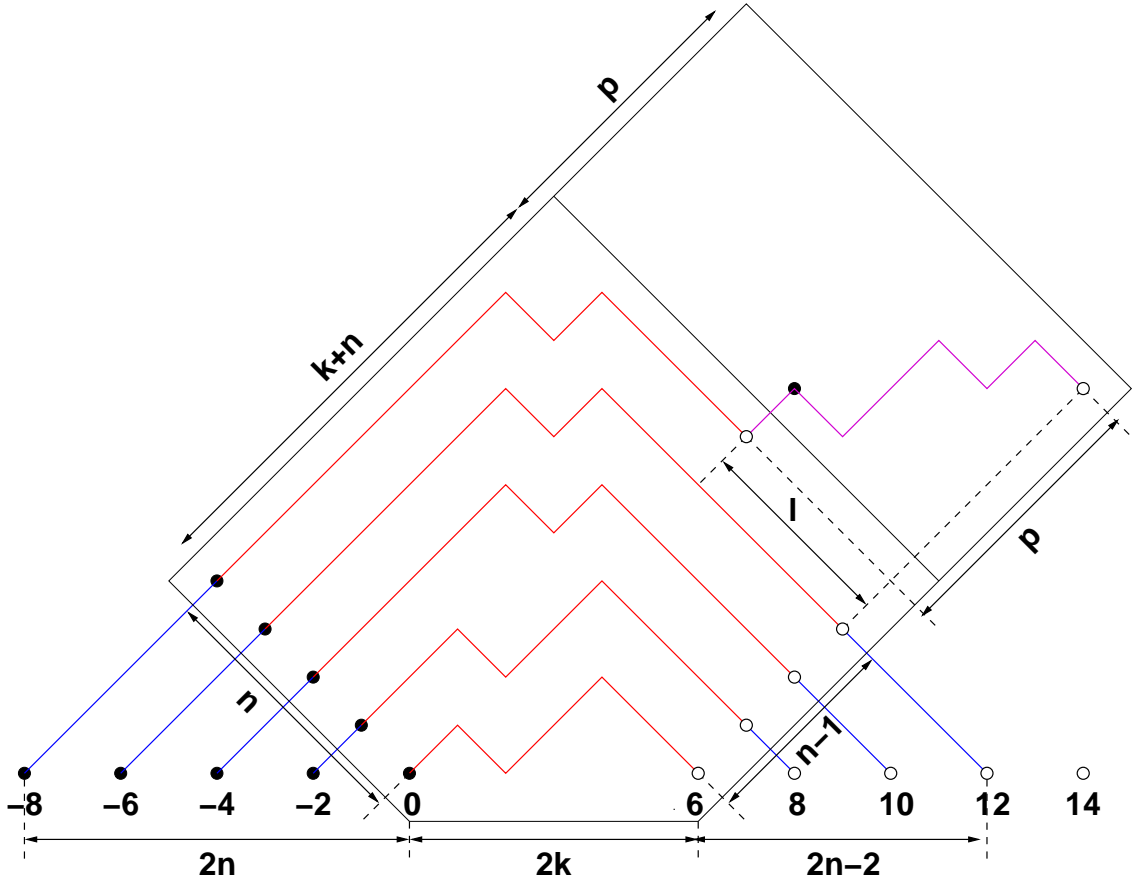}
\caption{\small A typical configuration contributing to $N(n,k,p)$, with $n=4$, $k=3$, $p=4$
and $\ell=3$. The rightmost path is split into three pieces: (i) its initial part inside the truncated $(n+k)\times (n+k)$ square,
(ii) one step $(1,1)$ exiting this square, (iii) its final part in the $p\times (n+k)$ rectangle.}
\label{fig:catatg}
\end{figure}

We consider the situation of Fig.~\ref{fig:catatg}. We wish to count the number $N(n,k,p)$
of path configurations with up/down steps
$(1,\pm 1)$ only, that remain above the x axis, and start at points $(-2i,0)$, $i=0,1,..,n$, and end at points
$(2k+2i,0)$, $i=0,1,...,n-1$ and $(2k+n+p,n+p)$.
Denote by $Z(n,k;\ell)$ the partition function corresponding to the lower $(n+k)\times (n+k)$ truncated square, and
$Y(p;\ell)$ that of the upper rectangle $p\times (n+k)$. Then we have
\beq
N(n,k,p)=\sum_{\ell=0}^{n+k} Z(n,k;\ell)\,Y(p;\ell)\ .
\eeq
We shall rather consider the ``one-point function"
\beq 
H(n,k;\ell):=\frac{Z(n,k;\ell)}{Z(n,k;0)}\ ,
\eeq
and we will apply the saddle-point method to the sum:
\begin{equation}\label{main}
\frac{N(n,k,p)}{Z(n,k;0)}=\sum_{\ell=0}^{n+k} H(n,k;\ell)\,Y(p;\ell)\ .
\end{equation}
The latter partition function $Y(p;\ell)$ is very simple, as it enumerates the configurations of a single path from 
$(2k+n+1-\ell,n+1+\ell)$ to
$(2k+n+p,n+p)$, namely
\begin{equation}\label{Yvalue}Y(p;\ell)={p+\ell-1\choose \ell} \ .
\end{equation}
The next two sections are devoted to the computation of $H(n,k;\ell)$.

\subsection{Partition function}

In this section, we compute the partition function $Z(n,k;0)$. Note that it may be obtained by specializing a result 
from 1989 by Gessel and Viennot (see Krattenthaler \cite{KrattCat} for details and proofs), 
however the method of proof is completely different. Here we derive  the result
using $LU$ decomposition, which will be instrumental for computing $Z(n,k;\ell)$ as well (see next section).

Note first that $Z(n,k;0)$ is simply the partition function of $n+1$  non-intersecting Dyck paths starting at points
$(-2i,0)$ and ending at points $(0,2k+2i)$, $i=0,1,...,n$, as the final portion of the left most path
is simply a straight line from $(2k+n,n)$ to $(2n+2k,0)$. 
By applying the Gessel-Viennot formula of Lemma~\ref{lemma:LGV}, we have:

\begin{thm}\label{enu1}
\beq Z(n,k;0)=\det_{i,j\in [0,n]}\left( c_{k+i+j} \right)= 
\prod_{i=0}^n U_{i,i}, \qquad U_{i,i}=\frac{(2i+1)!\,(2k+2i)!}{(k+2i+1)\, !(k+2i)!}
\eeq
where $c_m=\frac{1}{m+1}{2m\choose m}$ is the $m$-th Catalan number, enumerating the Dyck paths of $2m$ steps.
\end{thm}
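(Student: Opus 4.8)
The plan is to mirror the strategy of Lemma~\ref{lemma:aztec-LU}: exhibit an $LU$ factorization of the \emph{infinite} Hankel matrix $A=(c_{k+i+j})_{i,j\ge 0}$ and read off the product of its diagonal $U$-entries. By Gessel--Viennot (Lemma~\ref{lemma:LGV}) the partition function is already $Z(n,k;0)=\det_{i,j\in[0,n]}(c_{k+i+j})$, so the entire content of the theorem is the evaluation of this shifted Catalan Hankel determinant. Existence of the factorization is not an issue: all leading principal minors are positive, since each is itself a count of non-intersecting Dyck paths.

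First I would compute the bivariate generating function of Section~\ref{truncsec}, $f_A(z,w)=\sum_{i,j\ge0}c_{k+i+j}z^iw^j$. Summing over antidiagonals $s=i+j$ and using $\sum_{i+j=s}z^iw^j=(z^{s+1}-w^{s+1})/(z-w)$ gives the divided-difference form
\[
f_A(z,w)=\frac{z\,g_k(z)-w\,g_k(w)}{z-w},\qquad g_k(t):=\sum_{s\ge0}c_{k+s}\,t^s ,
\]
where $g_k$ is expressible through the Catalan generating function $C(t)=\frac{1-\sqrt{1-4t}}{2t}$ via $g_k(t)=t^{-k}\big(C(t)-\sum_{m=0}^{k-1}c_m t^m\big)$ together with the functional equation $C=1+tC^2$. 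Next I would produce explicit triangular generating functions $f_L$ (uni-lower) and $f_U$ (upper) with $f_A=f_L\star f_U$ in the convolution sense of Lemma~\ref{convolem}, verifying the identity through the contour integral $\oint\frac{dt}{2i\pi t}f_L(z,t^{-1})f_U(t,w)$ and a residue computation, exactly as in the Aztec case. Extracting $[z^iw^i]$ then yields the diagonal entries $U_{i,i}$, which I would simplify to $\frac{(2i+1)!(2k+2i)!}{(k+2i+1)!(k+2i)!}$ using the binomial representations of Lemma~\ref{fourwaylem}. Finally, Lemma~\ref{truncLUlem} guarantees that the infinite factorization descends to the $(n+1)\times(n+1)$ truncation, so $\det A=\prod_{i=0}^n U_{i,i}$, which is the claimed formula.

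The hard part will be guessing the correct triangular factors: unlike the rational $f_L,f_U$ of the Aztec diamond, the square root in $C(t)$ makes the natural closed forms nonrational, so finding $f_L,f_U$ that convolve to $f_A$ is not mechanical. A more robust alternative, which I would pursue in parallel, is the orthogonal-polynomial viewpoint: since $A$ is the symmetric Hankel matrix of the moments $\mathcal{L}_k[x^m]=c_{k+m}$ of the weight $x^k\sqrt{(4-x)/x}$ on $[0,4]$, the diagonal $U_{i,i}=D_i/D_{i-1}$ (with $D_i:=\det_{a,b\in[0,i]}(c_{k+a+b})$) are precisely the squared norms of the associated monic orthogonal polynomials, and the rows of $L^{-1}$ are their coefficients. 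Evaluating these norms in closed form---most cleanly via a recursion in the shift $k$ or a Dodgson-condensation relation among the $D_i$---and checking that the resulting product telescopes to $\frac{(2i+1)!(2k+2i)!}{(k+2i+1)!(k+2i)!}$ is where the real calculation lies. The boundary cases furnish useful consistency checks: $n=0$ must reproduce $c_k$, and $k=0$ must give $U_{i,i}=1$ with $\det A=1$, the classical unshifted Catalan Hankel determinant. The same $L^{-1}$ obtained here is exactly what will be needed to compute $Z(n,k;\ell)$ in the next section, which is why the explicit $LU$ route is preferable to merely quoting the known product formula.
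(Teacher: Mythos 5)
Your proposal correctly frames the theorem as the evaluation of a shifted Catalan Hankel determinant, correctly observes that all leading principal minors are positive (so an $LU$ factorization exists), and your two consistency checks ($n=0$ reproducing $c_k$, and $k=0$ giving determinant $1$) are sound. But neither of your two routes is actually carried out, and the step you defer (``guessing the correct triangular factors,'' ``where the real calculation lies'') is the entire content of the proof. Moreover, your primary route --- mirroring the Aztec-diamond convolution argument with bivariate generating functions and a residue computation --- is essentially a dead end here, for the reason you yourself flag: the paper does \emph{not} use $f_A(z,w)$ for this case, because the triangular factors have no usable closed-form generating series. The actual $L$ has entries
\beq
L_{i,j}=\frac{{2k+2i\choose 2i-2j}{2i-2j\choose i-j}{i\choose j}}{{k+i\choose i-j}{k+1+i+j\choose i-j}}\ ,
\eeq
a ratio of binomials whose bivariate series is nothing like the rational $f_L$ of Lemma~\ref{lemma:aztec-LU}, so the convolution identity $f_A=f_L\star f_U$ cannot be verified by picking up a single residue as in the Aztec case.

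What the paper actually does --- and what is missing from your proposal --- is to write down $L_{i,j}$ and $(L^{-1})_{i,j}$ in explicit factorial form, verify $L^{-1}L=I$ by a binomial-sum computation (an application of Lemma~\ref{prodlem}), and then prove the genuinely nontrivial identity of Lemma~\ref{helplem}, namely that the alternating sum $f_{k,i,j}=\sum_{r}(-1)^{r+i}{k+r+i\choose i-j}{k+i+j+1\choose i-r}{k+r\choose r}{2k+2r+2j\choose 2j}$ vanishes for $i>j$ and equals $(2i+1){k+i\choose k}$ for $i=j$; this is done via the iterated-derivative representation of Lemma~\ref{iterdetlem}, the product Lemma~\ref{prodlem}, and an expansion at $x=\infty$ of a polynomial whose top coefficients are read off by hand. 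That identity is precisely what makes $L^{-1}A$ upper triangular with the stated diagonal, and none of it is present, even in outline, in your proposal. Your fallback orthogonal-polynomial route (moments of $x^k\sqrt{(4-x)/x}$ on $[0,4]$, with $U_{i,i}=D_i/D_{i-1}$ the squared norms) is a legitimate classical method --- it is close in spirit to the Gessel--Viennot/Krattenthaler evaluation that the paper cites and deliberately bypasses --- but you leave the computation of the norms entirely open, so it too does not constitute a proof. Since, as you rightly note, the explicit $L^{-1}$ is needed again for the one-point function $Z(n,k;\ell)$, producing it (not merely postulating its existence) is unavoidable.
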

\begin{proof}
By $LU$ decomposition of the matrix $A$ with entries $A_{i,j}=c_{k+i+j}$. Introduce the lower triangular matrix $L$ with entries:
\begin{eqnarray}L_{i,j}&=&2^{i-j} {i\choose j} \prod_{s=j+1}^{i} \frac{2k+2s-1}{k+1+j+s}\nonumber \\
&=&\frac{(2k+2i)!(k+j)!(k+2j+1)!}{(2k+2j)!(k+i)!(k+i+j+1)!}{i\choose j}=
\frac{{2k+2i\choose 2i-2j}{2i-2j\choose i-j}{i\choose j}}{{k+i\choose i-j}{k+1+i+j\choose i-j}}\ . \label{Lmatdyck}
\end{eqnarray}
Then $L^{-1}$ has entries:
\beq (L^{-1})_{i,j}=(-2)^{i-j} {i\choose j}\prod_{s=j+1}^{i} \frac{2k+2s-1}{k+i+s}=
(-1)^{i+j}\frac{(2k+2i)!(k+j)!(k+i+j)!}{(2k+2j)!(k+i)!(k+2i)!} {i\choose j}\ .
\eeq
Indeed, we have
\begin{eqnarray*}
&&\sum_{r=j}^i (L^{-1})_{i,r}L_{r,j}=\\
&&=\sum_{r=j}^i (-1)^{i+r}\frac{(2k+2i)!(k+r)!(k+i+r)!}{(2k+2r)!(k+i)!(k+2i)!}{i\choose r}
\frac{(2k+2r)!(k+j)!(k+2j+1)!}{(2k+2j)!(k+r)!(k+r+j+1)!}{r\choose j}\\
&&=\frac{(2k+2i)!(k+j)!(k+2j+1)!i!}{(2k+2j)!(k+i)!(k+2i)!j!}
\sum_{r=j}^i (-1)^{i+r}\frac{(k+i+r)!}{(k+j+r+1)!(i-r)!(r-j)!}\\
&&=\frac{1}{k+i+j+1}\frac{(2k+2i)!(k+j)!(k+2j+1)!i!}{(2k+2j)!(k+i)!(k+2i)!j!}
\sum_{r=j}^i (-1)^{i+r}{k+i+r\choose r-j}{k+i+j+1\choose i-r}\ .
\end{eqnarray*}
If $i=j$, the result is obviously $1$, as only the value $r=i=j$ contributes. If $i>j$, the sum reads:
\begin{eqnarray*}
&&(-1)^{i+j}\sum_{r=0}^{i-j} (-1)^{r}{k+i+j+r\choose r}{k+i+j+1\choose i-j-r}\\
&&\qquad =(-1)^{i+j}\sum_{r=0}^{i-j}
(1+x)^{k+i+j+1}\Big\vert_{x^{i-j-r}} \frac{1}{(1+x)^{k+i+j+1}}\Bigg\vert_{x^{r}}
=(-1)^{i+j}\frac{(1+x)^{k+i+j+1}}{(1+x)^{k+i+j+1}}\Bigg\vert_{x^{i-j}}=0
\end{eqnarray*}
by direct application of Lemma \ref{prodlem}.

Finally, let us compute for $i\geq j$:
\begin{eqnarray*}(L^{-1}A)_{i,j}&=&\sum_{r=0}^i (-1)^{i+r}\frac{(2k+2i)!(k+r)!(k+i+r)!}{(2k+2r)!(k+i)!(k+2i)!} {i\choose r} 
\frac{(2k+2r+2j)!}{(k+r+j+1)!(k+r+j)!}\\
&=&\frac{(2k+2i)!i!(i-j)!k!(2j)!}{(k+i)!(k+2i)!(k+i+j+1)!}\\
&&\qquad \times \sum_{r=0}^i(-1)^{r+i} {k+r+i\choose i-j}{k+i+j+1\choose i-r}{k+r\choose r}{2k+2r+2j\choose 2j}\ .
\end{eqnarray*}
We now use the following.
\begin{lemma}\label{helplem}
For all $i\geq j$, we have:
\beq
f_{k,i,j}:=\sum_{r=0}^i(-1)^{r+i} {k+r+i\choose i-j}{k+i+j+1\choose i-r}{k+r\choose r}{2k+2r+2j\choose 2j}=
\left\{ \begin{matrix} (2i+1){k+i\choose k} & {\rm if} \ i=j \\
0 & {\rm if} \ i>j 
\end{matrix}\right.\ .
\eeq
\end{lemma}
\begin{proof}
Let us write $j=i-m$. Then:
\begin{eqnarray*}
f_{k,i,i-m}&=&\sum_{r=0}^i(-1)^{r+i} {k+r+i\choose m}{k+r\choose k}{k+2i-m+1\choose i-r}{2k+2r+2i-2m\choose 2i-2m}\\
&=& \frac{1}{m!}\frac{d^m}{dt^m} \frac{1}{k!}\frac{d^k}{du^k} \sum_{r=0}^i t^{k+r+i} u^{k+r}
\left(\frac{(1-x)^{k+2i-m+1}}{(1-y)^{2i-2m+1}}\Bigg\vert_{x^{i-r}y^{2k+2r}}\right)\Bigg\vert_{t=u=1}\\
&=& \frac{1}{m!}\frac{d^m}{dt^m} \frac{1}{k!}\frac{d^k}{du^k}t^{k+2i} u^{k+i} \sum_{r=0}^i  (t u)^{r-i}
\left(\frac{(1-x^2)^{k+2i-m+1}}{(1-y)^{2i-2m+1}}\Bigg\vert_{x^{2i-2r}y^{2k+2r}}\right)\Bigg\vert_{t=u=1}\ ,
\end{eqnarray*}
where in the second line we have used Lemma \ref{iterdetlem} to represent ${k+r+i\choose m}$ and ${k+r\choose k}$
and Lemma \ref{fourwaylem} to represent the other two binomial coefficients,
and
where in the third line we performed a change of variable $x\to x^2$. 
Let us now use Lemma \ref{prodlem} to write:
$$\sum_{r=-k}^{i} (tu)^{r-i} \frac{(1-x^2)^{k+2i-m+1}}{(1-y)^{2i-2m+1}}\Bigg\vert_{x^{2i-2r}y^{2k+2r}}=
\frac{(1-\frac{x^2}{t u})^{k+2i-m+1}}{(1-x)^{2i-2m+1}}\Bigg\vert_{x^{2k+2i}}\ .$$
Note the bounds in the summation: they ensure that all the terms occurring in the product are
accounted for. Indeed, the power of $x^2$ should be between $0$ and $k+2i-m+1$, hence $0\leq i-r\leq k+2i-m+1$
or equivalently $m-i-1-k\leq r\leq i$, whereas the power of $y$ should be non-negative, i.e. $r\geq -k$. However $m=i-j\leq i$
hence the first lower bound is automatically satisfied. 
Finally, we note that
$$ \frac{1}{m!}\frac{d^m}{dt^m} \frac{1}{k!}\frac{d^k}{du^k}t^{k+2i} u^{k+i} \sum_{r=-k}^{-1} (t u)^{r-i}
\left(\frac{(1-x^2)^{k+2i-m+1}}{(1-y)^{2i-2m+1}}\Bigg\vert_{x^{2i-2r}y^{2k+2r}}\right)\Bigg\vert_{t=u=1}=0$$
as the derivatives w.r.t. $u$ vanish as soon as $r<0$ (the overall power of $u$ is $k+r<k$).
We may therefore rewrite
\begin{equation*}
f_{k,i,i-m}= \frac{1}{m!}\frac{d^m}{dt^m} \frac{1}{k!}\frac{d^k}{du^k}t^{k+2i} u^{k+i} 
\left(\frac{(1-\frac{x^2}{t u})^{k+2i-m+1}}{(1-x)^{2i-2m+1}}\Bigg\vert_{x^{2k+2i}}\right)\Bigg\vert_{t=u=1}\ .
\end{equation*}
It is clear that the quantity:
\begin{equation}\label{helpder}
\frac{1}{m!}\frac{d^m}{dt^m} \frac{1}{k!}\frac{d^k}{du^k} t^{k+2i} u^{k+i}\frac{(1-\frac{x^2}{t u})^{k+2i-m+1}}{(1-x)^{2i-2m+1}}\Bigg\vert_{t=u=1}
\end{equation}
is a polynomial of $x$, as the denominator after setting $u=t=1$ divides the numerator. The degree of this polynomial
is $2(k+2i-m+1)-(2i-2m+1)=2k+2i+1$, and we must pick the coefficient of $x^{2k+2i}$ in this polyniomial.
This coefficient is easily found by expanding Eq.~\eqref{helpder} around $x=\infty$:
$$t^{k+2i} u^{k+i}\frac{(1-\frac{x^2}{t u})^{k+2i-m+1}}{(1- x)^{2i-2m+1}}=
(-1)^{k+m}t^{k+2i} u^{k+i} \frac{x^{2k+2i+1}}{(t u)^{k+2i-m+1}}\left(1+\frac{2i-2m+1}{x}+O(x^{-2})\right)$$
and we conclude that
$$f_{k,i,i-m}=\frac{1}{m!}\frac{d^m}{dt^m} \frac{1}{k!}\frac{d^k}{du^k} (-1)^{k+m}(2i-2m+1)t^{m-1} u^{m-i-1}\vert_{t=u=1}\ .$$
For $m=0$ this gives
$$f_{k,i,i}=\frac{1}{k!}\frac{d^k}{du^k} (-1)^{k}(2i+1)u^{-i-1}\vert_{u=1}=(2i+1){k+i\choose k}\ ,$$
while for $m>0$ we find: $f_{k,i,i-m}=0$ as the derivative w.r.t. $t$ vanishes. The Lemma follows.
\end{proof}

We conclude that $L^{-1}A=:U$ is upper triangular, with diagonal elements:
\begin{eqnarray*}U_{i,i}&=&\frac{(2k+2i)!i!k!(2i)!}{(k+i)!(k+2i)!(k+2i+1)!}\times  (2i+1) \frac{(k+i)!}{i!k!}\nonumber \\
&=& \frac{(2i+1)!\,(2k+2i)!}{(k+2i+1)!\,(k+2i)!}
\end{eqnarray*}
and the Theorem follows.

\end{proof}

\subsection{One-point function}

We now turn to the computation of $H(n,k;\ell)$.

The partition function $Z(n,k;\ell)$ for general $\ell\in [0,n+k]$ corresponds by the Gessel-Viennot formula to the determinant of the 
matrix $C(n,k,\ell)$ with entries:
\beq C(n,k,\ell)_{i,j}=\left\{ \begin{matrix}
c_{k+i+j} & {\rm if}\ j\in [0,n-1]\\
b_{2k+2i+n-\ell,n+\ell}         & {\rm if}\ j=n 
\end{matrix}\right.\ .
\eeq
Indeed, as before $c_{k+i+j}$ is the number of Dyck paths from $(-2i,0)$ to $(2k+2j,0)$, while
$b_{2k+2i+n-\ell,n+\ell}$ is the number of Dyck paths from $(-2i,0)$ to $(2k+n-\ell,n+\ell)$,
where for $h=a$ mod 2, we have:
\beq 
b_{a,h}={a\choose \frac{a-h}{2}} -{a\choose \frac{a-h}{2}-1} =\frac{h+1}{\frac{a+h}{2}+1}{a\choose \frac{a-h}{2}}
\eeq
so that
\beq 
b_{2k+2i+n-\ell,n+\ell}=\frac{n+\ell+1}{n+i+k+1} \, {2k+2i+n-\ell\choose k+i-\ell}\ .
\eeq
Note also that for $\ell=0$, $b_{2k+2i+n,n}\neq c_{n+i+k}$, yet the determinant is the same as that with 
$\{c_{n+i+k}\}_{i\in [0,n]}$ as the last column. This is a simple manifestation of the fact that non-intersecting paths ending at
$(2k+2j,0)$, $j=0,1,...,n$ must end with a number of descending steps, which is $j$ for the $j$-th path counted from the bottom, 
hence in particular we may cut the topmost path to its last visited vertex before the $n$ forced descents, namely at the point
$(2k+n,n)$.

We have therefore 
\beq
Z(n,k;\ell)=\det_{i,j\in [0,n]} ({\tilde A}_{i,j}),\quad {\tilde A}_{i,j}=\begin{pmatrix} 
c_k & c_{k+1} & \cdots & c_{n+k-1} & b_{2k+n-\ell,n+\ell}\\
c_{k+1} & c_{k+2} & \cdots & c_{n+k} & b_{2k+n-\ell+2,n+\ell}\\
\vdots & \vdots &   & \vdots & \vdots \\
c_{n+k} & c_{n+k+1} & \cdots & c_{2n+k-1} & b_{2k+3n-\ell,n+\ell}
\end{pmatrix}\ .
\eeq

To compute this determinant, we use the result of Section \ref{sec:paths-and-LU}. 
Using the $L$ matrix of \eqref{Lmatdyck}, we get that $L^{-1}{\tilde A}={\tilde U}$
where ${\tilde U}$ is an upper triangular matrix differing from $U$ only in its last column,
with in particular:
\begin{eqnarray*}{\tilde U}_{n,n}&=&\sum_{r=0}^k (L^{-1})_{n,r} b_{2k+n-\ell+2r,n+\ell} \\
&=& \sum_{r=0}^n (-1)^{n+r}\frac{(2n+2k)!(k+r)!(n+k+r)!}{(2k+2r)!(n+k)!(2n+k)!} {n\choose r}
\frac{n+\ell+1}{n+r+k+1} \, {2k+2r+n-\ell\choose k+r-\ell}\ .
\end{eqnarray*}
As explained in Section \ref{sec:paths-and-LU}, the one-point function $H(n,k;\ell)$ is nothing but
the quantity 
\begin{eqnarray*}H(n,k;\ell)&=&\frac{{\tilde U}_{n,n}}{{U}_{n,n}}= (n+\ell+1)\frac{(k+2n+1)!n!}{(2n+1)!(n+k)!}\times \\
&&\qquad \times \sum_{r=0}^n (-1)^{n+r}\frac{(k+r)!(n+k+r)!(2k+2r+n-\ell)!}{(2k+2r)!r!(n-r)!(n+r+k+1)!(k+r-\ell)!}\ .
\end{eqnarray*}

We have the following.
\begin{thm}\label{onepthm}
The one-point function $H(n,k;\ell)=\frac{{\tilde U}_{n,n}}{{U}_{n,n}}$ reads:
\begin{eqnarray}\label{goodformula}
H(n,k;\ell)&=&\frac{(n+\ell+1)!(2k+n-\ell)!}{(2n+2k)!} 
\sum_{s=0}^n \frac{(2n+k-s)!}{(\ell+2s-n)!(n+k-\ell-s)!(2n+1-2s)!}\nonumber \\
&=&\frac{1}{{2n+2k\choose n+\ell}}\sum_{s=0}^n {n+\ell+1\choose 2n+1-2s}{2n+k-s\choose n+\ell}\ .
\end{eqnarray}
\end{thm}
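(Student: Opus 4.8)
The plan is to treat the statement purely as a resummation identity. The first line of \eqref{goodformula} is exactly the alternating expression for ${\tilde U}_{n,n}/U_{n,n}$ displayed just above the statement, once one cancels the common factor $(n+k+r)!$ between $(L^{-1})_{n,r}$ and $b_{2k+2r+n-\ell,n+\ell}$; so the whole content is to turn that alternating $r$-sum into the manifestly positive $s$-sum. I would follow the generating-function template already used for the Aztec one-point function in the proof of Theorem~\ref{thm:aztec-1pt}, and more pertinently the computation in Lemma~\ref{helplem}: represent every $r$-dependent factor as a coefficient extraction, an iterated derivative at $t=1$, or a one-dimensional integral, so that the summation over $r$ (which carries the sign through $(-1)^{n+r}\binom{n}{r}$) becomes a convolution that collapses by Lemma~\ref{prodlem} into a single coefficient extraction, whose remaining free index is then relabelled $s$.

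Concretely, after pulling out the $r$-independent prefactor the summand reduces to $(-1)^{n+r}\binom{n}{r}\,\frac{(k+r)!}{(2k+2r)!}\,\frac{(2k+2r+n-\ell)!}{(k+r-\ell)!}\,\frac{1}{n+k+r+1}$. I would treat the three awkward pieces as follows: the simple pole by $\frac{1}{n+k+r+1}=\int_0^1 t^{\,n+k+r}\,dt$; the inverse central binomial $\frac{(k+r)!}{(2k+2r)!}=\frac{1}{(k+r)!}\binom{2k+2r}{k+r}^{-1}$ by the Beta/hypergeometric integral of Lemma~\ref{hypergeom}; and the two honest binomials $\binom{n}{r}$ and $\binom{2k+2r+n-\ell}{k+r-\ell}$ by Lemmas~\ref{fourwaylem} and~\ref{iterdetlem}, using the substitution $x\to x^2$ exactly as in Lemma~\ref{helplem} to absorb the doublings $2k+2r$. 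The sign-weighted factor, through $\sum_{r}(-1)^{n-r}\binom{n}{r}z^{r}=(z-1)^{n}$, then plays the role that $(1-z)^n$ played in the Aztec computation, and Lemma~\ref{prodlem} converts the $r$-sum into one coefficient extraction from a product of generating functions.

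The main obstacle is the interplay of the pole and the inverse central binomial: both must be linearised in $r$ without destroying the convolution structure Lemma~\ref{prodlem} needs, and I expect the interchange of the Beta integral with the coefficient extractions to require care. A closely related difficulty, already met in Lemma~\ref{helplem}, is the range-extension step: to apply Lemma~\ref{prodlem} one must extend the summation past its natural bounds and then argue the added terms vanish (there because a derivative fell on a monomial of too low degree, or a negative power was extracted), and it is precisely this vanishing that converts the alternating sum into a positive one. Finally one must match the surviving extraction to the shape of the $s$-sum: the collected factorials should assemble into the overall factor $\binom{2n+2k}{n+\ell}^{-1}$ of the second line (equivalently the prefactor $\frac{(n+\ell+1)!(2k+n-\ell)!}{(2n+2k)!}$ of the first), and the summand into $\binom{n+\ell+1}{2n+1-2s}\binom{2n+k-s}{n+\ell}$, which pins down the change of variable $r\mapsto s$.

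As a check and a fallback I would exploit the following: for $\ell\le n+1$ the product $\frac{(k+r)!}{(2k+2r)!}\frac{(2k+2r+n-\ell)!}{(k+r-\ell)!}$ is actually a polynomial in $r$ of degree $n$, so the alternating sum is an $n$-th finite difference of that polynomial over the simple pole; it therefore collapses to a closed form equal to $\binom{2n+2k}{n+\ell}$, i.e. $H(n,k;\ell)\equiv 1$, which the $s$-sum must reproduce and which gives a stringent test of the bookkeeping (the genuinely nontrivial regime is $\ell\ge n+2$, where the product ceases to be polynomial). Should the direct resummation prove unwieldy for general $\ell$, an alternative is to recognise both sides of \eqref{goodformula} as terminating hypergeometric series and invoke a classical transformation, or to verify by creative telescoping that the alternating and positive expressions obey the same first-order recurrence in $\ell$, with the common normalisation fixed by $H(n,k;0)=1$.
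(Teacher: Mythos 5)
Your reduction of the summand is correct, and your identification of the polynomiality threshold is sharp: for $\ell\le n+1$ the factor $\frac{(k+r)!}{(k+r-\ell)!}\,\frac{(2k+2r+n-\ell)!}{(2k+2r)!}$ is indeed a degree-$n$ polynomial $Q(r)$, and writing $Q(r)=Q(-n-k-1)+(n+k+r+1)\tilde Q(r)$ collapses the alternating sum against the pole $\frac{1}{n+k+r+1}$ by an $n$-th finite difference, giving $H\equiv 1$. This reproduces (and extends by one unit of $\ell$) the paper's case $n\ge\ell$, which the paper instead handles by a Lemma~\ref{helplem}-style generating-function computation followed by Lemma~\ref{slem} to check that the $s$-sum also equals ${2n+2k\choose n+\ell}$ there; your finite-difference argument is arguably cleaner for that half. (Minor slip: the first line of \eqref{goodformula} is the positive $s$-sum in factorial form, not the alternating expression; you clearly treat it as the target, so no harm.) The genuine gap is the regime $\ell\ge n+2$, which is exactly where your template provably breaks and where the paper abandons direct resummation. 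The range-extension step you flag is not merely delicate there, it fails: the paper itself warns inside the proof of Lemma~\ref{slem} that relaxing the summation range ``would be wrong if $\ell>n$'', because for $\ell>n$ the convolution of Lemma~\ref{prodlem} picks up nonvanishing terms at negative $s$ (low powers of the correct parity) that are absent from $\sum_{s=0}^n$ — concretely, the $s$-sum then falls strictly short of ${2n+2k\choose n+\ell}$ and has \emph{no} closed form. Your Beta-integral linearizations (of the pole and, via Lemma~\ref{hypergeom}, of the inverse rising factorial) do let the $r$-sum collapse into a power of the form $\bigl(u^2t(1+x)-1\bigr)^n$ under a double integral, but the Aztec/Lemma~\ref{helplem} endgame — matching a single collapsed coefficient extraction to the answer — has no analogue when the answer is an irreducible sum, and your plan stalls at precisely the step you call ``matching the surviving extraction to the shape of the $s$-sum''.

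What the paper does instead, and what your proposal is missing, is an interpolation argument in $k$: it normalizes the alternating expression and the claimed positive sum into $P_{n,\ell}(k)$ and $Q_{n,\ell}(k)$ (eqs.~\eqref{defP}, \eqref{defQ}), both polynomials in $k$; it deduces $\deg P_{n,\ell}=n$ from the already-settled case $\ell\le n$, where $P_{n,\ell}(k)=\frac{(2n+1)!}{n+\ell+1}\frac{(2n+2k)_{n+\ell}}{(n+k)_{\ell}}$, arguing the cancellations from the naive degree $3n$ persist for $\ell>n$; and it then evaluates both polynomials at the $n+1$ points $k=-n-j$, $j=1,\dots,n+1$, where the alternating sum degenerates to the single surviving term $r=j-1$ (the falling-factorial zeros annihilate all others) while the positive sum is evaluated in closed form by Lemma~\ref{tlem}. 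Two degree-$n$ polynomials agreeing at $n+1$ points coincide (Theorem~\ref{pols}), which settles $\ell>n$. Nothing in your main route supplies a substitute for this idea. Your fallback of creative telescoping could in principle close the gap, since both expressions are hypergeometric terms in all variables; but the recurrence in $\ell$ is unlikely to be first order as you assert, and you would need enough initial values to pin the solution down — repairable using your own $\ell\le n+1$ result, but as written this remains a sketch rather than a proof of the hard case.
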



In the remainder of this section the theorem is proved in two steps, as we must distinguish the cases $n\geq \ell$ and $n<\ell$.

\subsubsection{Case $n\geq \ell$}
We first note that for $n\geq \ell$, we may rewrite:
\begin{eqnarray*}{\tilde U}_{n,n}&=&(n+\ell+1)\frac{(2n+2k)!n!\ell!(n-\ell)!}{(k+2n)!(k+2n+1)!} \\
&&\qquad\qquad\quad \times \sum_{r=0}^n (-1)^{n+r}{n+k+r\choose n+k}{k+r\choose l}
{2k+2r+n-\ell\choose n-\ell}{k+2n+1\choose n-r}
\end{eqnarray*}

We have the following:
\begin{lemma}
For all $k\geq 0$ and all $n\geq \ell\geq 0$, we have:
\begin{eqnarray}g_{n,k,\ell}&:=&\sum_{r=0}^n (-1)^{n+r}{n+k+r\choose n+k}{k+r\choose l}
{2k+2r+n-\ell\choose n-\ell}{k+2n+1\choose n-r}\nonumber \\
&=&{2n+1\choose n-\ell}{n+\ell\choose n}
\end{eqnarray}
\end{lemma}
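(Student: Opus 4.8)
The plan is to mirror the generating-function strategy already used in the proof of Lemma~\ref{helplem}: encode each of the four binomial coefficients as a coefficient in an auxiliary variable (or as an iterated derivative), absorb the sign $(-1)^{n+r}$ into one of them, and then collapse the sum over $r$ into a single coefficient extraction by means of Lemma~\ref{prodlem}. The sign is naturally carried by $\binom{k+2n+1}{n-r}$, since by Lemma~\ref{fourwaylem} one has $(1-x)^{k+2n+1}\big|_{x^{n-r}}=(-1)^{n+r}\binom{k+2n+1}{n-r}$. The remaining task is then to choose representations of the other three factors so that their $r$-dependence cancels cleanly against the $x^{-r}$ coming from this factor.

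Concretely, I would represent the two factors with fixed lower index and $r$ in the top, namely $\binom{n+k+r}{n+k}$ and $\binom{2k+2r+n-\ell}{n-\ell}$, by iterated derivatives (Lemma~\ref{iterdetlem}) in variables $t$ and $v$, contributing weights $t^{r}$ and $v^{2r}$; and I would write $\binom{k+r}{\ell}=\frac{1}{(1-x)^{\ell+1}}\big|_{x^{k+r-\ell}}$ (third form of Lemma~\ref{fourwaylem}), whose $r$-dependence $x^{+r}$ is exactly opposite to the $x^{-r}$ of the sign factor. Writing $q=tv^{2}$ and applying the reverse Cauchy product (the identity $\sum_{r}q^{r}(A|_{x^{\alpha+r}})(B|_{x^{\beta-r}})=q^{-\alpha}[A(qx)B(x)]|_{x^{\alpha+\beta}}$, a direct consequence of Lemma~\ref{prodlem}) with $A=\frac{1}{(1-x)^{\ell+1}}$, $B=(1-x)^{k+2n+1}$, $\alpha=k-\ell$, $\beta=n$, the whole sum telescopes and, after collecting the $r$-independent prefactors $t^{n+k}v^{2k+n-\ell}\cdot q^{\ell-k}=t^{n+\ell}v^{n+\ell}$, produces
\[
g_{n,k,\ell}=\frac{1}{(n+k)!}\partial_{t}^{n+k}\frac{1}{(n-\ell)!}\partial_{v}^{n-\ell}\left[t^{n+\ell}v^{n+\ell}\left(\frac{(1-x)^{k+2n+1}}{(1-tv^{2}x)^{\ell+1}}\bigg|_{x^{n+k-\ell}}\right)\right]_{t=v=1}.
\]
A pleasant feature of this choice is that no spurious boundary terms arise: the internal Cauchy range is $r\in[\ell-k,n]$, and for $r<0$ the $t$-derivative produces the factor $\binom{n+\ell+b}{n+k}$ with $b=k-\ell+r$, whose top index $n+k+r$ lies strictly below $n+k$, so these terms vanish automatically. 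This is the analogue of the vanishing argument of Lemma~\ref{helplem}, but here it comes for free.

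Carrying out the two differentiations term by term (using Lemma~\ref{iterdetlem} in reverse, $\frac{1}{(n+k)!}\partial_t^{n+k}t^{n+\ell+b}|_{t=1}=\binom{n+\ell+b}{n+k}$ and similarly in $v$) reduces the whole problem to showing that the single sum
\[
\sum_{b\ge 0}(-1)^{n+k-\ell-b}\binom{k+2n+1}{n+k-\ell-b}\binom{\ell+b}{b}\binom{n+\ell+b}{n+k}\binom{n+\ell+2b}{n-\ell}
\]
equals $\binom{2n+1}{n-\ell}\binom{n+\ell}{n}$, and this is where the real work lies. I expect this residual evaluation to be the main obstacle: it is itself a nontrivial, Saalsch\"utz-type hypergeometric summation, and the genuinely surprising fact that the answer is \emph{independent of $k$} must be extracted here.

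I would attack this last step in one of two ways. Either (i) recognize the summand as a terminating, balanced ${}_{3}F_{2}$ and apply the corresponding summation theorem; or (ii) run a second round of the generating-function machinery, rewriting the coefficient extraction as a contour integral and checking, at generic $t,v$, that the bracketed quantity is a \emph{genuine polynomial} in $x$ whose relevant coefficient can be read off by expansion at $x=\infty$, exactly as in the endgame of Lemma~\ref{helplem}. The polynomiality is plausible because the total number of derivatives, $(n+k)+(n-\ell)$, raises the pole $(1-tv^{2}x)^{-(\ell+1)}$ to order at most $k+2n+1$ at $t=v=1$, which is precisely cancelled by the numerator $(1-x)^{k+2n+1}$. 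The delicate points throughout are bookkeeping ones: keeping every coefficient-extraction index non-negative, justifying the interchange of the $r$-sum with the derivatives, and confirming that this pole is fully cancelled after differentiation so that the final extraction is legitimate.
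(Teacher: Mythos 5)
Your reduction is correct and follows essentially the same route as the paper's proof of this lemma: the sign is carried by $(1-x)^{k+2n+1}\big\vert_{x^{n-r}}$, two of the binomials are encoded as iterated derivatives via Lemma~\ref{iterdetlem}, the sum over $r$ is collapsed by Lemma~\ref{prodlem}, the spurious terms with $r<0$ are killed by the $t$-derivative (overall $t$-degree $n+k+r<n+k$), and the endgame is polynomiality plus expansion at $x=\infty$. The one genuine difference is your assignment of representations: the paper differentiates in $u$ for ${k+r\choose \ell}$ and extracts coefficients in a second variable $y$ for ${2k+2r+n-\ell\choose n-\ell}$, which forces the parity-matching substitution $x\to x^2$ before Lemma~\ref{prodlem} can be applied; your choice (derivative in $v$ for ${2k+2r+n-\ell\choose n-\ell}$, coefficient of $x^{k+r-\ell}$ for ${k+r\choose \ell}$) makes the convolution one-dimensional and dispenses with the doubling trick --- a mild but real simplification. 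I verified your intermediate formula and the residual sum over $b$; both are correct.

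Where your assessment goes wrong, instructively, is in locating the ``real work.'' Your route (i) is a dead end as stated: the doubled index in ${n+\ell+2b\choose n-\ell}$ produces half-integer parameters in the consecutive-term ratio, so the residual sum is not a terminating balanced ${}_3F_2$ and Saalsch\"utz does not apply directly. But your route (ii) --- which is precisely the paper's endgame --- never evaluates that sum at all, and it finishes immediately in your variables: expanding $t^{n+\ell}v^{n+\ell}(1-x)^{k+2n+1}(1-tv^2x)^{-(\ell+1)}$ at $x=\infty$, the coefficient of $x^{n+k-\ell}$ is supported on pairs $(i,j)$ with $i+j=n$ and carries $t$-weight $t^{n-1-j}$, so the $(n+k)$-fold $t$-derivative annihilates every term with $j\leq n-1$ (nonnegative degree $n-1-j<n+k$) and only $(i,j)=(0,n)$ survives, with value
\begin{equation*}
(-1)^{k+\ell}{n+\ell\choose n}\cdot(-1)^{n+k}\cdot(-1)^{n-\ell}{2n+1\choose n-\ell}={2n+1\choose n-\ell}{n+\ell\choose n}\ ,
\end{equation*}
where the middle factors come from $\frac{1}{(n+k)!}\partial_t^{n+k}t^{-1}\vert_{t=1}$ and $\frac{1}{(n-\ell)!}\partial_v^{n-\ell}v^{-n-\ell-2}\vert_{v=1}$. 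One wording correction: the bracketed quantity is a polynomial in $x$ only \emph{after} the derivatives are applied and $t=v=1$ is set (for generic $t,v$ the pole at $x=1/(tv^2)$ is not cancelled by $(1-x)^{k+2n+1}$), exactly as in the paper's argument; with that reading, your proposal completes to a correct proof, and is marginally cleaner than the printed one.
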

\begin{proof}
We compute:
\begin{eqnarray*}
g_{n,k,\ell}&=&\frac{1}{(n+k)!}\frac{d^{n+k}}{dt^{n+k}} \frac{1}{\ell!}\frac{d^\ell}{du^\ell} \sum_{r=0}^n
t^{n+k+r}u^{k+r}\left(\frac{(1-x)^{k+2n+1}}{(1-y)^{n-\ell+1}}\Bigg\vert_{x^{n-r}y^{2k+2r}}\right)\Bigg\vert_{t=u=1}\\
&=&\frac{1}{(n+k)!}\frac{d^{n+k}}{dt^{n+k}} \frac{1}{\ell!}\frac{d^\ell}{du^\ell}t^{k+2n}u^{n+k}\sum_{r=0}^n (t u)^{r-n}
\left(\frac{(1-x^2)^{k+2n+1}}{(1-y)^{n-\ell+1}}\Bigg\vert_{x^{2n-2r}y^{2k+2r}}\right)\Bigg\vert_{t=u=1}
\end{eqnarray*}
As before, we wish to use Lemma \ref{prodlem}, which in this case gives:
$$\sum_{r=-k}^n (t u)^{r-n}
\frac{(1-x^2)^{k+2n+1}}{(1-y)^{n-\ell+1}}\Bigg\vert_{x^{2n-2r}y^{2k+2r}}=\frac{(1-\frac{x^2}{t u})^{k+2n+1}}{(1-x)^{n-\ell+1}}\Bigg\vert_{x^{2k+2n}}$$
As before, the discrepancy between the summation ranges is resolved by noting that terms with $r<0$ contribute $0$ 
after taking $n+k$ derivatives w.r.t. $t$, as the corresponding overall power of $t$ is $n+k+r<n+k$.
We therefore have:
\begin{eqnarray*}
g_{n,k,\ell}&=&\frac{1}{(n+k)!}\frac{d^{n+k}}{dt^{n+k}} \frac{1}{\ell!}\frac{d^\ell}{du^\ell}t^{k+2n}u^{n+k}
\left( \frac{(1-\frac{x^2}{t u})^{k+2n+1}}{(1-x)^{n-\ell+1}}\Bigg\vert_{x^{2k+2n}}\right)\Bigg\vert_{t=u=1}\\
&=&\frac{1}{(n+k)!}\frac{d^{n+k}}{dt^{n+k}} \frac{1}{\ell!}\frac{d^\ell}{du^\ell}
t^{-k}u^{-n-k}\left(\frac{(1-t u x^2)^{k+2n+1}}{(1-t u x)^{n-\ell+1}}\Bigg\vert_{x^{2k+2n}}\right)
\Bigg\vert_{t=u=1}
\end{eqnarray*}
As before it is readily seen that the quantity
$$
\frac{1}{(n+k)!}\frac{d^{n+k}}{dt^{n+k}} \frac{1}{\ell!}\frac{d^\ell}{du^\ell}
t^{-k}u^{-n-k}\frac{(1-t u x^2)^{k+2n+1}}{(1-t u x)^{n-\ell+1}}
\Bigg\vert_{t=u=1}
$$
is a polynomial of $x$ as the denominator at $t=u=1$ always divides the numerator. The degree of this polynomial
is $2(k+2n+1)-(n-\ell+1)=2k+3n+\ell+1$. The desired result is obtained by an expansion around $x\to\infty$:
\begin{eqnarray*}&&t^{-k}u^{-n-k}\frac{(1-t u x^2)^{k+2n+1}}{(1-t u x)^{n-\ell+1}}\\
&&\qquad \qquad=(-1)^{n+k+\ell}\sum_{i=0}^{k+2n+1}\sum_{j\geq 0}t^{n+\ell-i-j}u^{\ell-i-j} {k+2n+1\choose i}{n-\ell+j\choose n-\ell}
x^{2k+3n+\ell+1-2i-j}
\end{eqnarray*}
and extracting the coefficient of $x^{2n+2k}$, namely imposing $2i+j=n+\ell+1$. This yields:
$$g_{n,k,\ell}=(-1)^{n+k+\ell}\frac{1}{(n+k)!}\frac{d^{n+k}}{dt^{n+k}} \frac{1}{\ell!}\frac{d^\ell}{du^\ell}\sum_{i=0}^{n} t^{i-1}u^{i-n-1}
{k+2n+1\choose i}{2n+1-2i\choose n-\ell}\Bigg\vert_{t=u=1}
$$
Note that each term with $i\geq 1$ yields zero after differentiation w.r.t. $t$, as $i-1<n<k+n$. We are left with the contribution of $i=0$,
which reads:
$$g_{n,k,\ell}=(-1)^{n+k+\ell}{2n+1\choose n-\ell}\frac{1}{(n+k)!}\frac{d^{n+k}}{dt^{n+k}} \frac{1}{\ell!}\frac{d^\ell}{du^\ell} t^{-1}u^{-n-1}\Bigg\vert_{t=u=1}=
{2n+1\choose n-\ell}{n+\ell\choose \ell}
$$
and the Lemma follows.
\end{proof}

%
%
%
%

We deduce that for $n\geq \ell$, we have:
$$\frac{{\tilde U}_{n,n}}{{U}_{n,n}}=
\frac{(n+\ell+1)\frac{(2n+2k)!n!\ell!(n-\ell)!}{(k+2n)!(k+2n+1)!}{2n+1\choose n-\ell}{n+\ell\choose \ell}}{\frac{(2n+1)!\,(2n+2k)!}{(k+2n+1)!\, (k+2n)!}}=1
$$ 
To identify this with the statement of Theorem \ref{onepthm} for $n\geq \ell$, we need to show that Eq.~\eqref{goodformula}  
gives $H(n,k;\ell)=1$ for $n\geq\ell\geq 0$. This is a consequence of the following:

\begin{lemma}\label{slem}
The following identity holds  for all $n\geq \ell\geq 0$:
$$ 
\sum_{s=0}^n {n+\ell+1\choose 2n+1-2s}{2n+k-s\choose n+\ell}={2n+2k\choose n+\ell}
$$
\end{lemma}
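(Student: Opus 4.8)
The plan is to prove the identity by the same generating-function machinery used in the preceding lemmas: represent each binomial coefficient as a coefficient extraction, collapse the sum over $s$ into a single convolution via Lemma~\ref{prodlem}, and then read off the answer in closed form. Working directly with the statement (no reindexing is really needed, though the substitution $s\to n-s$ does produce the slightly tidier summand $\binom{n+\ell+1}{2s+1}\binom{n+k+s}{n+\ell}$ if one prefers it).

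First I would rewrite the two binomials using Lemma~\ref{fourwaylem}. For the first factor I use the symmetry $\binom{n+\ell+1}{2n+1-2s}=\binom{n+\ell+1}{2s+\ell-n}=(1+x)^{n+\ell+1}\big\vert_{x^{2s+\ell-n}}$, so that $s$ enters only through an $x^{2s}$ dependence. For the second factor I use $\binom{2n+k-s}{n+\ell}=\frac{1}{(1-y)^{n+\ell+1}}\big\vert_{y^{\,n+k-\ell-s}}$, so that $s$ enters through a $y^{-s}$ dependence. Then, exactly as in the proof of Lemma~\ref{helplem} (and the analogous computation for $g_{n,k,\ell}$), I set $y=x^2$, turning the second extraction into $\frac{1}{(1-x^2)^{n+\ell+1}}\big\vert_{x^{2(n+k-\ell-s)}}$, so that $s$ now appears as an $x^{-2s}$ dependence.

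The key observation is that the two exponents, $2s+\ell-n$ and $2(n+k-\ell-s)$, sum to the $s$-independent value $n-\ell+2k$. Hence the sum over $s$ is literally a Cauchy convolution of coefficients, and Lemma~\ref{prodlem} gives
\[ \sum_{s}\binom{n+\ell+1}{2n+1-2s}\binom{2n+k-s}{n+\ell}=\left.\frac{(1+x)^{n+\ell+1}}{(1-x^2)^{n+\ell+1}}\right|_{x^{\,n-\ell+2k}}=\left.\frac{1}{(1-x)^{n+\ell+1}}\right|_{x^{\,n-\ell+2k}}, \]
where in the last step I used the telescoping $1-x^2=(1-x)(1+x)$. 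By Lemma~\ref{fourwaylem} the final coefficient is $\binom{(n+\ell)+(n-\ell+2k)}{n+\ell}=\binom{2n+2k}{n+\ell}$, which is the claimed right-hand side.

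The hard part — really the only delicate point — will be matching the finite range $s=0,\dots,n$ of the sum to the full convolution produced by Lemma~\ref{prodlem}. I would settle this by a parity argument: since $\frac{1}{(1-x^2)^{n+\ell+1}}$ has only even-degree terms, the full convolution at total degree $n-\ell+2k$ picks out of $(1+x)^{n+\ell+1}$ precisely those coefficients $x^{2s+\ell-n}$ whose degree has the parity of $n+\ell$, and these are exactly the terms of our sum. The remaining values of $s$ contribute zero on both sides automatically: for $2s+\ell-n<0$ (small $s$) or $2s+\ell-n>n+\ell+1$ (i.e. $s>n$) the coefficient of $(1+x)^{n+\ell+1}$ vanishes, matching the vanishing of $\binom{n+\ell+1}{2n+1-2s}$. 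Once this bookkeeping of parity and range is in place, the evaluation above is immediate.
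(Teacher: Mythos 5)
Your proposal is correct and takes essentially the same route as the paper's own proof: both represent the two binomials via Lemma~\ref{fourwaylem}, substitute $y\to x^2$ so that the exponents sum to the $s$-independent value $n-\ell+2k$, collapse the sum with Lemma~\ref{prodlem}, and simplify $(1+x)/(1-x^2)$ to $1/(1-x)$ to read off $\binom{2n+2k}{n+\ell}$. Your parity bookkeeping for matching the finite range $s\in[0,n]$ to the full convolution is just a slightly more explicit version of the paper's remark that the contributing terms lie in $[\tfrac{n-\ell}{2},n]\subset[0,n+\ell+1]$, and it correctly isolates where the hypothesis $n\geq\ell$ is used.
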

\begin{proof}
We write:
\begin{eqnarray*}
\sum_{s=0}^n {n+\ell+1\choose 2n+1-2s}{2n+k-s\choose n+\ell}
&=& \sum_{s=0}^n  (1+x)^{n+\ell+1}\Big\vert_{x^{\ell-n+2s}} 
\frac{1}{(1-y)^{n+\ell+1}}\Bigg\vert_{y^{n+k-\ell-s}}\\
&=&\sum_{s=0}^n  (1+x)^{n+\ell+1}\Big\vert_{x^{\ell-n+2s}} 
\frac{1}{(1-y^2)^{n+\ell+1}}\Bigg\vert_{y^{2(n+k-\ell-s)}}\\
&=&\left(\frac{1+x}{1-x^2}\right)^{n+\ell+1}\Bigg\vert_{x^{2k+n-\ell}}\\
&=&\frac{1}{(1-x)^{n+\ell+1}}\Bigg\vert_{x^{2k+n-\ell}}={2n+2k\choose n+\ell}
%
\end{eqnarray*}
where we have performed a change of variables $y\to y^2$ in the second line, and 
where to go from the second to the third line we noticed that the summation over $s$ can be relaxed to all 
$s\in[ 0,n+\ell+1]$, as the actual range of the contributing terms is for $s\in [\frac{n-\ell}{2},n]\subset [ 0,n+\ell+1]$, and applied Lemma \ref{prodlem}
(note that this would be wrong if $\ell>n$).
The result of the Lemma follows.
\end{proof}
This completes the proof of Theorem \ref{onepthm} in the case $n\geq \ell$.

\subsubsection{Case $\ell>n$}
We now consider the case $\ell>n$. 

We may write:
\begin{eqnarray}\frac{{\tilde U}_{n,n}}{U_{n,n}}
&=&(n+\ell+1)\frac{n!\ell!}{(2n+1)!} \nonumber \\
&&\quad \times \sum_{r=0}^n (-1)^{n+r}{n+k+r\choose n+k}{k+r\choose l}
{k+2n+1\choose n-r}\frac{(2k+2r+n-\ell)!}{(2k+2r)!}\label{ratioU}
\end{eqnarray}
In the following, we use the standard notation for Pochhammer symbols:
$$ (x)_m=x(x-1)\cdots (x-m+1)\ .$$
Introduce the following quantities, using respectively the expression of $\frac{{\tilde U}_{n,n}}{U_{n,n}}$ via 
Eq.~\eqref{ratioU} and
that of the sought after result for $H(n,k;\ell)$ via Eq.~\eqref{goodformula}:
\begin{eqnarray}
&&\qquad\qquad \qquad \qquad P_{n,\ell}(k):=\frac{(2n+1)!(n+\ell)! {2n+2k \choose n+\ell}}{(n+\ell+1)\, 
\ell! {n+k\choose \ell}}\frac{{\tilde U}_{n,n}}{U_{n,n}}\label{defP} \\
&=& \sum_{r=0}^n (-1)^{n-r}{n\choose r}\frac{(2k+2r+n-\ell)!}{(2k+n-\ell)!} \frac{(n+k+r)!}{(n+k)!}
\frac{(n+k-\ell)!}{(k+r-\ell)!} \frac{(k+2n+1)!}{(n+k+r+1)!}\frac{(2n+2k)!}{(n+k)!}\frac{(k+r)!}{(2k+2r)!}\nonumber \\
&=& \sum_{r=0}^n (-4)^{n-r} {n\choose r} (2k+2r+n-\ell)_{2r}\, (n+k+r)_r\,(n+k-\ell)_{n-r}\, (k + 2n + 1)_{n-r}\,  (n +k- \frac{1}{2} )_{n-r}\nonumber 
\end{eqnarray}
and
\begin{eqnarray}
Q_{n,\ell}(k)&:=&\frac{(2n+1)!(n+\ell)! {2n+2k \choose n+\ell}}{(n+\ell+1)\, \ell! {n+k\choose \ell}}H(n,k,\ell)\nonumber \\
&=&\frac{(2n+1)!}{n+\ell+1} \sum_{s=0}^n {n+\ell+1\choose 2n+1-2s}\, (k + 2n -s)_{n-s}\, (n +k-\ell)_{s}\label{defQ}
\end{eqnarray}
$P_{n,\ell}$ and $Q_{n,\ell}$ are two polynomials of $k$, with a priori degree $\deg(P_{n,\ell})=3n$ and $\deg(Q_{n,\ell})=n$, but we have the following:
\begin{thm}\label{pols}
The polynomials $P_{n,\ell}$ and $Q_{n,\ell}$ have same degree $n$. Moreover, they coincide on all the points $k=-j-n$, for
$j=1,2,...,n+1$, where they take the values:
\beq\label{values}
P_{n,\ell}(-j-n)=Q_{n,\ell}(-j-n)=(-1)^n \frac{(2 n + 1)!(j- 1)!(n+ 2 j + \ell  -1)!}{(n+\ell + 1)  (2 j - 1)!  (\ell + j - 1)! }
\eeq
\end{thm}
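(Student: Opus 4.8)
The plan is to exploit that $P_{n,\ell}$ and $Q_{n,\ell}$ are polynomials in $k$: once we know that both have degree $n$ and that they agree at the $n+1$ distinct points $k=-j-n$, $j=1,\dots,n+1$, they must coincide, and the theorem's three assertions follow at once. The degree of $Q_{n,\ell}$ is immediate from \eqref{defQ}: every summand $\binom{n+\ell+1}{2n+1-2s}(k+2n-s)_{n-s}(n+k-\ell)_s$ has degree exactly $(n-s)+s=n$ in $k$ with positive leading coefficient $\binom{n+\ell+1}{2n+1-2s}$, so the leading coefficients cannot cancel and $\deg Q_{n,\ell}=n$.

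The genuinely delicate point, and the main obstacle, is to show that $P_{n,\ell}$, whose summand in \eqref{defP} has a priori degree $3n$ in $k$, in fact has degree only $n$; equivalently, the top $2n$ coefficients of the $r$-sum must cancel. A crude estimate via the $n$-th finite difference $\sum_r(-1)^{n-r}\binom{n}{r}(\cdots)$ does not suffice, since the coefficient of $k^{3n-p}$ is a priori a polynomial of degree up to $2p$ in $r$, far exceeding $n-1$. The plan is instead to collapse the $r$-sum exactly, in the spirit of Lemma \ref{helplem} and of the $g_{n,k,\ell}$ computation. First I would use the duplication identity $(2a)_{2m}=4^m(a)_m(a-\tfrac12)_m$ to absorb the powers $4^{n-r}$ together with the half-integer factor $(n+k-\tfrac12)_{n-r}$ and the factor $(2k+2r+n-\ell)_{2r}$ into ordinary Pochhammer symbols; then represent the remaining binomial and factorial factors through the coefficient-extraction and iterated-derivative formulas of Lemmas \ref{fourwaylem} and \ref{iterdetlem}, so that the alternating $r$-sum becomes, via Lemma \ref{prodlem}, a single coefficient extraction in an auxiliary variable. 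Expanding the resulting rational expression around $x=\infty$ as in Lemma \ref{helplem} should then exhibit that all powers of $k$ above $k^n$ drop out, leaving $\deg P_{n,\ell}=n$. I expect the bookkeeping of the half-integer shifts produced by the duplication step to be the most error-prone part.

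It remains to evaluate both polynomials at $k=-j-n$. For $P_{n,\ell}$ the evaluation collapses to a single term: at $k=-j-n$ one has $(n+k+r)_r=(r-j)_r$, which vanishes as soon as $r\ge j$, while $(k+2n+1)_{n-r}=(n-j+1)_{n-r}$ vanishes as soon as $r\le j-2$; hence only $r=j-1$ survives for every $j\in\{1,\dots,n+1\}$ (the endpoint $j=n+1$ being killed by the second factor alone, since there $(k+2n+1)_{n-r}=(0)_{n-r}$). Substituting $r=j-1$ and simplifying the resulting product of Pochhammer symbols — the signs count to $(-1)^n$ and the factorials reorganize using $(2n+1)!!=(2n+1)!/(2^n n!)$ — yields exactly \eqref{values}. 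For $Q_{n,\ell}$ the factor $(k+2n-s)_{n-s}=(n-j-s)_{n-s}$ forces $s\ge n-j+1$, so only the $j$ terms $s=n-j+1,\dots,n$ remain; writing $s=n-i$ and $a=j-1-i$ reduces $Q_{n,\ell}(-j-n)$ to the Vandermonde-type convolution
\beq
\sum_{a=0}^{j-1}\binom{m}{2j-1-2a}\binom{m+a-1}{a}=\binom{m+2j-2}{2j-1},\qquad m:=n+\ell+1,
\eeq
which I would prove by the same device as Lemma \ref{slem}: substitute $y\to x^2$ to turn $\binom{m+a-1}{a}=\frac1{(1-y)^m}|_{y^a}$ into $\frac1{(1-x^2)^m}|_{x^{2a}}$, apply Lemma \ref{prodlem} to obtain $\frac{(1+x)^m}{(1-x^2)^m}|_{x^{2j-1}}=\frac1{(1-x)^m}|_{x^{2j-1}}$, and read off the binomial. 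A short factorial simplification then matches this value with \eqref{values}.

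With $\deg P_{n,\ell}=\deg Q_{n,\ell}=n$ and agreement at the $n+1>n$ distinct points $k=-j-n$, the two polynomials coincide identically, which proves the theorem and, through \eqref{defP}--\eqref{defQ}, establishes \eqref{goodformula} in the remaining case $\ell>n$.
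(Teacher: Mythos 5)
Your overall architecture --- degree bounds plus agreement at the $n+1$ points $k=-j-n$, hence coincidence of the two polynomials --- is exactly the paper's, and both evaluations are carried out the same way the paper does them: at $k=-j-n$ the factors $(n+k+r)_r=(r-j)_r$ and $(k+2n+1)_{n-r}=(n-j+1)_{n-r}$ isolate the single term $r=j-1$ in $P_{n,\ell}$, and the factor $(k+2n-s)_{n-s}=(n-j-s)_{n-s}$ truncates the $s$-sum in $Q_{n,\ell}$ to $s\geq n-j+1$. Your convolution identity $\sum_{a=0}^{j-1}{m\choose 2j-1-2a}{m+a-1\choose a}={m+2j-2\choose 2j-1}$ with $m=n+\ell+1$ is precisely the paper's Lemma \ref{tlem} after the reindexing $s=n-i$, $a=j-1-i$, and your proposed proof of it ($y\to x^2$ followed by Lemma \ref{prodlem}) is the paper's proof verbatim. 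Your remark that $\deg Q_{n,\ell}=n$ with no leading-coefficient cancellation is correct and slightly more explicit than the paper.

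The genuine gap is the step you yourself flag as delicate: $\deg P_{n,\ell}= n$ when $\ell>n$. You leave this as an unexecuted plan (duplication identity, then a Lemma \ref{helplem}-style exact collapse of the $r$-sum, then expansion at $x=\infty$), and that plan is doubtful for two reasons. First, the coefficient-extraction representations powering Lemma \ref{helplem} and the $g_{n,k,\ell}$ computation require nonnegative parameters --- e.g. ${2k+2r+n-\ell\choose n-\ell}$ is representable as a coefficient of $(1-y)^{-(n-\ell+1)}$ only when $\ell\leq n$, a restriction the paper itself flags in the proof of Lemma \ref{slem} (``this would be wrong if $\ell>n$''); moreover, in those computations $k$ enters as a derivative order or exponent bound, i.e. as a fixed nonnegative integer, whereas a degree statement about $P_{n,\ell}$ requires $k$ to be treated as a formal variable (or closed-form values at more than $3n$ integer points). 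Second, if the exact collapse of \eqref{ratioU} did go through for $\ell>n$, it would establish \eqref{goodformula} directly and render the entire interpolation at $k=-j-n$ superfluous --- the interpolation strategy exists precisely because that direct resummation is obstructed. The paper instead extracts the degree from the already-settled case $\ell\leq n$: there $\tilde U_{n,n}/U_{n,n}=1$ yields the closed form $P_{n,\ell}(k)=\frac{(2n+1)!}{n+\ell+1}\,\frac{(2n+2k)_{n+\ell}}{(n+k)_\ell}$, manifestly a polynomial of degree $n$, and it then argues that the cancellations reducing the a priori degree $3n$ to $n$, being polynomial in $\ell$, persist for $\ell>n$. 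That inference is itself terse --- the coefficient of $k^m$ in \eqref{defP} has $\ell$-degree up to $3n-m$, so vanishing at the $n+1$ points $\ell=0,\dots,n$ only forces $m\geq 2n$ coefficients to vanish outright --- but it is anchored to a proven identity, and you should substitute it (or a sharpened version of this interpolation in $\ell$) for your collapse plan rather than attempt the direct resummation.
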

\begin{proof}
Let us first compute the degree of $P_{n,\ell}$. First note that the expression \eqref{defP} is valid for both $\ell>n$ and $\ell\leq n$.
But in the latter case, we have found that 
$$\frac{{\tilde U}_{n,n}}{U_{n,n}}=\frac{(n+\ell+1)\, \ell! {n+k\choose \ell}}{(2n+1)!(n+\ell)! {2n+2k \choose n+\ell}} P_{n,\ell}(k)=1$$
which implies that 
$$P_{n,\ell}(k)=\frac{(2n+1)!}{n+\ell+1} \, \frac{(2n+2k)_{n+\ell}}{(n+k)_\ell}$$ 
is a polynomial of degree $n$, as for $n\geq \ell$ the 
denominator always divides the numerator. We may infer that all the cancellations that reduce the degree from $3n$ to $n$ still occur
when $\ell>n$, and the degree of $P_{n,\ell}$ is always $n$. 
Let us now compute using \eqref{defP} for $k=-j-n$, $1\leq j\leq n+1$:
$$
P_{n,\ell}(-j-n)=\sum_{r=0}^n (-4)^{n-r} {n\choose r} (2r-2j-n-\ell)_{2r}\, (r-j)_r\,(-j-\ell)_{n-r}\, (n-j + 1)_{n-r}\,  (-j- \frac{1}{2} )_{n-r}
$$
Vanishing contributions to the sum arise from the terms 
$(r-j)_r$ if $r\geq j$ and $(n-j + 1)_{n-r}$ if $r<j-1$. We conclude that only the term $r=j-1$ survives, with value:
\begin{eqnarray*}
P_{n,\ell}(-j-n)&=&(-4)^{n-j+1} {n\choose j-1} (-2-n-\ell)_{2j-2}\\
&&\qquad \times  (-1)_{j-1}\,(-j-\ell)_{n-j+1}\, (n-j + 1)_{n-j+1}\,  (-j- \frac{1}{2} )_{n-j+1}\\
&=&(-2)^{n-j+1} {n\choose j-1}\frac{(2j+n+\ell-1)!}{(n+\ell+1)!}  \\
&& \qquad \times (-1)^{j-1} (j-1)!(-1)^{n-j+1}\frac{(n+\ell)!}{(j+\ell-1)!} (n-j+1)! (-1)^{n-j+1}\frac{(2n+1)!!}{(2j-1)!!}\\
&=&(-1)^n \frac{(2n+1)!(j-1)!(2j+n+\ell-1)!}{(n+\ell+1)(2j-1)!(j+\ell-1)!}
\end{eqnarray*}
which proves \eqref{values} for $P_{n,\ell}$. Finally, let us compute using \eqref{defQ} for $k=-j-n$, $1\leq j\leq n+1$:
\begin{eqnarray*}
Q_{n,\ell}(-j-n)&=&\frac{(2n+1)!}{n+\ell+1} \sum_{s=0}^n {n+\ell+1\choose 2n+1-2s}\, (n-j -s)_{n-s}\, (-j-\ell)_{s}
\end{eqnarray*}
The vanishing contributions to the sum correspond to $n-j \geq s$, hence we may rewrite:
\begin{eqnarray*}
Q_{n,\ell}(-j-n)&=&(-1)^n\frac{(2n+1)!}{n+\ell+1} \sum_{s=n-j+1}^n {n+\ell+1\choose 2n+1-2s}\,\frac{(j-1)!}{(j+s-n-1)!}\, \frac{(j+\ell+s-1)!}{(j+\ell-1)!}\\
&=&(-1)^n\frac{(2n+1)!(j-1)!}{(n+\ell+1)(j+\ell-1)!}\sum_{s=n-j+1}^n {n+\ell+1\choose 2n+1-2s}\,\frac{(j+\ell+s-1)!}{(j+s-n-1)!}\\
&=&(-1)^n\frac{(2n+1)!(j-1)!(2j+n+\ell-1)!}{(n+\ell+1)(2j-1)!(j+\ell-1)!}\\
&&\qquad \times \ 
\frac{1}{{2j+n+\ell-1\choose n+\ell}}\sum_{s=n-j+1}^n {n+\ell+1\choose 2n+1-2s}\, {j+\ell+s-1\choose n+\ell}
\end{eqnarray*}
To conclude, we need a variant of Lemma \ref{slem}, proved in the same way:
\begin{lemma}\label{tlem}
We have the identity:
$$\frac{1}{{2j+n+\ell-1\choose n+\ell}}\sum_{s=n-j+1}^n {n+\ell+1\choose 2n+1-2s}\, {j+\ell+s-1\choose n+\ell}=1$$
for $\ell>n$ and $1\leq j\leq n+1$.
\end{lemma}
This is proved by computing:
\begin{eqnarray*}
\sum_{s=n-j+1}^n {n+\ell+1\choose 2n+1-2s}\, {j+\ell+s-1\choose n+\ell}&=&\sum_{s=n-j+1}^n (1+x)^{n+\ell+1}\vert_{x^{2n+1-2s}}
\frac{1}{(1-x^2)^{n+\ell+1}}\Bigg\vert_{x^{2(j+s-n-1)}} \\
&=& \frac{1}{(1-x)^{n+\ell+1}}\Bigg\vert_{x^{2j-1}}={2j+n+\ell-1\choose 2j-1}
\end{eqnarray*}
by application of Lemma \ref{prodlem}.
We deduce that \eqref{values} holds for $Q_{n,\ell}$ as well, and Theorem \ref{pols} follows.
\end{proof}
This completes the proof of Theorem \ref{onepthm} in the case $\ell>n$.




\subsection{Tangent method and arctic curve}

\begin{figure}
\centering
\includegraphics[width=12.cm]{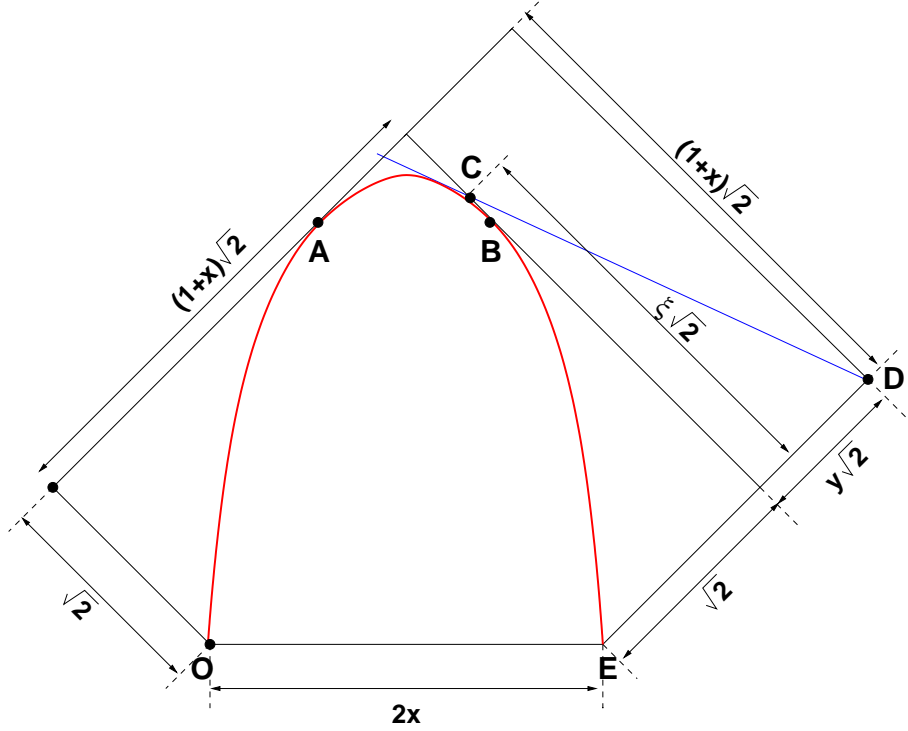}
\caption{\small The scaling of the path domain in units of $n\to\infty$, with $k=x n$, $p=y n$, $\ell=\xi n$,
and the arctic ellipse. The points $O,A,B,C,D$ have respective coordinates
$(0,0)$, $\big(\frac{2x}{2+x},\frac{4(1+x)}{2+x}\big)$, $\big(\frac{2x(1+x)}{2+x},\frac{4(1+x)}{2+x}\big)$, $(2x+1-\xi,1+\xi),(2x+1+y,1+y)$. 
The line $CD$ is tangent to the arctic curve.}
\label{fig:contg}
\end{figure}

Using the results of previous section, we are now ready to apply the saddle-point approximation to the normalized sum:
\begin{equation}\label{total}
\frac{N(n,k,p)}{Z(n,k;0)}=\sum_{\ell=0}^{n+k} H(n,k;\ell)\, Y(p;\ell)
\end{equation}

The tangent method described in Section \ref{tansec}, 
applied to the present problem, is illustrated in Fig.~\ref{fig:contg}. It consists of finding which value of 
$\ell$ dominates the contribution to the sum
\eqref{total}, by the saddle-point method. The corresponding
parametric family of lines $CD$ is tangent to the arctic curve, which is uniquely defined as their envelope.

We wish to evaluate the quantity $N(n,k,p)/Z(n,k;0)$ at large $n,k,p$ via the saddle-point method. Write
$k=x n$, $p=y n$, $n$ large. The summation in \eqref{total} is approximated by the integral
$$\frac{N(n,x n,y n)}{Z(n,x n;0)}\simeq n \int_{0}^{1+x} d\xi \, H(n,xn;\xi n) \, Y(yn;\xi n) $$
by writing $\ell=\xi n$ and $\sum_{\ell=0}^{n+k} \to n\int_{0}^{1+x} d\xi $.
For large $n$, using the expression \eqref{Yvalue},  we have by the Stirling formula:
\begin{equation}\label{Yas}
Y(y n;\xi n)\simeq {n(y+\xi)\choose n\xi } \simeq e^{n S_0(y,\xi)}
\end{equation}
where
\begin{equation}\label{Szero} S_0(y,\xi)=(y+\xi){\rm Log}(y+\xi)-\xi {\rm Log}(\xi) -y {\rm Log}(y) 
\end{equation}

Similarly, we write the summation \eqref{goodformula} as:
\begin{eqnarray*}
H(n,x n;\xi n)&\simeq&  \frac{(n+\xi n)!(2x n+ n-\xi n)!}{(2n+2x n)!} 
\, \\
&&\qquad\qquad \times\ n \int_{0}^1 d\sigma \frac{(2n+x n-\sigma n)!}{(\xi n+2\sigma n-n)!(n+x n-\xi n-\sigma n)!(2n-2\sigma n)!}\\
&\simeq& \int_{0}^1 d\sigma\, e^{S_1(x,\xi,\sigma)}
\end{eqnarray*}
where we have set $s=\sigma n$ and replaced $\sum_{s=0}^n\to n\int_0^1 d\sigma$, and:
\begin{eqnarray} S_1(x,\xi,\sigma)&=&(1+\xi){\rm Log}(1+\xi)+(2x+1-\xi) {\rm Log}(2x+1-\xi) 
-(2+2x) {\rm Log}(2+2x) \nonumber \\
&&+(2+x-\sigma) {\rm Log}(2+x-\sigma) -(\xi+2\sigma-1) {\rm Log}(\xi+2\sigma-1)\nonumber \\
&&-(1+x-\xi-\sigma){\rm Log}(1+x-\xi-\sigma)-(2-2\sigma){\rm Log}(2-2\sigma)\label{Sone}
\end{eqnarray}

Assembling all the pieces, we are left with saddle-point equations for the action
$S(x,y,\xi,\sigma)=S_0(y,\xi)+S_1(x,\xi,\sigma)$, whereas the variables $\xi,\sigma$
must satisfy:
\begin{equation}\label{domain}
0\leq \sigma \leq 1, \quad  0\leq \xi \leq 1+x, \quad \xi+\sigma\leq 1+x,\quad 
\xi+2\sigma \geq 1
\end{equation}
We find:
\begin{eqnarray*}
\partial_\xi S=0&=& {\rm Log}\left(\frac{(y+\xi)(1+\xi)(1+x-\xi-\sigma)}{\xi(2x+1-\xi)(\xi+2\sigma-1)}\right)\\
\partial_\sigma S=0&=&  
{\rm Log}\left(\frac{4(1-\sigma)^2(1+x-\xi-\sigma)}{(\xi+2\sigma-1)^2 (2+x-\sigma)}\right)
\end{eqnarray*}
The second equation boils down to:
$$\sigma=\frac{(2+3x)-\xi(2+x)}{3+4x-\xi} $$
We see that all conditions \eqref{domain} are automatically satisfied (as $x,y>0$) 
except for the bound $\sigma\geq 0$.
Define
$$\xi^*:=\frac{2+3x}{2+x}=1+\frac{2}{2+x}\, x\in (1,x+1)$$

If $\xi<\xi^*$: then $\sigma>0$, and the first equation gives:
$$y (2x + 1 - \xi)^2 (1 + \xi)=0$$
which has no physical solution. Hence we must have:
$\xi>\xi^*$, but then the second equation cannot hold, which means that the variable $\sigma$ is saturated
to be $\sigma=0$. In that case, we end up with the first equation at $\sigma=0$, namely a cubic
relation for $\xi$:
$$A(x,y,\xi):=(y+\xi)(1+\xi)(1+x-\xi)-\xi(2x+1-\xi)(\xi-1)=0$$
with a unique solution $\xi(x,y)$ in the physical domain \eqref{domain}.

We note  that
$A(x,0,\xi)=\xi((2+3x)-(2+x)\xi)$, with the only admissible solution $\xi=\xi^*$.
We deduce that the arctic curve is tangent to the right boundary line 
(with equation $v=2+2x-u$ in the $(u,v)$ plane) at the point
$$B=(2x+1-\xi^*,1+\xi^*)=\left(\frac{2x(1+x)}{2+x},\frac{4(1+x)}{2+x}\right) $$
Similarly,
for large $y$ we have $\lim_{y\to\infty} y^{-1}A(x,y,\xi)=(1+\xi)(1+x-\xi)$, with only admissible solution
$\xi=1+x$. We deduce that the arctic curve is tangent to the left boundary 
(with equation $v=2x+u$ in the $(u,v)$ plane) at some point $A$.
$A$ and $B$ are represented in Fig. \ref{fig:contg} in scale of $n$.

Assuming the tangent to the sought-after arctic curve is the parametric line $(CD)$:
$v=a(x,y)u +b(x,y)$, with $\xi=\xi(x,y)$, we have
$$1+y=(2x+1+y)a+b,\qquad 1+\xi=(2x+1-\xi)a+b$$
hence
$$a=\frac{y-\xi(x,y)}{y+\xi(x,y)},\qquad b=2\frac{(1+x+y)\xi(x,y)-y}{y+\xi(x,y)}$$
Note that $A(x,y,\xi)=0$ may be inverted as
$$y=y(x,\xi):=\frac{ \xi ((2+x)\xi-(2+3x ))}{(1+\xi)(1+x-\xi)}
$$
The envelope of the parametric family of lines 
$F(u,v,\xi,x):=v-a(x,y(\xi))u-b(x,y(\xi))$ is the solution of the system
$$F(u,v,\xi,x)=0\quad {\rm and}\quad \partial_\xi F(u,v,\xi,x)=0 $$
Eliminating $\xi$ between these equations, we find the ellipse:
\begin{equation}\label{arctic}
x^2v^2-4(1+x)u(2x-u)=0
\end{equation}
Strictly speaking, we have only computed the portion of arctic curve between the two tangency points $A$ and $B$,
namely for $u\in [\frac{2x}{2+x},\frac{2x(1+x)}{2+x}]$.
The entire half-ellipse with $v\geq 0$ turns out to be the full arctic curve for our problem, as shown in next section.

\section{An equivalent rhombus tiling problem}\label{dycksectwo}

In this section we present an equivalent rhombus tiling problem to that of non-intersecting Dyck paths
of the previous section, and use it to derive the remaining portions 
of the arctic ellipse \eqref{arctic}.

\subsection{The equivalent tiling problem}

\begin{figure}
\centering
\includegraphics[width=13.cm]{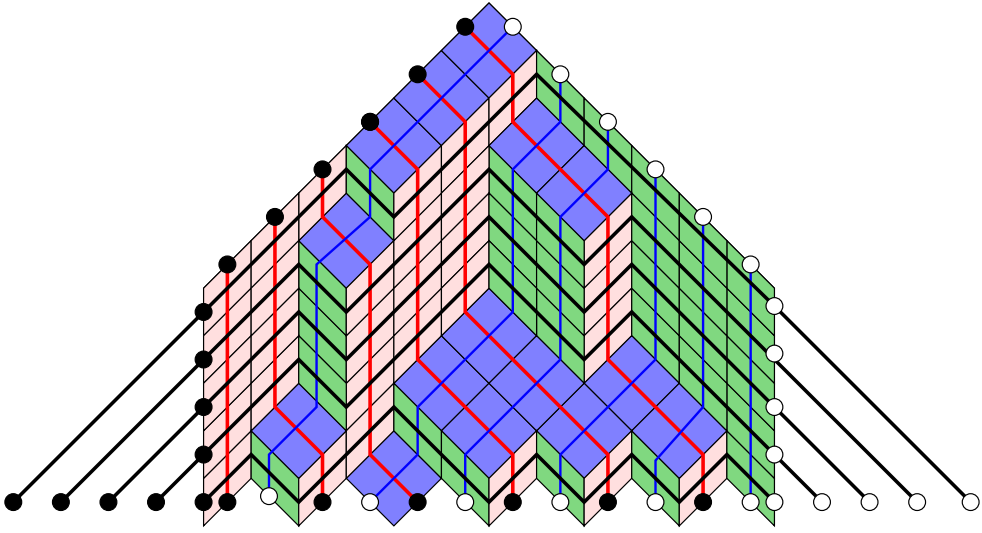}
\caption{\small The rhombus tiling problem corresponding to the non-intersecting Dyck paths of Sections 1 and 2, here with $n=4$ and $k=6$.
We have represented the Dyck paths (black) as well as two other families of paths (red and blue), each determining 
the tiling completely.}
\label{fig:tiling}
\end{figure}

The problem studied so far has an interpretation in terms of rhombus tilings of a ``half-hexagon", as illustrated in Fig. \ref{fig:tiling}. 
Such a tiling is entirely determined by either of three families of non-intersecting paths, which 
connect two parallel edges in each tile (the three families of paths are represented in black, blue and red respectively).

We are now in position to complete the previous study of the arctic curve for the Dyck path problem. 
We simply have to apply the tangent method for the other two families of paths. We concentrate on the red paths,
as the treatment of the blue paths is completely identical, up to reflection w.r.t. to the vertical line through $(k,0)$.

\subsection{Exact enumeration}

\begin{figure}
\centering
\includegraphics[width=13.cm]{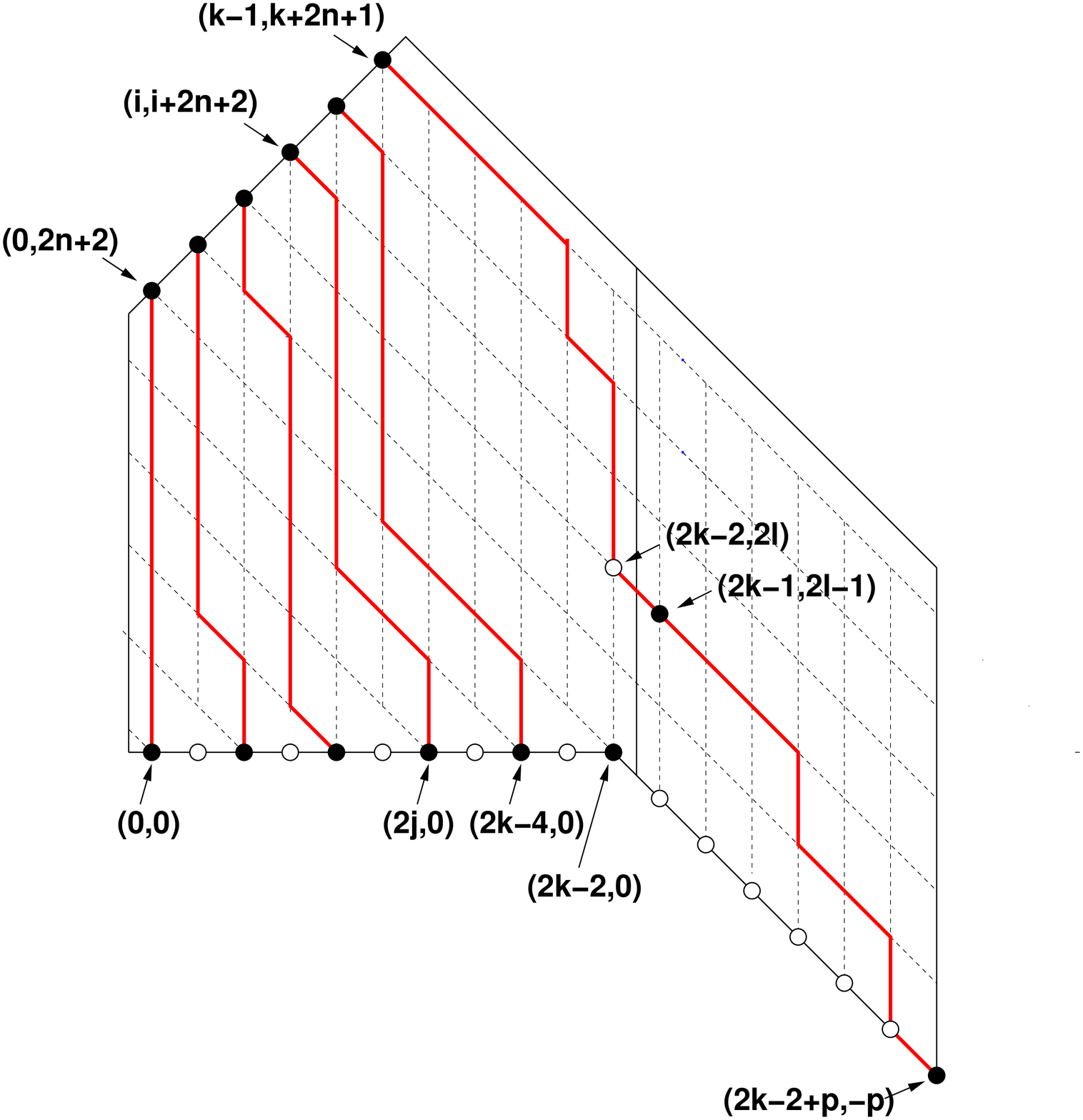}
\caption{\small The tangent method for the red paths of Fig. \ref{fig:tiling}.}
\label{fig:tilingtwo}
\end{figure}

We may now consider the red path family depicted in Fig. \ref{fig:tilingtwo}
that has starting points at $(i,i+2n+2)$, $i=0,1,...,k-1$ and endpoints at
$(2j,0)$, $j=0,1,...,k-2$ as well as the remote point $(2k-2+p,-p)$.
Note that the vertical steps are $(0,-2)$ and the diagonal ones are $(1,-1)$.
Let us denote by $N_2(n,k,p)$ the total number of path configurations in this family.
As before, we introduce the numbers $Z_2(n,k;\ell)$ and $Y_2(p;\ell)$ that count respectively the non-intersecting
family from $(i,i+2n+2)$, $i=0,1,...,k-1$ to $(2j,0)$, $j=0,1,...,k-2$ as well as $(2k-2,2\ell)$, and the configurations of
a single path from $(2k-1,2\ell-1)$ to $(2k-2+p,-p)$.
We have the decomposition formula:
\begin{equation}\label{maintwo}
N_2(n,k,p)=\sum_{\ell=0}^{n+1} Z_2(n,k;\ell)\,Y_2(p;\ell)
\end{equation}

The latter number is easily found to be:
$$Y_2(p;\ell)={p+\ell-1\choose \ell} =Y(p;\ell)$$
as the path has $p-1$ diagonal and $\ell$ vertical steps.

The number $Z_2(n,k;0)$ is given via Gessel-Viennot by the following determinant:
$$
Z_2(n,k;0)=\det(A), \quad A_{i,j}={j+n+1 \choose 2j-i},\quad  i,j=0,1,...,k-1
$$

\begin{thm}\label{enu2}
The number $Z_2(n,k;0)$ is given by:
$$Z_2(n,k;0)=\prod_{i=0}^{k-1} U_{i,i},\qquad U_{i,i}=\frac{(2n+2+2i)!i!}{(2n+2+i)!(2i)!}$$
\end{thm}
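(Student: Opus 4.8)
The plan is to prove Theorem~\ref{enu2} by the same $LU$-decomposition strategy used for Lemma~\ref{lemma:aztec-LU} and Theorem~\ref{enu1}: exhibit a factorization $A=LU$ of the infinite matrix $A_{i,j}=\binom{n+1+j}{2j-i}$ with $L$ lower unitriangular and $U$ upper triangular, read off $\det A=\prod_i U_{i,i}$, and invoke Lemma~\ref{truncLUlem} so that the factorization descends to the $k\times k$ truncation $i,j\in[0,k-1]$. Since only the diagonal entries $U_{i,i}$ enter the determinant, the goal reduces to producing $L$ and checking that $U:=L^{-1}A$ is upper triangular with $U_{i,i}=\frac{(2n+2+2i)!\,i!}{(2n+2+i)!\,(2i)!}$.

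One might first attempt the generating-function route of Lemma~\ref{convolem}, seeking $f_L,f_U$ with $f_A=f_L\star f_U$. However, the ``doubling'' $2j-i$ in the binomial makes $f_A(z,w)=\sum_{i,j}\binom{n+1+j}{2j-i}z^iw^j$ awkward: the sum over $i\geq 0$ retains only the nonnegative-power part of $z^{2j}(1+1/z)^{n+1+j}$, and this truncation (active precisely when $j<n+1$) prevents $f_A$ from collapsing to a simple rational function. I would therefore follow the explicit-matrix route of Theorem~\ref{enu1}. Computing the low-order entries of $L$ by hand (e.g. $L_{i,i}=1$, $L_{i,i-1}=\frac{i(i-1)}{2n+2+i}$, $L_{4,2}=\frac{12}{(2n+5)(2n+6)}$, $L_{5,3}=\frac{60}{(2n+6)(2n+7)}$) one is led to conjecture the closed form
\beq
L_{i,j}=\frac{i!\,(2n+2+j)!}{(i-j)!\,(2j-i)!\,(2n+2+i)!}, \qquad \lceil i/2\rceil\leq j\leq i,
\eeq
and $L_{i,j}=0$ otherwise (so $L_{i,i}=1$, and $L$ is lower unitriangular and banded). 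The next step is to produce the explicit inverse $L^{-1}$ (lower triangular, with alternating signs), expected to be of a comparable factorial type in analogy with the inverse computed in Theorem~\ref{enu1}, and to verify $L^{-1}L=I$ by a Vandermonde-type binomial summation handled with Lemmas~\ref{prodlem}--\ref{iterdetlem}.

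The heart of the argument, and the step I expect to be the main obstacle, is then the identity
\beq
(L^{-1}A)_{i,j}=\sum_{r}(L^{-1})_{i,r}\,\binom{n+1+j}{2j-r}=0 \quad (i>j), \qquad (L^{-1}A)_{i,i}=U_{i,i},
\eeq
the analog of Lemma~\ref{helplem}. I would prove it by the same machinery used there: represent each binomial and each factorial/Pochhammer factor by a coefficient extraction or an iterated-derivative operator (Lemmas~\ref{fourwaylem} and \ref{iterdetlem}), interchange the $r$-summation with those extractions via Lemma~\ref{prodlem} to collapse the sum into the coefficient of a single monomial in one polynomial of an auxiliary variable $x$, and finally read off that coefficient by expanding around $x=\infty$. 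The below-diagonal cancellation should emerge as the vanishing of a coefficient lying beyond the degree of the relevant polynomial, while the diagonal term produces $U_{i,i}$ after a short simplification. The subtlety, exactly as in Lemma~\ref{helplem}, will be controlling the summation ranges so that the ``spurious'' negative-index terms drop out after the derivative/extraction operators are applied.

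Once $U$ is known to be upper triangular, $\det A=\prod_{i=0}^{k-1}U_{i,i}$ follows from Lemma~\ref{truncLUlem}, which is precisely the statement of the theorem.
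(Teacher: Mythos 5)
Your setup coincides with the paper's: it proves Theorem~\ref{enu2} by exactly the $LU$ strategy you describe, and your conjectured
$L_{i,j}=\frac{i!\,(2n+2+j)!}{(i-j)!\,(2j-i)!\,(2n+2+i)!}$ is precisely the matrix the paper uses, so your hand computation of low-order entries landed on the right $L$. Your diagnosis that the convolution route of Lemma~\ref{convolem} is obstructed by the $2j-i$ index is also consistent with the paper, which abandons generating functions in two variables here in favor of explicit matrices. The difficulty is that your proposal stops being a proof exactly at the step you yourself call the heart of the argument: both the explicit inverse (the paper needs $(L^{-1})_{i,j}=(-1)^{i+j}\frac{(j+2n+2)!\,(2i-j-1)!}{(i+2n+2)!\,(i-j)!\,(j-1)!}$, which you only predict to be ``of comparable factorial type'') and, more seriously, the evaluation of $L^{-1}A$ are left as expectations.

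The genuine gap is your claim that the triangularity of $U=L^{-1}A$ and the value of $U_{i,i}$ will follow ``by the same machinery'' as Lemma~\ref{helplem}, i.e.\ by collapsing the sum to the coefficient of a single monomial in one auxiliary variable and reading it off from an expansion around $x=\infty$. That template does not fit this sum, and the paper's proof is structurally different. Here $\sum_m (L^{-1})_{i,m}\binom{n+1+j}{2j-m}$ is represented as a coefficient extraction in \emph{three} variables (one per binomial factor, with the $m$-sum producing a geometric factor $1/(1-\tfrac{xy}{z})$), i.e.\ a triple contour integral; after relocating the $x$-residue to $x=z/y$ one is left with a double integral $R_{y\to 0}$ that is \emph{not} a plain coefficient read-off. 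The paper evaluates it by the residue identity $R_{y\to 0}=-R_{y\to -1}-R_{y\to\infty}$ combined with the symmetry $R_{y\to -1}=R_{y\to 0}$ (via $y\mapsto -1-t$ and comparing with the form obtained by moving the $z$-residue from $0$ to $1$), which yields $R_{y\to 0}=-\tfrac{1}{2}R_{y\to\infty}$; the residue at infinity then vanishes for $i>j$, and for $i=j$ produces $U_{i,i}=\frac{((n+1+i)!)^2\, i!}{(i+2n+2)!\,(2i)!}\binom{2i+2n+2}{i+n+1}=\frac{(2n+2+2i)!\,i!}{(2n+2+i)!\,(2i)!}$. The factor $\tfrac{1}{2}$ arising from this symmetry trick has no counterpart in Lemma~\ref{helplem}, where the answer really is a single coefficient of a single polynomial; a direct transplant of that method, as your plan prescribes, would miss it. So while your scaffolding (and your $L$) is correct, the proposal lacks the key new idea needed to finish, and the anticipated mechanism would fail as stated.
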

\begin{proof}
We proceed as before by $LU$ decomposition of $A$.
The $LU$ decomposition of $A$ uses the lower triangular matrix $L$ with elements:
$$ L_{i,j}=\frac{{2i-2j\choose i-j} {i \choose 2(i-j)}}{{i+2n+2\choose i-j}}=\frac{i! (j+2n+2)!}{(i-j)!(2j-i)!(i+2n+2)!}$$
and the entries of its inverse read:
$$ (L^{-1})_{i,j}=(-1)^{i+j} \frac{(j+2n+2)!(i-1)!}{(i+2n+2)!(j-1)!} {2i-j-1\choose i-j} =(-1)^{i+j}\frac{(j+2n+2)!(2i-j-1)!}{(i+2n+2)!(i-j)!(j-1)!} $$
This is readily checked by computing for $i\geq j$:
\begin{eqnarray*}
\sum_{m=0}^{k-1} (L^{-1})_{i,m}\,L_{m,j}&=&\sum_{m=0}^{k-1} (-1)^{i+m}\frac{(m+2n+2)!(2i-m-1)!}{(i+2n+2)!(i-m)!(m-1)!}
\frac{m! (j+2n+2)!}{(m-j)!(2j-m)!(m+2n+2)!}\\
&=&(-1)^i\frac{(j+2n+2)!}{(i+2n+2)!}\sum_{m=j}^{{\rm Min}(i,2j)} m(-1)^m\, \frac{(2i-m-1)!}{(2j-m)!(i-m)!(m-j)!}
\end{eqnarray*}
If $i=j$, this is clearly equal to $1$ as only $m=i=j$ contributes. If $i>j$, we have:
\begin{eqnarray*}
&&\!\!\!\!\!\!\!\!\!\!\!\!\!\!\!\!\!\!\!\!\!\!\!\!\!\!\!\!\!\!\!\!\!\!\!\!\!\!\!\!\!\!\!\!\!\!\!\!\!\!\!\!\!\!\!\!
 \frac{(i-j)!}{(2i-2j-1)!}\sum_{m=j}^{{\rm Min}(i,2j)} m(-1)^m\, \frac{(2i-m-1)!}{(2j-m)!(i-m)!(m-j)!}\\
&=&
\sum_{m=j}^{{\rm Min}(i,2j)} m(-1)^m\, {2i-m-1\choose 2i-2j-1}{i-j\choose m-j} \\
&=& (-1)^j \sum_{m=j}^{{\rm Min}(i,2j)} m\, x^{-2j}(1-x)^{-(2i-2j)}y^j(1-y)^{i-j}\Bigg\vert_{x^{-m}y^{m}}\\
&=& (-1)^j \sum_{m=j}^{{\rm Min}(i,2j)} x^{-2j}(1-x)^{-(2i-2j)}y\frac{d}{dy}\left(y^j(1-y)^{i-j}\right)\Bigg\vert_{x^{-m}y^{m}}\\
&=&(-1)^j y^{2j}(1-\frac{1}{y})^{-(2i-2j)}y\frac{d}{dy}\left(y^j(1-y)^{i-j}\right)\Bigg\vert_{y^0}\\
&=&
(-1)^j y^{2i}\frac{1}{(1-y)^{2i-2j}}y\frac{d}{dy}\left(y^j(1-y)^{i-j}\right)\Bigg\vert_{y^0}=0
\end{eqnarray*}
as the series of $y$ has $y^{2i+j}$ in factor, and hence has no constant term. Note that we have used Lemma \ref{prodlem}
to go from the fourth to the fifth line above.

We have $L^{-1}A=U$, where $U$ is upper triangular.
To see this we compute for $i>j$:
\begin{eqnarray*}
&\sum_{m=0}^{k-1} &(L^{-1})_{i,m} {A}_{m,j}=\sum_{m=0}^{k-1} (-1)^{i+m}\frac{(m+2n+2)!(2i-m-1)!}{(i+2n+2)!(i-m)!(m-1)!}{j+n+1 \choose 2j-m}\\
&=& \frac{((n+1+j)!)^2 (i-1)!}{(i+2n+2)!(2j-1)!}
\sum_{m=0}^{k-1} (-1)^{i+m}{2j-1\choose m-1}{2i-m-1\choose i-m}{m+2n+2\choose m+n+1-j}\\
&=& \frac{((n+1+j)!)^2 (i-1)!}{(i+2n+2)!(2j-1)!}
\sum_{m=0}^{k-1} \frac{(1+x)^{2j-1}}{x^{2j}} \frac{1}{y^i(1+y)^i}\frac{1}{z^{n+1-j}(1-z)^{j+n+2}}\Bigg\vert_{x^{-m}y^{-m}z^{m}}\\
&=& \frac{((n+1+j)!)^2 (i-1)!}{(i+2n+2)!(2j-1)!}
\frac{1-\left(\frac{x y}{z}\right)^{k}}{1-\frac{x y}{z}}\frac{(1+x)^{2j-1}}{x^{2j}} 
\frac{1}{y^i(1+y)^i}\frac{1}{z^{n+1-j}(1-z)^{j+n+2}}\Bigg\vert_{x^{0}y^{0}z^{0}}\\
&=& \frac{((n+1+j)!)^2 (i-1)!}{(i+2n+2)!(2j-1)!}\oint \frac{1}{(2i\pi)^3}\frac{dx dy dz}{xyz} \frac{1}{1-\frac{x y}{z}} \frac{(1+x)^{2j-1}}{x^{2j}} 
\frac{1}{y^i(1+y)^i}\frac{1}{z^{n+1-j}(1-z)^{j+n+2}}
\end{eqnarray*}
where we noted that the contribution of the term $(xy/z)^{k}$ in the fourth line vanishes as the corresponding 
$y$ residue vanishes (as $k-i>0$). Expressing the residue at $x=0$ in terms of that at $x=z/y$, we get:
\begin{eqnarray*}
&\sum_{m=0}^{k-1}  &(L^{-1})_{i,m} {A}_{m,j}=\frac{((n+1+j)!)^2 (i-1)!}{(i+2n+2)!(2j-1)!}
\oint \frac{1}{(2i\pi)^2}\frac{dy dz}{yz}  \frac{(1+z/y)^{2j-1}}{(z/y)^{2j}} 
\frac{1}{y^i(1+y)^i}\frac{1}{z^{n+1-j}(1-z)^{j+n+2}}\\
&=&\frac{((n+1+j)!)^2 (i-1)!}{(i+2n+2)!(2j-1)!}\oint \frac{dy dz}{(2i\pi)^2} \frac{(y+z)^{2j-1}}{ y^i(1+y)^i z^{j+n+2}(1-z)^{j+n+2}}\\
\end{eqnarray*}
Let us write the above as $R_{y\to 0}$
to indicate that it picks up the residue at $y=0$. 
First note that by expressing the residue at $z=0$ in terms of that at $z=1$, we get the identity:
\begin{equation}\label{otherR}
R_{y\to 0}=\frac{((n+1+j)!)^2 (i-1)!}{(i+2n+2)!(2j-1)!}\oint \frac{dy dz}{(2i\pi)^2} 
\frac{(y+1-z)^{2j-1}}{ y^i(1+y)^i z^{j+n+2}(1-z)^{j+n+2}}
\end{equation}
Next, let us write:
$R_{y\to 0}=-R_{y\to -1}-R_{y\to\infty}$ by use of the Cauchy theorem. 
To compute $R_{y\to -1}$ we perform the change of variables $y=-1-t$ in the original integral, leading to:
\begin{eqnarray*}R_{y\to -1}&=&-\frac{((n+1+j)!)^2 (i-1)!}{(i+2n+2)!(2j-1)!}\oint \frac{dt dz}{(2i\pi)^2} 
\frac{(-t-1+z)^{2j-1}}{t^i(1+t)^i z^{j+n+2}(1-z)^{j+n+2}}\\
&=&\frac{((n+1+j)!)^2 (i-1)!}{(i+2n+2)!(2j-1)!}\oint \frac{dt dz}{(2i\pi)^2}
\frac{(t+1-z)^{2j-1}}{t^i(1+t)^i z^{j+n+2}(1-z)^{j+n+2}}=R_{y\to 0}
\end{eqnarray*}
by comparing with the second expression \eqref{otherR} above. 
We deduce that $R_{y\to 0}=-\frac{1}{2}R_{y\to\infty}$, with 
\begin{eqnarray*}
R_{y\to \infty}&=&-\frac{((n+1+j)!)^2 (i-1)!}{(i+2n+2)!(2j-1)!}\oint \frac{1}{(2i\pi)^2}\frac{du dz}{u  z^{j+n+2}} u^{2i-2j}
\frac{(1+u z)^{2j-1}}{(1+u)^i (1-z)^{j+n+2}}\\
\end{eqnarray*}
where we have performed the change of variables $y=1/u$ to express $R_{y\to\infty}$ as $R_{u\to 0}$.
If $i>j$, the residue at $u=0$ vanishes, henceforth $(L^{-1}A)_{i,j}=-\frac{1}{2}R_{y\to \infty}=U_{i,j}=0$, 
and therefore $U$ is upper triangular.
Finally, if $i=j$, we find:
\begin{eqnarray*}
U_{i,i}&=&\frac{1}{2}\frac{((n+1+i)!)^2 (i-1)!}{(i+2n+2)!(2i-1)!}\oint \frac{1}{2i\pi}\frac{dz}{z^{i+n+2}} 
\frac{1}{(1-z)^{i+n+2}}\\
&=&\frac{((n+1+i)!)^2 i!}{(i+2n+2)!(2i)!}{2i+2n+2\choose i+n+1}=\frac{(2n+2+2i)!i!}{(2n+2+i)!(2i)!}
\end{eqnarray*}
and the Theorem follows.
\end{proof}

\begin{remark}
Comparing Theorems \ref{enu1} and \ref{enu2}, we obtain the following identity:
$$\prod_{i=0}^n \frac{(2i+1)!\,(2k+2i)!}{(k+2i+1)\, !(k+2i)!}=\prod_{i=0}^{k-1} \frac{(2n+2+2i)!i!}{(2n+2+i)!(2i)!}$$
\end{remark}

Note in particular that we have:
\begin{equation}\label{unn}
U_{k-1,k-1}=\frac{(2k+2n)!(k-1)!}{(2n+k+1)!(2k-2)!}
\end{equation}

The number $Z_2(n,k;\ell)$ is given via the Gessel-Viennot formula by the following determinant:
$$
Z_2(n,k;\ell)=\det({\tilde A}),\quad {\tilde A}_{i,j}:=\left\{ 
\begin{matrix} 
{j+n+1 \choose 2j-i} & {\rm if}\ j=0,1,...,k-2\\
{} & {} \\
{n+k-\ell\choose 2k-2-i} & {\rm if} \ j=k-1
\end{matrix}\right.
$$

As before, let us introduce the one-point function:
$H_2(n,k;\ell):=Z_2(n,k;\ell)/Z_2(n,k;0)$. Defining the upper triangular matrix ${\tilde U}$ 
such that ${\tilde A}=L{\tilde U}$, we may rewrite following Section \ref{sec:paths-and-LU}:
$$H_2(n,k;\ell)=\frac{{\tilde U}_{k-1,k-1}}{U_{k-1,k-1}}$$

We have the following.
\begin{thm}
The one-point function $H_2(n,k;\ell)$ reads:
\begin{equation}
H_2(n,k;\ell)=
\frac{2}{{2n+2k\choose 2n+3}} \sum_{s=\ell}^{n+1}  {k+n-s\choose k-2}{k+n+s-1\choose k-2}
\label{Hexp}
\end{equation}
\end{thm}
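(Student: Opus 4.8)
The plan is to run the same $LU$ route already established: by Section~\ref{sec:paths-and-LU} and Theorem~\ref{enu2} we have $H_2(n,k;\ell)=\tilde U_{k-1,k-1}/U_{k-1,k-1}$, with the denominator given explicitly by \eqref{unn}. Since $\tilde A$ differs from $A$ only in its last column $\tilde A_{m,k-1}=\binom{n+k-\ell}{2k-2-m}$, and the inverse $L^{-1}$ was computed in the proof of Theorem~\ref{enu2}, the first step is to form $\tilde U_{k-1,k-1}=\sum_m (L^{-1})_{k-1,m}\tilde A_{m,k-1}$, namely
\[
\tilde U_{k-1,k-1}=\sum_{m}(-1)^{k-1+m}\frac{(m+2n+2)!\,(2k-m-3)!}{(2n+k+1)!\,(k-1-m)!\,(m-1)!}\,\binom{n+k-\ell}{2k-2-m}.
\]
This is the alternating sum produced by the cofactor expansion of $L^{-1}$, and as stressed in Section~\ref{sec:paths-and-LU} it is unsuited to asymptotics; the whole content of the theorem is to recast it as the manifestly positive double-binomial sum \eqref{Hexp}.

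The main step is precisely this conversion, and I would carry it out with the generating-function technology used in the proofs of Lemma~\ref{helplem} and of the $g_{n,k,\ell}$ identity. Concretely, I would represent the $m$-dependent factorials $(m+2n+2)!$ and $1/(m-1)!$ by iterated derivatives at $t=u=1$ via Lemma~\ref{iterdetlem}, and represent the remaining binomial factors by the coefficient-extraction forms of Lemma~\ref{fourwaylem}. The sum over $m$ then collapses to a single coefficient extraction in an auxiliary variable via Lemma~\ref{prodlem}; exactly as in the earlier proofs, the actual summation range may be relaxed to a full range because the spurious terms carry a power of $u$ (or $t$) too low to survive the derivatives and hence contribute zero. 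After the convolution one is left with a rational function whose relevant coefficient is read off by expanding around the point at infinity, keeping only the finitely many terms that survive the differentiations at $t=u=1$.

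I expect two points to be the genuine obstacles. First, the factor $2$ in \eqref{Hexp}: in Theorem~\ref{enu2} a factor of this kind emerged from a contour-integral reflection symmetry ($R_{y\to 0}=R_{y\to -1}$, whence $R_{y\to 0}=-\tfrac12 R_{y\to\infty}$), and I anticipate the same mechanism, so I would recast the $m$-sum as a contour integral and exploit a change of variables $y\mapsto -1-y$ (or $z\mapsto 1-z$) to generate the $2$. Second, pinning down the bounds $s\in[\ell,n+1]$: the raw coefficient extraction produces a sum over a larger range, and one must check, exactly as in Lemmas~\ref{slem} and \ref{tlem}, that the out-of-range terms vanish, which is where the constraint on $\ell$ enters. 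Once the positive sum is obtained in the form $\sum_s \binom{k+n-s}{k-2}\binom{k+n+s-1}{k-2}$, dividing by $U_{k-1,k-1}$ from \eqref{unn} and collecting factorials identifies the prefactor as $2/\binom{2n+2k}{2n+3}$, yielding \eqref{Hexp}. As a fallback, and in the spirit of Theorem~\ref{onepthm} and Theorem~\ref{pols}, it may be cleaner to prove the identity by exhibiting both sides as polynomials in $k$ of the same degree agreeing at enough specializations (such as $k=-j-n$), rather than by a single direct manipulation.
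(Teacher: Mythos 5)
Your setup coincides exactly with the paper's: the reduction $H_2(n,k;\ell)=\tilde U_{k-1,k-1}/U_{k-1,k-1}$, the denominator \eqref{unn}, and the alternating sum for $\tilde U_{k-1,k-1}$ obtained from $(L^{-1})_{k-1,m}$ and the modified last column are precisely what the paper writes down. But the entire content of the theorem is the conversion of that alternating sum into the positive sum \eqref{Hexp}, and this step you only announce, you do not perform. The paper's execution is, moreover, not the Lemma~\ref{helplem}-style derivative calculus you invoke: after dividing by $U_{k-1,k-1}$ (at which point the factor $2$ already appears, by simple factorial bookkeeping, ultimately from $(2k-2)/(k-1)=2$ --- not from any $R_{y\to 0}=R_{y\to -1}$ reflection, which was needed only in Theorem~\ref{enu2} to compute $U_{i,i}$ itself), the alternating sum is rewritten as the double contour integral
\[
\oint \frac{dx\,dy}{(2i\pi)^2}\ \frac{(x+y)^{n+k-\ell}}{x^{2k-2}\,y^{n-\ell+2}\,(1-x)^{2n+4}\,(1+y)^{k-1}}\ ,
\]
and then subjected to two changes of variables: $x=1-u$, followed by the coupled substitution $y=(1-u)v/(1-v)$. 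It is this second, non-obvious substitution that produces the manifestly positive kernel $\bigl[(1-v)(1-u)^{k-1}(1-uv)^{k-1}\bigr]^{-1}$, whose coefficient of $u^{2n+3}v^{n+1-\ell}$ is read off by expanding $1/(1-v)$, giving \eqref{Hexp} after the relabeling $s=n+1-i$; the bounds $s\in[\ell,n+1]$ then come for free from the available power of $v$, with no vanishing-terms argument \`a la Lemma~\ref{slem} needed.

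The gap, concretely: nothing in your sketch produces the mixed factor $(1-uv)^{-(k-1)}$, which is the sole source of the product structure ${k+n-s\choose k-2}{k+n+s-1\choose k-2}$ in the answer. The Lemma~\ref{helplem} technology you cite was designed for sums that collapse to a single binomial coefficient extracted from an expansion at infinity; here the target is a genuine double sum of products of two binomials, and reaching it requires a new device playing the role of the substitution $y=(1-u)v/(1-v)$, which your plan does not supply. Your two anticipated obstacles are also misdiagnosed (the $2$ is arithmetic, the summation bounds are automatic), which confirms the sketch is not tracking the actual mechanism. The fallback via polynomiality in $k$ and evaluation at $k=-j-n$, in the spirit of Theorem~\ref{pols}, is a legitimate alternative strategy, but you would still need to establish the degree bound and evaluate both sides at those specializations, none of which is attempted. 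As it stands, this is a correct plan of attack surrounding a missing core step, not a proof.
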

\begin{proof}
We have:
\begin{eqnarray*}{\tilde U}_{k-1,k-1}&=&\sum_{i=0}^{k-1} (L^{-1})_{k-1,i} {\tilde A}_{i,k-1}=
\sum_{i=0}^{k-1} (-1)^{i+k-1}\frac{(i+2n+2)!(2k-i-3)!}{(k+2n+1)!(k-1-i)!(i-1)!} {n+k-\ell\choose 2k-2-i}\\
&=&\frac{(2n+3)!(k-2)!}{(k+2n+1)!}\sum_{i=0}^{k-1} (-1)^{i+k-1}{2n+2+i\choose i-1} {2k-i-3\choose k-1-i}
{n+k-\ell \choose 2k-2-i} \\
\end{eqnarray*}
which implies:
\begin{eqnarray*}\frac{{\tilde U}_{k-1,k-1}}{U_{k-1,k-1}}&=&
\frac{2}{{2n+2k\choose 2n+3}}\sum_{i=0}^{k-1} (-1)^{i+k-1}{2n+2+i\choose i-1} {2k-i-3\choose k-1-i}
{n+k-\ell \choose 2k-2-i}\\
&=&\frac{2}{{2n+2k\choose 2n+3}}\sum_{i=k-1}^{n+k-\ell} (-1)^{i}{2n+2k-i\choose 2n+3} {i-1\choose k-2}
{n+k-\ell \choose i} \\
&=&\frac{2}{{2n+2k\choose 2n+3}}\oint \frac{1}{(2i\pi)^2} \frac{dx }{x^{2k-2}}\frac{dy}{y^{n-\ell+2}} \frac{(x+y)^{n+k-\ell}}{(1-x)^{2n+4}(1+y)^{k-1}}
\end{eqnarray*}
Let us rewrite the latter integral by expressing the residue at $x=0$ in terms of that at $x=1$, by the change of variables $x=1-u$:
\begin{eqnarray*}\frac{{\tilde U}_{k-1,k-1}}{U_{k-1,k-1}}&=&
\frac{2}{{2n+2k\choose 2n+3}}\oint \frac{1}{(2i\pi)^2} \frac{du }{u^{2n+4}}\frac{dy}{y^{n-\ell+2}} \frac{(1-u+y)^{n+k-\ell}}{(1-u)^{2k-2}(1+y)^{k-1}}
\end{eqnarray*}
Let us now change variables to $y=\frac{(1-u) v}{1-v}$:
\begin{eqnarray*}\frac{{\tilde U}_{k-1,k-1}}{U_{k-1,k-1}}&=&
\frac{2}{{2n+2k\choose 2n+3}}\oint \frac{1}{(2i\pi)^2} \frac{du }{u^{2n+4}}\frac{dv}{v^{n-\ell+2}} \frac{1}{(1-v)(1-u)^{k-1}(1-u v)^{k-1}}\\
&=&\frac{2}{{2n+2k\choose 2n+3}}  \frac{1}{(1-v)(1-u)^{k-1}(1-u v)^{k-1}}\Bigg\vert_{u^{2n+3}v^{n+1-\ell}}\\
&=&\frac{2}{{2n+2k\choose 2n+3}} \sum_{i=0}^{n+1-\ell} \frac{v^i}{(1-u)^{k-1}(1-u v)^{k-1}}\Bigg\vert_{u^{2n+3}v^{n+1-\ell}}\\
&=&\frac{2}{{2n+2k\choose 2n+3}} \sum_{i=0}^{n+1-\ell} {k+n-\ell-i-1\choose k-2}{n+k+\ell+i \choose k-2}\\
&=&\frac{2}{{2n+2k\choose 2n+3}} \sum_{i=0}^{n+1-\ell} {k+i-2\choose k-2}{k+2n+1-i \choose k-2}
\end{eqnarray*}
which completes the proof of \eqref{Hexp} up to the change of summation variable $s=n+1-i$, and the theorem follows.
\end{proof}

\subsection{Tangent method}

Taking $n\to\infty$ with $k=x n$, $\ell=\xi n$, $s=\sigma n$ in \eqref{Hexp}, we find:
$$
H_2(n,x n;\xi n)\simeq n \,\int_{\xi}^1 d\sigma\ e^{n S_1'(x,\sigma)}
$$
where 
\begin{eqnarray*}
S_1'(x,\sigma)&=&2 {\rm Log}(2) +2 x {\rm Log}(2 x)-(1+\sigma){\rm Log}(1+\sigma)-(1-\sigma){\rm Log}(1-\sigma)-2 x{\rm Log}(x)\\
&& -(2+2x){\rm Log}(2+2x)+(1+x+\sigma){\rm Log}(1+x+\sigma)+(1+x-\sigma){\rm Log}(1+x-\sigma)
\end{eqnarray*}

Similarly, we may use the asymptotic expression for $Y_2(p;\ell)=Y(p;\ell)$ for $p=y n$ of Eqs. (\ref{Yas}-\ref{Szero}).
The integration variables must belong to the domain:
$0\leq \xi \leq \sigma\leq x$
We now have to perform the saddle-point method for the total action $S(x,y,\xi,\sigma)=S_0(y,\xi)+S_1'(x,\sigma)$
and the integral:
$$\int_{0}^x d\xi \int_{\xi}^1 d\sigma\ e^{n\, S(x,y,\xi,\sigma)}$$

We first write:
$$\partial_\sigma S=0={\rm Log}\left(\frac{(1-\sigma)(1+x+\sigma)}{(1+\sigma)(1+x-\sigma)}\right)$$
with the unique solution $ \sigma=0$. This solution is never selected, and we have the saturation by the lower
bound $\sigma=\xi$. This results in the action
$S'(y,x,\xi)=S_0(y,\xi)+S_1'(x,\xi)$, with the bound $0\leq \xi\leq 1$.
We then write the saddle-point condition:
$$\partial_\xi S'=0={\rm Log}\left( \frac{(y+\xi)}{\xi} \frac{(1-\xi)(1+x+\xi)}{(1+\xi)(1+x-\xi)}\right)
$$

We find that for $0\leq \xi \leq x$, we have
$$y=y(\xi):=\frac{2x \xi^2}{(1 - \xi) (1 + x + \xi)} \in (0,\infty)$$

Assuming the tangent to the sought-after arctic curve is the parametric line through $(2x+y,-y)$
and $(2x,2\xi)$, with equation in the $(u,v)$ plane:
$v=a(x,y)u +b(x,y)$, with $y=y(\xi)$, and
$$-y=(2x+y)a+b,\qquad 2 \xi=2x a+b$$
hence
$$a(\xi)=-\frac{y(\xi)+2\xi}{y(\xi)},\qquad b(\xi)=2\frac{(\xi y(\xi)+2x\xi+xy(\xi))}{y(\xi)}$$
As before, the envelope of the family of lines $F(u,v,\xi)=v-a(\xi)u-b(\xi)$ is the solution fo
the system $F=0$ and $\partial_\xi F=0$.
We find that $ \xi=\frac{v x}{2u}$,
and finally, as $\xi\leq 1$:
$$x^2v^2-4(1+x)u(2x-u)=0, \qquad u\in \Big[\frac{2x(1+x)}{2+x},2\Big]$$
which is nothing but the portion of the arctic ellipse \eqref{arctic} corresponding to 
$u\in \Big[\frac{2x(1+x)}{2+x},2\Big]$.

Reflection of the problem w.r.t. the vertical line $u=1$ shows that the remaining portion of the arctic curve is another
portion of the {\it same} ellipse, corresponding to $u\in \Big[0,\frac{2x}{x+2}\Big]$.

\section{Another path model}\label{everyothersec}

\begin{figure}
\centering
\includegraphics[width=15.cm]{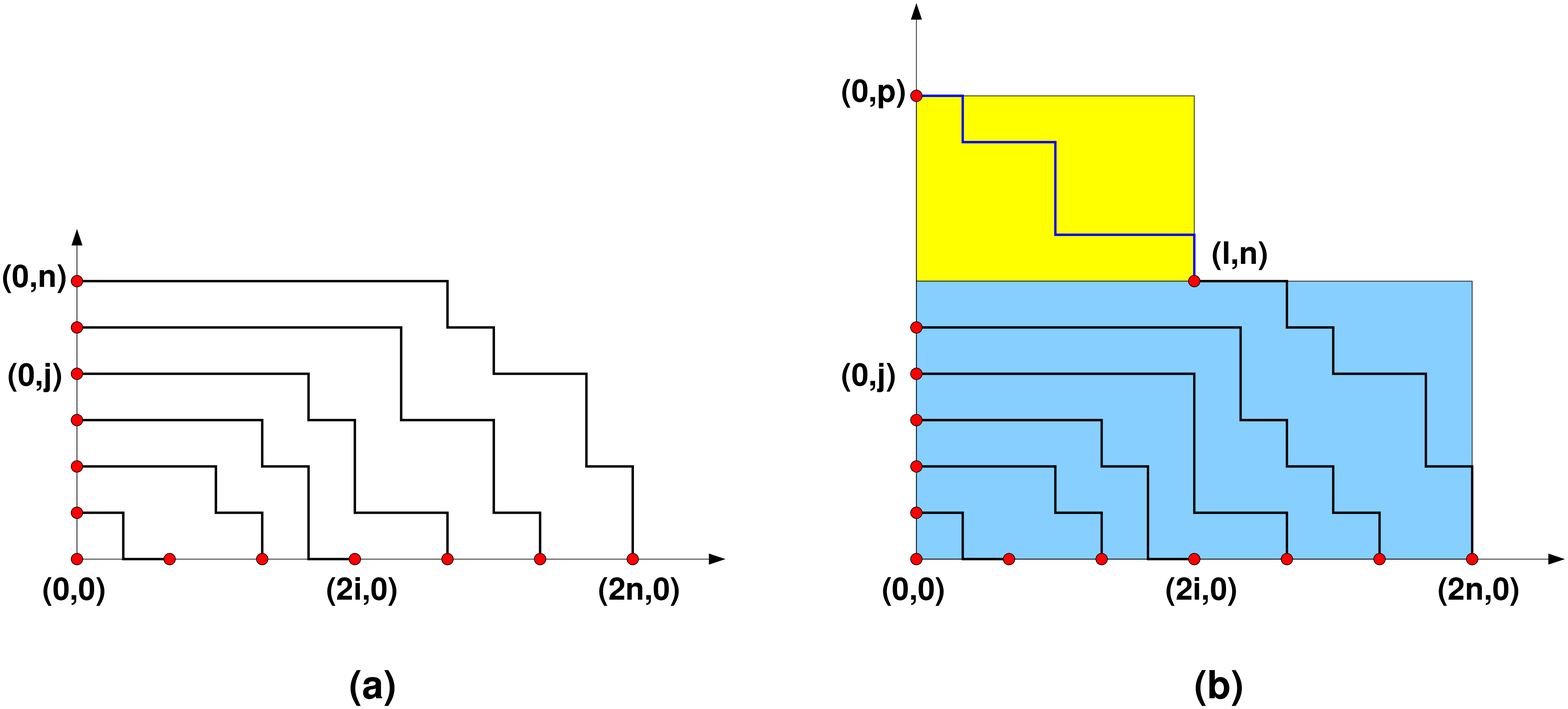}
\caption{\small (a) A typical configuration of the non-intersecting square-lattice path model with steps 
$(-1,0)$, $(0,1)$, from the points $(2i,0)$,  $i=0,1,...,n$ to the points $(0,j)$, $j=0,1,...,n$. (b) The setting for 
applying the tangent method: the topmost path escapes to reach a point $(0,p)$ on the vertical axis.}
\label{fig:everyother}
\end{figure}

We consider non-intersecting square-lattice paths in the first quadrant, 
with left and up steps $(-1,0)$, $(0,1)$,  that start at the points $(2i,0)$, 
$i=0,1,...,n$ and end up at the points $(0,j)$, $j=0,1,...,n$, as illustrated in Fig.~\ref{fig:everyother}(a).

\subsection{Partition function}

The partition function $Z(n)$ for the above paths is given by the Gessel-Viennot formula.
\begin{thm}
\beq\label{partother}
Z(n)=\det_{i,j\in [0,n]}\left( {2i+j\choose j} \right) =2^{n(n+1)/2}
\eeq
\end{thm}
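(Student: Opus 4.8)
The plan is to follow the infinite-matrix $LU$ strategy of Section~\ref{truncsec}, exactly as in the Aztec case (Lemma~\ref{lemma:aztec-LU}). Write $A_{i,j}=\binom{2i+j}{j}$ and form the generating function $f_A(z,w)=\sum_{i,j\geq 0}A_{i,j}z^iw^j$. First I would put $f_A$ in closed form: summing over $j$ first, the expansion $\frac{1}{(1-w)^{m+1}}=\sum_{j\geq0}\binom{m+j}{j}w^j$ from the proof of Lemma~\ref{fourwaylem} (with $m=2i$) gives $\sum_{j\geq0}\binom{2i+j}{j}w^j=(1-w)^{-(2i+1)}$, and summing the resulting geometric series in $z$ yields
\[
f_A(z,w)=\frac{1}{1-w}\,\frac{1}{1-\frac{z}{(1-w)^2}}=\frac{1-w}{(1-w)^2-z}.
\]

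Next I would exhibit an $LU$ factorization at the level of generating functions. I propose the \emph{same} lower factor as in the Aztec problem, $f_L(z,w)=\frac{1}{1-z-zw}$, whose expansion $f_L=\sum_i z^i(1+w)^i$ shows $L_{i,j}=\binom{i}{j}$ is lower uni-triangular, together with the upper factor
\[
f_U(z,w)=\frac{1-w}{(1-w)^2-zw(2-w)}.
\]
Expanding $f_U=\frac{1}{1-w}\sum_{m\geq0}\big(\frac{zw(2-w)}{(1-w)^2}\big)^m$ shows that the coefficient of $z^iw^j$ vanishes for $j<i$ (each power of $z$ carries at least one power of $w$), so $U$ is upper triangular, and the lowest term of the $z^i$ part is $2^i w^i(1-w)^{-(2i+1)}$, giving $U_{i,i}=2^i$. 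The heart of the proof is then to verify $f_A=f_L\star f_U$ using the convolution formula of Lemma~\ref{convolem}. Since $\frac{1}{t}f_L(z,1/t)=\frac{1}{t(1-z)-z}$, the integral picks up the residue at $t=z/(1-z)$:
\[
(f_L\star f_U)(z,w)=\oint\frac{dt}{2i\pi}\,\frac{1}{t(1-z)-z}\,\frac{1-w}{(1-w)^2-tw(2-w)}=\frac{1-w}{(1-w)^2(1-z)-zw(2-w)}.
\]
The denominator collapses because $(1-w)^2+w(2-w)=1$, so it equals $(1-w)^2-z$, matching $f_A$.

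Finally, by Lemma~\ref{truncLUlem} this $LU$ factorization of the infinite matrix descends to every truncation $i,j\in[0,n]$, so
\[
Z(n)=\det_{i,j\in[0,n]}\binom{2i+j}{j}=\prod_{i=0}^n U_{i,i}=\prod_{i=0}^n 2^i=2^{n(n+1)/2},
\]
as claimed. I expect the main obstacle to be guessing the correct upper factor $f_U$: unlike the Aztec case it is not symmetric in $z,w$ and carries the nontrivial numerator $1-w$ and the factor $w(2-w)$. The safe way to produce it is to compute $f_U=f_{L^{-1}}\star f_A$ directly, using $f_{L^{-1}}(z,w)=\frac{1}{1+z-zw}$ from Eq.~\eqref{eq:L-inv-aztec}; that residue computation is routine and reproduces the formula above, after which the clean verification $f_L\star f_U=f_A$ is what I would present.
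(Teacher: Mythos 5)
Your proof is correct, and it rests on the same $LU$ decomposition as the paper's -- the same Pascal matrix $L_{i,j}=\binom{i}{j}$, hence ultimately the same $U$ -- but you verify it by a different route. You follow the infinite-matrix generating-function scheme that the paper reserves for the Aztec diamond (Lemma~\ref{lemma:aztec-LU}): compute $f_A(z,w)=\frac{1-w}{(1-w)^2-z}$ in closed form, exhibit $f_U$, check $f_L\star f_U=f_A$ by the residue computation of Lemma~\ref{convolem}, and descend to the finite truncations via Lemma~\ref{truncLUlem}; each step is carried out correctly (the key cancellation $(1-w)^2+w(2-w)=1$ is right, as is reading off triangularity and $U_{i,i}=2^i$ from the expansion of $f_U$). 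The paper's own proof of this theorem never forms $f_A$ or $f_U$: it computes the entries of $L^{-1}A$ directly as a single coefficient extraction,
$$(L^{-1}A)_{i,j}=\sum_{k=0}^i(-1)^{i+k}\binom{i}{k}\binom{2k+j}{j}=\oint\frac{dt}{2i\pi\, t^{2i+1}}\,(1+t)^i(1-t)^{i-j-1}\ ,$$
and concludes by a degree count: for $j<i$ the integrand is a polynomial of degree at most $2i-1$, so the coefficient of $t^{2i}$ vanishes, while for $j=i$ that coefficient is $\sum_{k=0}^{i}\binom{i}{k}=2^i$. The paper's computation is shorter and needs no closed forms; yours is more systematic -- as you note, $f_U=f_{L^{-1}}\star f_A$ produces the upper factor mechanically, with no guessing -- and it yields the entire matrix $U$ in closed form as a by-product, rather than only its diagonal and its triangularity.
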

\begin{proof}
Let $A$ be the matrix with entries $A_{i,j}={2i+j\choose j}$, $i,j=0,1,...,n$.
We prove the result by $LU$ decomposition. Consider the lower triangular matrix $L$ with entries $L_{i,j}={i\choose j}$
and its inverse $L^{-1}$ with entries $(L^{-1})_{i,j}=(-1)^{i+j} {i\choose j}$. Then we have the following result
$$(L^{-1} A)_{i,j}=U, \quad{\rm where}\quad  U_{i,j}=0\ {\rm for} \ i>j\ {\rm and}\ U_{i,i}=2^i$$
Let us indeed compute:
\begin{eqnarray*}
(L^{-1} A)_{i,j}&=&\sum_{k=0}^i (-1)^{i+k} {i\choose k} {2k+j\choose j}= \oint \frac{dt }{2i\pi t}
\left(\frac{1}{t^2}-1\right)^i \frac{1}{(1-t)^{j+1}}\\
&=&\oint \frac{dt }{2i\pi t^{2i+1}} (1+t)^i (1-t)^{i-j-1}
\end{eqnarray*}
where the last contour integral picks up the coefficient of $t^{2i}$. If $0\leq j<i$ the integrand 
is a polynomial of degree $\leq 2i-1$ hence the result is $0$. If $j=i$, the residue at the pole $t=1$ 
gives the result $2^i$.
We conclude that $Z(n)=\prod_{i=0}^n U_{i,i}= 2^{n(n+1)/2}$ and the theorem follows.
\end{proof}

\begin{remark}
Note that the partition for this path model \eqref{partother} coincides with the number of domino tilings of the
Aztec diamond \eqref{aztecpart}. There does not seem to exist any obvious natural bijection between the two sets of 
corresponding non-intersecting paths,
and we leave this as an open question for the reader. As an indication of the difficulty of the question, we shall find that the arctic curve
for the present problem is a portion of parabola, as opposed to a circle for the Aztec diamond case.
\end{remark}

\subsection{Partition functions for an escaping path}

As before, we consider now the situation where the topmost endpoint $(0,n)$ is replaced by a
sliding point $(0,p)$ with $p\geq n$ (see Fig.~\ref{fig:everyother}(b)). The total partition function, which
we denote by $N(n,p)$, then splits as
 $$ N(n,p)= \sum_{\ell=0}^{2n}Z(n;\ell)W(n,p;\ell)\ ,$$
where 
$Z(n;\ell)$ is the partition for the same paths (bottom rectangle in Fig.~\ref{fig:everyother}(b)), 
but endpoints $(0,j)$, $j=0,1,...,n-1$ and $(\ell,n)$, $\ell=0,1,2,...,2n$,
and
$W(n,p;\ell)$ is the partition function for a single path (top rectangle in Fig.~\ref{fig:everyother}(b))
from $(\ell,n)$ to $(0,p)$, with its first step being up
(to indicate that the point $(\ell,n)$ is the point when the top path exits the rectangle $[0,2n]\times [0,n]$).
The latter is easily computed to be
$$W(n,p;\ell)={p-n-1+\ell\choose \ell}$$

Let us now compute the one-point function:
$$H(n;\ell) := \frac{Z(n;\ell)}{Z(n)}=\frac{1}{2^{n(n+1)/2} }  \det_{i,j\in [0,n]} ({\tilde A}),\quad {\tilde A}_{i,j}=\left\{ 
\begin{matrix}
A_{i,j} & {\rm if}\ 0\leq j<n \\
& \\
{n+2i-\ell\choose n} & {\rm if}\ j=n
\end{matrix}\right. $$

We have the following.

\begin{thm}
The one-point function $H(n;\ell)$ reads:
$$H(n;\ell)=\frac{{\tilde U}_{n,n}}{U_{n,n}}= \frac{1}{2^n} \sum_{k=0}^{{\rm Min}(n,2n-\ell)} {n\choose k}$$
\end{thm}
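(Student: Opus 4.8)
The plan is to follow verbatim the template of the Aztec one‑point function computation of Theorem~\ref{thm:aztec-1pt}. From the $LU$ framework of Section~\ref{sec:paths-and-LU} we have $H(n;\ell)=\tilde U_{n,n}/U_{n,n}$ with $U_{n,n}=2^n$ already known, so the whole statement reduces to the single matrix element
$$\tilde U_{n,n}=\sum_{k=0}^n (L^{-1})_{n,k}\,\tilde A_{k,n}=\sum_{k=0}^n (-1)^{n+k}\binom{n}{k}\binom{n+2k-\ell}{n},$$
where I used $(L^{-1})_{n,k}=(-1)^{n+k}\binom{n}{k}$. The only structural novelty relative to the Aztec case is the doubled index $2k$ in the last column, but this is exactly the feature already handled in the computation leading to \eqref{partother} and in Lemma~\ref{helplem}, so I expect no genuine difficulty. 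Note also that the target sum $\sum_{k=0}^{\min(n,2n-\ell)}\binom{n}{k}$ is manifestly positive, so this computation is an instance of the general program of reexpressing an alternating $\tilde U_{n,n}$ as a positive sum.

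First I would represent the combinatorial binomial as a series coefficient through the third identity of Lemma~\ref{fourwaylem}, namely $\binom{n+2k-\ell}{n}=\frac{1}{(1-t)^{n+1}}\big|_{t^{2k-\ell}}$ (this is the correct combinatorial value, vanishing automatically when $2k<\ell$), written as the contour integral $\oint\frac{dt}{2i\pi}\,t^{\ell-2k-1}(1-t)^{-(n+1)}$. Substituting into $\tilde U_{n,n}$ and exchanging sum and integral confines the $k$-dependence to $\sum_{k}(-1)^{n+k}\binom{n}{k}t^{-2k}$, which the binomial theorem collapses to $(-1)^n(1-t^{-2})^n=t^{-2n}(1-t)^n(1+t)^n$.

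Carrying this factor through cancels the $(1-t)^{n+1}$ in the denominator down to a single power, leaving the compact expression
$$\tilde U_{n,n}=\oint\frac{dt}{2i\pi}\,\frac{t^{\ell-2n-1}(1+t)^n}{1-t}=\frac{(1+t)^n}{1-t}\Bigg|_{t^{2n-\ell}}.$$
It is worth highlighting the parallel with the Aztec result, where the analogous final object was $\frac{(1+z)^n}{1-z}\big|_{z^\ell}$. To finish I would expand $\frac{1}{1-t}=\sum_{j\geq0}t^j$ and apply Lemma~\ref{prodlem} to read off the coefficient, obtaining $\frac{(1+t)^n}{1-t}\big|_{t^m}=\sum_{j=0}^m\binom{n}{j}=\sum_{j=0}^{\min(m,n)}\binom{n}{j}$; with $m=2n-\ell$ this is exactly $\sum_{k=0}^{\min(n,2n-\ell)}\binom{n}{k}$. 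Here the truncation $\min(n,2n-\ell)$ emerges by itself, since $\binom{n}{j}$ vanishes for $j>n$ while $t^{2n-\ell}$ has non‑negative exponent precisely in the relevant range $\ell\le 2n$. Dividing by $U_{n,n}=2^n$ yields the claimed formula.

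The only point demanding care — and the closest thing to an obstacle — is the bookkeeping around the doubled exponent $2k$ together with the vanishing of $\binom{n+2k-\ell}{n}$ for $2k<\ell$. This is why I would insist on the series‑coefficient interpretation throughout, rather than treating $\binom{n+2k-\ell}{n}$ as a polynomial in $k$: the polynomial extension would misbehave exactly in the regime $2k<\ell$ and spoil the clean collapse. With that convention fixed, every step is a routine manipulation of generating functions of the kind used repeatedly earlier in the paper.
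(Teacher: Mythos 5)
Your proof is correct and follows essentially the same route as the paper's own proof: the same $LU$ reduction to $\tilde U_{n,n}=\sum_{k}(-1)^{n+k}\binom{n}{k}\binom{n+2k-\ell}{n}$, the same coefficient representation $\binom{n+2k-\ell}{n}=\frac{1}{(1-t)^{n+1}}\big\vert_{t^{2k-\ell}}$, the same collapse of the alternating sum to $t^{-2n}(1-t)^n(1+t)^n$, and the same final extraction of $\frac{(1+t)^n}{1-t}\big\vert_{t^{2n-\ell}}$. The only cosmetic difference is that you phrase the constant-term convolution as an explicit contour integral, where the paper invokes Lemma~\ref{prodlem}.
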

\begin{proof}
We note again that 
$L^{-1}{\tilde A}={\tilde U}$ is still upper triangular, and that
\begin{eqnarray*}
{\tilde U}_{n,n}&=&\sum_{k} (L^{-1})_{n,k} {\tilde A}_{k,n}=\sum_{k=0}^n (-1)^{n+k} {n\choose k} {n+2k-\ell\choose n}\\
&=& \sum_{k=0}^n \left(\frac{1}{t^2}-1\right)^n\Bigg\vert_{t^{-2k}}   \frac{t^\ell}{(1-t)^{n+1}}\bigg\vert_{t^{2k}}
=\left(\frac{1-t^2}{t^2}\right)^n  \frac{t^\ell}{(1-t)^{n+1}}\Bigg\vert_{t^0}\\
&=& \frac{(1+t)^n}{1-t}\Bigg\vert_{t^{2n-\ell}}=\sum_{k=0}^{{\rm Min}(n,2n-\ell)} {n\choose k}
\end{eqnarray*}
where in the second line we have used Lemma \ref{prodlem}.
The theorem follows.
\end{proof}

\subsection{Tangent method}

Let us evaluate the function $N(n,p)=\sum_{\ell=0}^{2n} H(n;\ell)\, W(n,p;\ell)$ in the scaling limit where
$n$ is large and $p=z n$, $\ell=\xi n$, $k=x n$.
We have
$$N(n,p) \sim \int_{0}^2 d\xi \int_{0}^{{\rm Min}(1,2-\xi)}dx\ e^{n S(z,x,\xi)}$$
where $S(z,x,\xi)=S_0(x,\xi)+S_{1}(z,\xi)$, and
$$H(n;\xi n)\sim \int_{0}^{{\rm Min}(1,2-\xi)}dx\ e^{-n(x\, {\rm Log}(x)+(1-x){\rm Log}(1-x))}\sim \int_{0}^{{\rm Min}(1,2-\xi)}dx\  e^{n S_0(x,\xi)}$$
while
$$W(n,z n;\xi n)\sim e^{n((z-1+\xi){\rm Log}(z-1+\xi)-(z-1){\rm Log}(z-1)-\xi {\rm Log}(\xi))}\sim e^{n S_1(z,\xi)}$$
The first saddle point equation, w.r.t. $x$  reads as follows. The extremum of $S_0(x,\xi)$
is reached at $x=1/2$ only if $2-\xi>1/2$, i.e. $\xi<3/2$, in which case the critical value of the action is $S_0^*={\rm Log}(2)$.
When $\xi>3/2$, the extremum is at $x=2-\xi$, and the critical value of the action is
$S_0^*=(2-\xi){\rm Log}(2-\xi)+(\xi-1){\rm Log}(\xi-1)$.
The second saddle-point equation, w.r.t. $\xi$ reads as follows.
If $\xi <3/2$ we have the equation:
$$\partial_\xi S_1(z,\xi)=0 \quad \leftrightarrow\ z-1+\xi=\xi $$
which has no solution for $z>1$.
Therefore we have only solutions when $\xi>3/2$, that obey the equation:
$$\partial_\xi ( S_0^*+S_1)=0 \quad \leftrightarrow\ (z-1+\xi)(\xi-1)=\xi(2-\xi) $$
This gives the relation
$$ 2 \xi^2 +(z-4)\xi -(z-1)=0 \ \Rightarrow \ z=z(\xi):=\frac{(1-\xi)^2}{2-\xi}$$
The line through the points $(\ell,n)$ and $(0,p)$ becomes in the scaling limit the line
through  the points $(\xi,1)$ and $(0,z)$, with equation
$$y+\frac{z(\xi)-1}{\xi} \, x -z(\xi) =0\ \Rightarrow \ 2 + 3 x - 2 y - (4+ 2 x-y) \xi + 2 \xi^2=0$$
The envelope is obtained by eliminating $\xi$ between the above and its derivative w.r.t. $\xi$:
$$6 + x - 8\xi + 2 \xi^2=0$$
This finally yields the parabola:
\beq \label{parabo}
-8 x + 4 x^2 + 8 y - 4 x y + y^2=0\ .
\eeq
As before we have derived only the portion of this curve corresponding to $3/2<\xi<2$, i.e. $3/2<x<2$. However, similar 
considerations as above show that the entire range $1<\xi<2$, i.e. $0<x<2$ is valid, as detailed in the next sections.

\subsection{Tiling formulation and enumeration}

\begin{figure}
\centering
\includegraphics[width=16.cm]{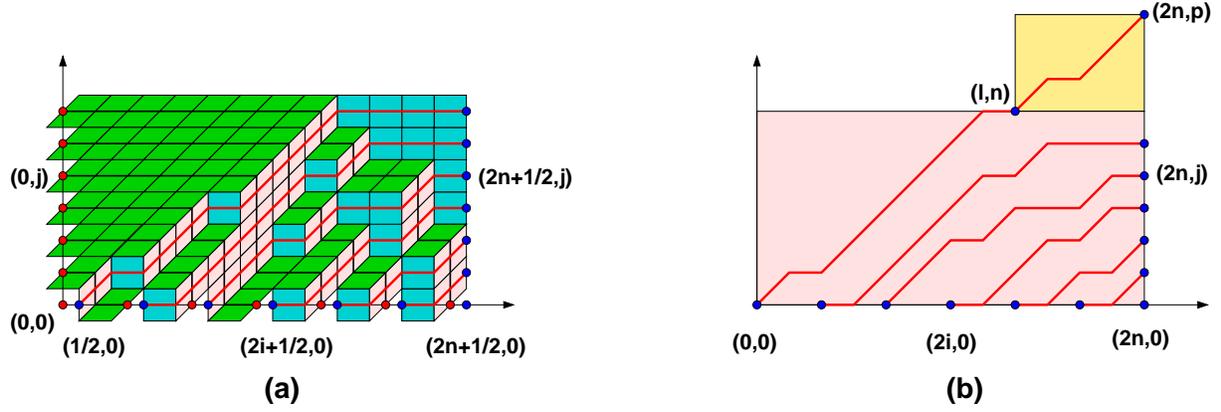}
\caption{\small (a) The tiling interpretation of the original path model, together with an alternative path description,
with steps $(1,0)$ and $(1,1)$, from the points $(2i+1/2,0)$, $i=0,1,...,n$ to
the points $(2n+1/2,j)$, $j=0,1,...,n$. (b) The setting for applying the tangent method: the topmost path escapes to reach a point $(2n,p)$ on the vertical axis (note that all coordinates have been shifted by $(-1/2,0)$ for simplicity).}
\label{fig:everyothertile}
\end{figure}

The path model of Fig.~\ref{fig:everyother}(a) has an alternative interpretation as a rhombus tiling 
problem, as illustrated in Fig.~\ref{fig:everyothertile}(a), where the path edges cross two of the three types of tiles transversally. This gives rise to an alternative description in terms of paths across two different types of tiles.
These paths have steps $(1,0)$ and $(1,1)$ and go from the points $(2i+1/2,0)$, $i=0,1,...,n$ to
the points $(2n+1/2,j)$, $j=0,1,...,n$. Shifting all coordinates by $(-1/2,0)$, these are paths with the abovementioned steps from the points $(2i,0)$, $i=0,1,...,n$ to
the points $(2n,j)$, $j=0,1,...,n$. Note that this is not equivalent to a reflection of the original path problem, as the steps
are different.

The partition function $Z_2(n)$ for this model is given by the Gessel-Viennot formula.
\begin{thm}
The partition function $Z_2(n)$ reads:
$$Z_2(n)=\det_{i,j\in [0,n]}\left({2(n-i)\choose n-j}\right)=\det_{i,j\in [0,n]}\left({2i\choose j}\right)=2^{n(n+1)/2}$$
\end{thm}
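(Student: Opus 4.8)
The plan is to establish the two displayed determinant identities in turn, and then evaluate the final determinant by an $LU$ decomposition exactly parallel to the one just used for the previous path model. First I would record why the Gessel--Viennot matrix (Lemma~\ref{lemma:LGV}) has entries $\binom{2(n-i)}{n-j}$. A single path with steps $(1,0)$ and $(1,1)$ from $(2i,0)$ to $(2n,j)$ takes $2(n-i)$ steps in all, of which exactly $j$ must be the diagonal step $(1,1)$, so the number of such paths is $\binom{2(n-i)}{j}$. The point $(2n,0)$ is reachable only from the start $(2n,0)$ (the $i=n$ vertex), so the non-intersecting family realizes the reversing permutation; ordering the endpoints from top to bottom, i.e.\ setting ${\tilde v}_j=(2n,n-j)$, makes the identity permutation the relevant one and yields the entry $\binom{2(n-i)}{n-j}$. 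To pass to the second form, write $M_{i,j}:=\binom{2(n-i)}{n-j}=P_{n-i,n-j}$ with $P_{i,j}:=\binom{2i}{j}$, so that $M=JPJ$ where $J_{i,j}=\delta_{i,n-j}$ is the reversal matrix. Since $(\det J)^2=1$, we obtain $\det M=\det P$, which is the desired equality of determinants.

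It remains to show $\det_{i,j\in[0,n]}\binom{2i}{j}=2^{n(n+1)/2}$. I would reuse the lower uni-triangular matrix $L_{i,j}=\binom{i}{j}$ from the previous model, whose inverse is $(L^{-1})_{i,j}=(-1)^{i+j}\binom{i}{j}$, and compute $U:=L^{-1}A$ for $A_{i,j}=\binom{2i}{j}$. By Lemma~\ref{truncLUlem} it suffices to verify $A=LU$ as infinite matrices with $U$ upper triangular, after which $\det A^{(n)}=\prod_{i=0}^n U_{i,i}$.

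The key step is the evaluation
\[
U_{i,j}=\sum_{k=0}^{i}(-1)^{i+k}\binom{i}{k}\binom{2k}{j}
=(-1)^i\,[t^j]\sum_{k}\binom{i}{k}\bigl(-(1+t)^2\bigr)^{k}
=(-1)^i\,[t^j]\,\bigl(1-(1+t)^2\bigr)^i,
\]
where I used $\binom{2k}{j}=[t^j](1+t)^{2k}$ (Lemma~\ref{fourwaylem}). Since $1-(1+t)^2=-t(2+t)$, this gives $U_{i,j}=[t^{j-i}](2+t)^i=2^{2i-j}\binom{i}{j-i}$, which vanishes unless $i\leq j\leq 2i$; in particular $U$ is upper triangular and $U_{i,i}=2^i$. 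Therefore $\det_{i,j\in[0,n]}\binom{2i}{j}=\prod_{i=0}^n 2^i=2^{n(n+1)/2}$, completing the proof. I expect essentially no obstacle in the algebra, which is even shorter than in the preceding theorem; the only point requiring care is the bookkeeping of the endpoint ordering in the Gessel--Viennot step, which fixes the precise binomial entry $\binom{2(n-i)}{n-j}$ and underlies the clean sign cancellation $\det M=\det P$.
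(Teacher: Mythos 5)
Your proposal is correct and follows essentially the same route as the paper: an $LU$ decomposition of $A_{i,j}=\binom{2i}{j}$ with the lower uni-triangular $L_{i,j}=\binom{i}{j}$, yielding an upper triangular $U$ with $U_{i,i}=2^i$. In fact your write-up supplies details the paper leaves implicit --- the generating-function evaluation $U_{i,j}=2^{2i-j}\binom{i}{j-i}$ and the reversal-matrix argument identifying $\det\binom{2(n-i)}{n-j}$ with $\det\binom{2i}{j}$ --- so it is a faithful, fleshed-out version of the paper's proof.
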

\begin{proof}
This is again proved by $LU$ decomposition. Define the matrix $A$ with entries $A_{i,j}={2i \choose j}$.
Then the usual lower triangular matrix $L$ with entries $L_{i,j}={i\choose j}$ is such that $L^{-1}A=U$ is upper triangular,
with diagonal elements $U_{i,i}=2^i$, and the theorem follows.
\end{proof}

\subsection{Tangent method}

We now apply the tangent method, by allowing the topmost path to end at some arbitrary point $(n,p)$, thus
splitting the corresponding partition function $N_2(n,p)$ into two terms, according to the position $(\ell,n)$ of the escape point 
from  the rectangle $[0,2n]\times[0,n]$, where $\ell\in[0,2n]$. We have
$$N_2(n,p)=\sum_{\ell=0}^{2n} Z_2(n;\ell) W_2(n,p;\ell) $$
where $Z_2(n;\ell)$ is the partition function for the same paths, but with the endpoint $(2n,n)$ moved to $(\ell,n)$,
namely:
$$Z_{n,\ell}=\det({\tilde A}), {\tilde A}_{i,j} =\left\{ \begin{matrix} 
A_{i,j} & {\rm if} \ 0\leq j<n\\
& \\
{\ell+2i-2n \choose n} & {\rm if}\  j=n
\end{matrix}\right. $$
and $W_2(n,p;\ell)$ is the partition function of a single path from $(\ell,n)$ to $(2n,p)$, with first step $(1,1)$ to guarantee
the escape from the rectangle $[0,2n]\times [0,n]$. We have
$$W_2(n,p;\ell)={2n-\ell-1\choose p-n-1}\ .$$

\begin{figure}
\centering
\includegraphics[width=9.cm]{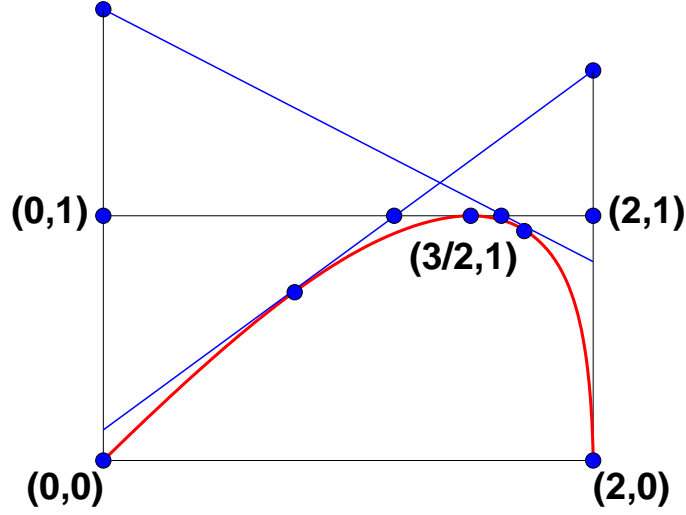}
\caption{\small The arctic parabola in the continuum limit separates frozen phases (top left and right corners)
from a disordered phase (bottom). We have represented two sample tangents obtained from the methods above,
together with their intercepts on the line $y=1$.}
\label{fig:everyothercont}
\end{figure}

As before we will apply the scaling analysis to the ratio $N_2(n,p)/Z_2(n)=\sum_\ell H_2(n;\ell) W_2(n,p;\ell)$
where the one-point function is defined as: 
$$H_2(n;\ell)=\frac{Z_2(n;\ell)}{Z_2(n)}$$

We have the following.
\begin{thm}
The one-point function $H_2(n;\ell)$ reads:
$$H_2(n;\ell)=\frac{\tilde U_{n,n}}{U_{n,n}}=\frac{1}{2^n}\sum_{k=0}^{\ell-n} {n\choose k}$$
\end{thm}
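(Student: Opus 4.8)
The plan is to follow, essentially verbatim, the generating-function argument used for the one-point function of the original path model earlier in this section. First I would invoke Section~\ref{sec:paths-and-LU}: since $\tilde A$ coincides with $A$ in all columns $j<n$ and $U=L^{-1}A$ is upper triangular with $U_{n,n}=2^n$, the matrix $\tilde U=L^{-1}\tilde A$ is again upper triangular and $H_2(n;\ell)=\tilde U_{n,n}/U_{n,n}$. Hence the whole task reduces to the single entry
$$\tilde U_{n,n}=\sum_{k=0}^n (L^{-1})_{n,k}\,\tilde A_{k,n}=\sum_{k=0}^n (-1)^{n+k}\binom{n}{k}\binom{\ell+2k-2n}{n},$$
and I must show this equals $\sum_{k=0}^{\ell-n}\binom{n}{k}$.

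Next I would encode both binomials as coefficient extractions in a single variable $t$. Expanding $\left(\tfrac{1}{t^2}-1\right)^n=\sum_k(-1)^{n-k}\binom{n}{k}t^{-2k}$ gives $(-1)^{n+k}\binom{n}{k}=\left(\tfrac{1}{t^2}-1\right)^n\big|_{t^{-2k}}$, while $(1-t)^{-(n+1)}=\sum_{m\ge0}\binom{n+m}{n}t^m$ gives $\binom{\ell+2k-2n}{n}=\dfrac{t^{3n-\ell}}{(1-t)^{n+1}}\big|_{t^{2k}}$. The shift $3n-\ell$ is nonnegative because $\ell\le 2n$ in this model, and this representation automatically builds in the vanishing of $\binom{\ell+2k-2n}{n}$ when $\ell+2k-2n<0$, since no negative powers of $t$ occur. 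Because the first factor carries only powers $t^{-2k}$ and the second only $t^{2k}$, Lemma~\ref{prodlem} collapses the sum over $k$ into the constant term of the product:
$$\tilde U_{n,n}=\left(\frac{1}{t^2}-1\right)^n\frac{t^{3n-\ell}}{(1-t)^{n+1}}\bigg|_{t^0}=\frac{(1-t)^n(1+t)^n}{t^{2n}}\,\frac{t^{3n-\ell}}{(1-t)^{n+1}}\bigg|_{t^0}=\frac{(1+t)^n}{1-t}\bigg|_{t^{\ell-n}}.$$

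Finally I would read off the coefficient from $\dfrac{(1+t)^n}{1-t}=(1+t)^n\sum_{m\ge0}t^m$: the coefficient of $t^{\ell-n}$ is $\sum_{k=0}^{\min(n,\ell-n)}\binom{n}{k}$, and since $\ell\le 2n$ forces $\ell-n\le n$, the upper limit is simply $\ell-n$. Dividing by $U_{n,n}=2^n$ yields the stated formula. I expect no genuine obstacle here: this is the same computation as for the original model, with the single substitution of the shift $t^{3n-\ell}$ in place of the earlier power. The only points needing care are choosing that shift correctly (which uses $\ell\le 2n$) and verifying the collapse $\min(n,\ell-n)=\ell-n$ (again from $\ell\le 2n$); for completeness I would also note that when $\ell<n$ both sides vanish, matching the empty sum on the right.
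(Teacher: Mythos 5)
Your proposal is correct and takes essentially the same approach as the paper: reduce everything to the single entry $\tilde U_{n,n}$ via the $LU$ method of Section~\ref{sec:paths-and-LU}, then evaluate the alternating binomial sum by generating-function coefficient extraction, arriving at the same key expression $\frac{(1+t)^n}{1-t}\big\vert_{t^{\ell-n}}$. The only cosmetic difference is that the paper first reindexes $k\to n-k$ and works with two variables (using the substitution $x\to x^2$ and Lemma~\ref{prodlem}), whereas you keep the sum as written and take a Laurent constant term in a single variable $t$ --- which is precisely how the paper treats the companion one-point function $H(n;\ell)$ earlier in the same section.
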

\begin{proof}
From the above definition of $\tilde A$, we see that $H_2(n;\ell)=0$ for $\ell<n$. Using the same $L$ as before,
we find that $L^{-1}{\tilde A}={\tilde U}$, with 
\begin{eqnarray*}
{\tilde U}_{n,n}&=&\sum_{k=0}^n (L^{-1})_{n,k} {\tilde A}_{k,n} 
=\sum_{k=\lfloor \frac{3n-\ell}{2}\rfloor}^{n} (-1)^{n+k} {n\choose k} {\ell+2k-2n \choose n}=
\sum_{k=0}^{\lfloor \frac{\ell-n}{2}\rfloor} (-1)^{k}{n\choose k} {\ell-2k \choose n}\\
&=&\sum_{k=0}^{\lfloor \frac{\ell-n}{2}\rfloor} (1-x)^n\Big\vert_{x^k} \frac{1}{(1-y)^{n+1}}\Bigg\vert_{y^{\ell-2k-n}} \\
&=& \sum_{k=0}^{\lfloor \frac{\ell-n}{2}\rfloor} (1-x^2)^n\Big\vert_{x^{2k}} \frac{1}{(1-y)^{n+1}}\Bigg\vert_{y^{\ell-2k-n}}
=\frac{(1-x^2)^n}{(1-x)^{n+1}}\Bigg\vert_{x^{\ell-n}}=\frac{(1+x)^n}{1-x}\Bigg\vert_{x^{\ell-n}}
=\sum_{k=0}^{\ell-n} {n\choose k} 
\end{eqnarray*}
where we have performed a change of variables $x\to x^2$ in the first factor, and used Lemma \ref{prodlem}.
The theorem follows.
\end{proof}

Taking the scaling limit: $n$ large, $p=zn$, $\ell=\xi n$, $k=x n$, we find that
$$H_2(n;\xi n)\sim\int_{0}^{\xi-1} dx\ e^{nS_0(x,\xi)},\quad S_0(x,\xi)=-(x{\rm Log}(x )+(1-x){\rm Log}(1-x))$$
and
$$W_2(n,zn;\xi \ell)\sim e^{nS_1(z,\xi)},\quad S_1(z,\xi)=(2-\xi){\rm Log}(2-\xi)-(z-1){\rm Log}(z-1)-(3-\xi-z){\rm Log}(3-\xi-z)$$

As before, we first extremize over $x$, with a saddle point at $x=x^*=1/2$. Two cases must be considered
(1) $\xi-1>1/2$, i.e.  $\xi>3/2$, then $S_0^*={\rm Log}(2)$.
(2) $\xi<3/2$, then $x^*=\xi-1$ and $S_0^*(\xi)=-(\xi-1){\rm Log}(\xi-1)-(2-\xi){\rm Log}(2-\xi)$.

Next we extremize over $\xi$. 
(1) if $\xi>3/2$ we find the saddle point equation $(3-\xi-z)=(2-\xi)$ which has no solution for $z>1$.
(2) hence $\xi<3/2$, and we have the saddle-point equation: $\partial_\xi S_0^*(\xi)+ \partial_{\xi} S_{1}(z,\xi)=0$,
namely:
$$ 3-\xi-z=\xi-1\ \  \Leftrightarrow \ \  \xi=2-\frac{z}{2}$$

The line through $(\xi,1)$ and $(2,z)$ has the equation:
$$ y- \frac{z-1}{2-\xi} x-\frac{2-z\xi}{2-\xi} = y- 2\frac{z-1}{z} x-2\frac{2-2 z+\frac{z^2}{2}}{z}=0$$
and the envelope is obtained by eliminating $z$ from this equation and its derivative w.r.t. $z$,
namely: 
$$ x=2-\frac{z^2}{2}\quad {\rm and}\quad y=z(2-z) \qquad (z\in [1,2])$$
Note that the limiting case $z=2$ corresponds to the tangent at  the origin, with slope $1$. 
As expected we recover the arctic parabola equation \eqref{parabo}:
$$-8 x + 4 x^2 + 8 y - 4 x y + y^2=0\ ,$$
now also valid for $0\leq x\leq 3/2$.

To summarize the methods employed above, we have represented in Fig.~\ref{fig:everyothercont}
the continuum limit of the arctic parabola
and two sample tangents obtained by the extremization procedures above.

\section{An interacting path model: Vertically Symmetric Alternating Sign Matrices}\label{sec:VSASM}

In this  last section we address the question of {\it interacting} lattice paths with the example of 
Vertically Symmetric Alternating Sign Matrices (VSASM). It turns out that VSASM, just like ordinary ASM
are in bijection with certain non-intersecting lattice paths called osculating paths, which may have "kissing points"
where two paths share a vertex without sharing edges, and without crossing. Assuming the applicability of the tangent method,
we will derive the arctic curve for large VSASM.

\label{sec:VSASM}

\subsection{Vertically symmetric alternating sign matrices}

An \emph{alternating sign matrix} (ASM) of size $n$ is an $n\times n$ matrix satisfying the following conditions:
\begin{itemize}
\item All matrix elements are equal to $1,0$, or $-1$.
\item The non-zero entries alternate in sign along each row and column.
\item The sum of all matrix elements in any row or column is equal to $1$.
\end{itemize}
Note that these three conditions taken together imply that in the first and last row of any ASM, and also in the first and 
last column, all matrix elements are equal to $0$ except for a single $1$.

A \emph{vertically symmetric alternating sign matrix} (VSASM) is an alternating sign matrix of size $2n+1$
which is also invariant under reflection about its middle column. We note that the properties of an ASM combined with the
vertically symmetric property of VSASMs implies that the central column of any VSASM has entries which alternate as
$1,-1,1,\dots,-1,1$. This implies that the unique $1$ in the top and bottom row of any VSASM is located in the middle entry.

We now collect several theorems on the enumeration and refined enumeration of ASMs and VSASMs which we make use of
in the Tangent method calculation for VSASMs later in this section. First we state the theorems on the number 
of ASMs of size $n$ and VSASMs of size $2n+1$ for any $n\in\mathbb{N}$. For ASMs we have:
\begin{thm}{\cite{zeilberger1,kuperberg1996another}}
Let $N_{\text{ASM}}(n)$ be the number of ASMs of size $n$. Then we have
\beq
	N_{\text{ASM}}(n)= \prod_{i=0}^{n-1}\frac{(3i+1)!}{(n+i)!}\ .
\eeq
\end{thm}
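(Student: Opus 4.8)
The plan is to prove this celebrated result by the most economical route, namely the connection between ASMs and the six-vertex model at its combinatorial (``ice'') point, following the strategy of Kuperberg rather than the longer original argument of Zeilberger. First I would set up the standard bijection between ASMs of size $n$ and configurations of the six-vertex model on an $n\times n$ grid with \emph{domain wall boundary conditions} (DWBC): horizontal arrows point inward on the left and right boundaries and vertical arrows point outward on the top and bottom. Under this bijection the six admissible (ice-rule) vertex configurations correspond to the entries $+1$, $-1$, and the four distinct types of $0$ entry. One then assigns Boltzmann weights $a,b,c$ to the three arrow-reversal classes of vertices, so that the partition function $Z_n$ becomes a weighted enumeration; at the special point $a=b=c$ every configuration carries equal weight, and with the normalization $a=b=c=1$ the quantity $Z_n$ counts ASMs exactly.

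The heart of the argument is a closed formula for $Z_n$ as a function of the spectral parameters $x_1,\dots,x_n$ and $y_1,\dots,y_n$ attached to the rows and columns. Integrability, i.e. the Yang--Baxter equation for the six-vertex $R$-matrix, forces $Z_n$ to obey a short list of defining properties: it is symmetric and, after clearing denominators, polynomial of bounded degree in each $x_i$; it satisfies a one-step recursion relating $Z_n$ to $Z_{n-1}$ upon specializing a spectral parameter; and it is pinned down by the initial value $Z_1$. These properties determine $Z_n$ uniquely, and I would verify that they are solved by the \emph{Izergin--Korepin determinant}, a single $n\times n$ determinant whose entries are rational functions of the $x_i$ and $y_j$ times an explicit product prefactor.

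The main obstacle, and the step I expect to be most delicate, is the specialization of this determinant to the combinatorial point. Setting all weights equal drives the spectral parameters to a common value, at which both the prefactor and the determinant degenerate, producing a $0/0$ indeterminacy. Resolving it requires a careful \emph{homogeneous limit}: one expands the determinant entries to sufficient order, extracts the leading contribution, and re-expresses the limiting object as a Hankel-type determinant of Catalan-like numbers (equivalently, a determinant arising from the associated lattice-path count). Evaluating that limiting determinant in closed form and simplifying the resulting ratio of factorials is precisely what yields the product $\prod_{i=0}^{n-1}(3i+1)!/(n+i)!$.

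An alternative route, which I would keep in reserve, is Zeilberger's original proof. Rather than passing through the six-vertex model directly, one proves the \emph{refined} enumeration, tracking the position of the unique $1$ in the top row, and matches it against the corresponding refined count of totally symmetric self-complementary plane partitions, whose product formula is already known from work of Andrews. The matching is carried out by reducing both sides to a single hypergeometric/constant-term identity, verified algorithmically. This path trades the analytic determinant limit of the first approach for an intricate but purely algebraic summation-identity verification.
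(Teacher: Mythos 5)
The first thing to note is that the paper itself contains \emph{no proof} of this statement: it is quoted as a known theorem, with citations to Zeilberger and to Kuperberg, and is used downstream purely as an input to the VSASM tangent-method computation (through the refined ratio $N_{\text{ASM}}(\tilde{n}-1,\ell)/N_{\text{ASM}}(\tilde{n}-1)$ and the function $r^{\text{ASM}}(t)$). So there is no internal argument to measure your proposal against; the honest comparison is with the cited literature, and your outline does faithfully reproduce Kuperberg's route --- the ASM/DWBC six-vertex bijection, the Izergin--Korepin determinant characterized by symmetry, bounded polynomial degree, Korepin's specialization recursion and the initial value, and finally the degeneration to the ice point --- with Zeilberger's constant-term argument held in reserve.

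That said, as a proof your text is a roadmap rather than an argument: both load-bearing steps are named but not executed, and they are precisely where the content of the theorem lives. First, the uniqueness claim (that the listed properties pin down $Z_n$) requires deriving Korepin's recursion from the Yang--Baxter equation and then verifying that the Izergin--Korepin determinant actually satisfies every item on the list; none of this is sketched beyond the assertion that ``integrability forces'' it. Second, and more seriously, the entire arithmetic of the formula resides in the homogeneous limit you defer: resolving the $0/0$ degeneration at $a=b=c$, recognizing the limiting object as an explicitly evaluable (Hankel-type) determinant, and carrying out that determinant evaluation in closed form. Until that evaluation is done, no factorial product $\prod_{i=0}^{n-1}\frac{(3i+1)!}{(n+i)!}$ has been produced, so the theorem has not been proved --- everything completed so far (bijection, weights, equal-weight specialization) is the routine part. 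A minor historical correction to your reserve plan: Zeilberger's original proof establishes the \emph{unrefined} count by matching a constant-term expression for ASMs against Andrews' product formula for TSSCPPs; the refined enumeration (tracking the position of the $1$ in the top row) is a separate, later result, and its proof itself runs through the six-vertex/Izergin--Korepin machinery rather than avoiding it.
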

In the VSASM case we have: 
\begin{thm}{\cite{kuperberg2002symmetry,RS}}
Let $N_{\text{VSASM}}(2n+1)$ be the number of VSASMs of size $2n+1$. Then we have
\beq
	N_{\text{VSASM}}(2n+1)= \frac{1}{2^n}\prod_{i=1}^n\frac{(6i-2)!(2i-1)!}{(4i-1)!(4i-2)!}\ .
\eeq
\end{thm}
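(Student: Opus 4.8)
The plan is to realize VSASMs as configurations of the six-vertex model on a square grid with suitable boundary conditions, and then to compute the corresponding partition function at the combinatorial (``ice'') point, where every allowed configuration has weight one so that the partition function literally counts the objects. Concretely, an ASM of size $N$ is in bijection with the states of the six-vertex model on an $N\times N$ grid with domain-wall boundary conditions (DWBC), in which the arrows on the four outer sides point inward on two opposite sides and outward on the other two. Imposing the vertical reflection symmetry on a grid of odd size $2n+1$ forces a configuration to be determined by its left half, with the central column acting as a reflecting wall. I would therefore fold the grid and encode VSASMs as states of the six-vertex model on a grid of width $n$ carrying DWBC on three sides and a U-turn (reflecting) boundary on the fourth, the boundary Yang--Baxter / reflection equation governing the U-turn weights.

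Next I would invoke the determinant formula for the partition function of this reflecting-boundary six-vertex model --- the one-reflecting-end analogue of the Izergin--Korepin formula for DWBC, due to Tsuchiya. This expresses the partition function $Z(\{x_i\},\{y_j\})$ as a determinant (times an explicit prefactor built from the spectral parameters $x_i,y_j$ and the crossing parameter) whose entries are elementary rational functions of the $x_i,y_j$. At this stage $Z$ depends on $2n$ free spectral parameters.

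Then I would specialize all spectral parameters to the ice point, where the three Boltzmann weights coincide so that each admissible VSASM contributes exactly $1$. Because the Tsuchiya determinant has a removable singularity there --- its prefactor vanishes while the bare determinant diverges --- this specialization demands a careful homogeneous limit: I would send the $x_i$ and $y_j$ to their common value while expanding the entries to the order needed to cancel the zero, turning the matrix into one of derivatives, i.e. of binomial coefficients. The result is a Hankel-type determinant with explicit binomial entries, which I would evaluate in closed form. In the LU-decomposition spirit used throughout this paper (Lemma~\ref{lemma:LGV} and Section~\ref{sec:paths-and-LU}), the cleanest route is to exhibit an explicit factorization $A=LU$ of this matrix and read off $Z=\prod_i U_{i,i}$, which after simplification yields the claimed product $\frac{1}{2^n}\prod_{i=1}^n\frac{(6i-2)!(2i-1)!}{(4i-1)!(4i-2)!}$.

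The main obstacle is precisely this singular homogeneous limit together with the ensuing closed-form evaluation: controlling the $0/0$ limit so that the surviving determinant acquires clean binomial entries, and then finding its LU decomposition (or an equivalent Hankel evaluation), is where all the genuine work lies. In particular the prefactor $2^{-n}$ and the arithmetic-progression shifts inside the factorials in the numerator (the $6i-2$, $4i-1$, $4i-2$ pattern) are artifacts of the reflecting boundary, and must emerge from the U-turn weights and their contribution to the pivots $U_{i,i}$ in the limit --- which is the most delicate piece of bookkeeping in the argument.
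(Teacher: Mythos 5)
The first thing to say is that the paper contains no proof of this statement: it is imported verbatim from the cited references \cite{kuperberg2002symmetry,RS}, and the authors even remark explicitly, a little further on, that ``the usual approach to compute $N_{\text{VSASM}}(2n+1)$\dots uses a mapping from VSASMs to configurations of the six vertex model with `U-turn' boundary conditions, as originally considered by Kuperberg,'' and that they do \emph{not} follow that approach, since the only ingredient needed for their Tangent method calculation is Theorem~\ref{thm:raz-strog}. So there is no in-paper argument to compare against; what you have written is a reconstruction of the proof strategy of the cited literature, and as such it assembles the right ingredients: folding a $(2n+1)\times(2n+1)$ DWBC six-vertex configuration along the forced central column $1,-1,1,\dots,-1,1$, Tsuchiya's determinant formula for the partition function with one reflecting end, specialization to the ice point where all weights equal $1$, and a closed-form evaluation of the resulting determinant, with the $2^{-n}$ and the $(6i-2)!,(4i-1)!,(4i-2)!$ pattern emerging from the U-turn weights.

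Two caveats. First, your proposal is a program rather than a proof: every genuinely hard step --- the singular homogeneous limit, the closed-form (Hankel or LU) evaluation, and the bookkeeping that produces the stated product --- is named and deferred, so as written nothing is actually established; this is acceptable as a summary of the literature but should not be mistaken for a self-contained argument. Second, the route you describe, through a $0/0$ homogeneous limit producing a determinant of derivatives/binomial coefficients, is the Izergin--Colomo--Pronko--Razumov--Stroganov style of computation (and is indeed essentially what \cite{RS} does); Kuperberg's own proof in \cite{kuperberg2002symmetry} avoids that delicate limit altogether by keeping the spectral parameters generic, using that the partition function is a symmetric Laurent polynomial of bounded degree, and evaluating the Tsuchiya-type determinant at the root of unity $q=e^{i\pi/3}$, where it factorizes into explicit products. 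If you intend to carry your plan out in detail, that factorization route is considerably less painful than controlling the removable singularity you flag as the main obstacle.
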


Next we consider the refined enumerations of ASMs and VSASMs. For ASMs recall that the top row has all entries equal to 
$0$ except for a single entry which is equal to $1$. We may therefore consider a refined enumeration in which we
compute the number of ASMs of size $n$ which have the unique $1$ in their top row in the $\ell^{th}$ position, for 
$\ell\in[1,n]$. Then we have the following theorem.
\begin{thm}{\cite{zeilberger2}}
Let $N_{\text{ASM}}(n,\ell)$ be the number of ASMs of size $n$ with the unique $1$ in the first row located in column 
$\ell$. Then we have
\beq
	N_{\text{ASM}}(n,\ell)= \frac{\binom{n+\ell-2}{n-1}\binom{2n-1-\ell}{n-1}}{\binom{3n-2}{n-1}}N_{\text{ASM}}(n)\ .
\eeq
\end{thm}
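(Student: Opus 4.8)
The statement is the refined ASM theorem, originally conjectured by Mills, Robbins and Rumsey and first proved in \cite{zeilberger2}; my plan is to explain the route I find cleanest rather than to reproduce that (long) argument. A useful first observation is that the claimed formula satisfies a simple first-order recursion in $\ell$. Writing $A_{n,\ell}:=N_{\text{ASM}}(n,\ell)$, the proposed expression gives
$$\frac{A_{n,\ell+1}}{A_{n,\ell}}=\frac{(n+\ell-1)(n-\ell)}{\ell\,(2n-1-\ell)}\ .$$
So the recursion alone determines $A_{n,\ell}$ up to the $\ell$-independent prefactor $A_{n,1}$, and this prefactor is fixed by either of two elementary inputs: the corner bijection $A_{n,1}=N_{\text{ASM}}(n-1)$ (forcing the top-row $1$ into the first column propagates down the first row and column and leaves an ASM of size $n-1$), or the normalization $\sum_{\ell=1}^n A_{n,\ell}=N_{\text{ASM}}(n)$. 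The latter is elementary once the product formula is in hand, since $\sum_\ell \binom{n+\ell-2}{n-1}\binom{2n-1-\ell}{n-1}=\binom{3n-2}{n-1}$ is a Vandermonde convolution of exactly the type handled by Lemma \ref{prodlem}. Thus the entire content of the theorem is concentrated in establishing the ratio recursion for the \emph{true} refined numbers.

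To obtain that recursion I would use the six-vertex model with domain wall boundary conditions, whose configurations are in bijection with ASMs and for which the position $\ell$ of the top-row $1$ corresponds to the horizontal location of a distinguished boundary vertex. The steps, in order, would be: first introduce the inhomogeneous partition function $Z_n(x_1,\dots,x_n;y_1,\dots,y_n)$ and invoke the Izergin--Korepin determinant for it; next define the boundary one-point function $h_n^{(\ell)}$ by specializing the data in the first row so that the partition function is restricted to configurations with the distinguished vertex at position $\ell$, again expressed through a determinant of the same family; then pass to the homogeneous combinatorial (ice) point, where all Boltzmann weights coincide, so that $Z_n$ counts ASMs and $h_n^{(\ell)}$ counts refined ASMs. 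Evaluating the resulting determinant in this limit would yield the product of binomials. This is essentially the Colomo--Pronko computation of the boundary correlation function \cite{CP2010,CNP}.

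The step I expect to be the main obstacle is precisely this ice-point evaluation. At the combinatorial point the spectral parameters collide, so the Izergin--Korepin determinant degenerates into a $0/0$ form, and one must take the confluent limit carefully (e.g.\ via successive derivatives, or by re-expressing the determinant through a family of orthogonal polynomials) before any closed form emerges. An alternative that sidesteps the six-vertex machinery is Fischer's operator formula for the number of monotone triangles with prescribed bottom row: monotone triangles with bottom row $(1,2,\dots,n)$ biject with ASMs and their top entry records $\ell$, so the refined count is a single specialization of that formula, with the same explicit-evaluation difficulty now relocated into a product of shift operators. Either way, the hard part is extracting the closed product of binomials from an a priori unwieldy determinantal or operator expression, which is exactly why the original proof in \cite{zeilberger2} is so involved.
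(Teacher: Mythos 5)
You should first note that the paper does not prove this statement at all: it is imported verbatim from the literature with the citation \cite{zeilberger2}, and within the paper it enters only as an input (ultimately through Theorem~\ref{thm:raz-strog} and the known ASM boundary function $r^{\text{ASM}}(t)$ quoted from \cite{CP2010}). So there is no internal proof to match your attempt against; the only question is whether your outline stands as a proof on its own.

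It does not, and the gap is exactly where you say it is. Your preliminary reductions are all correct and easily checked: the ratio $A_{n,\ell+1}/A_{n,\ell}=\frac{(n+\ell-1)(n-\ell)}{\ell\,(2n-1-\ell)}$ does follow from the claimed product formula; the corner bijection $A_{n,1}=N_{\text{ASM}}(n-1)$ is a standard fact; and the convolution identity $\sum_{\ell=1}^{n}\binom{n+\ell-2}{n-1}\binom{2n-1-\ell}{n-1}=\binom{3n-2}{n-1}$ is a true Vandermonde-type identity, provable by the generating-function mechanism of Lemma~\ref{prodlem}. But these facts only show that the claimed formula is the \emph{unique} sequence satisfying the recursion with the correct normalization. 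The entire content of the theorem is to show that the \emph{true} refined counts satisfy that recursion (equivalently, to evaluate the six-vertex boundary one-point function at the ice point), and your proposal explicitly defers this: ``invoke the Izergin--Korepin determinant and take the confluent limit carefully'' is a pointer to the Colomo--Pronko computation \cite{CP2010,CNP}, not an argument. The degeneration of the determinant when the spectral parameters collide (handled in the literature via Hankel determinants and classical orthogonal polynomials, or in \cite{zeilberger2} by an entirely different constant-term/partial-fraction tour de force) is where all the difficulty lives, and nothing in the outline carries it out. As written, this is an accurate and well-organized roadmap to existing proofs, but it is not itself a proof of the statement.
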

Note also that by the $90^{\circ}$ rotation symmetry of the ASM enumeration problem, $N_{\text{ASM}}(n,\ell)$
is also equal to the number of ASMs of size with their unique $1$ at position $\ell$ in the first column, or in the
last column, or in the last row. 

In the case of VSASMs we already remarked that the unique $1$ in the first and last rows must appear in the central column.
However, the unique $1$ in the first \emph{column} can still take any position (and likewise for the last column). 
We can therefore
consider a refined enumeration in which we compute the number of VSASMs of size $2n+1$ in which the unique
$1$ in the first column is at position $\ell$ for $\ell\in[1,2n+1]$. Then we have:
\begin{thm}{\cite{RS}}
Let $N_{\text{VSASM}}(2n+1,\ell)$ be the number of VSASMs of size $2n+1$ with the unique $1$ in the first column located in 
row $\ell$. Then we have
\beq
	N_{\text{VSASM}}(2n+1,\ell)= \frac{N_{\text{VSASM}}(2n-1)}{(4n-2)!}\sum_{i=1}^{\ell-1}(-1)^{\ell+i-1}\frac{(2n+i-2)!(4n-i-1)!}{(i-1)!(2n-i)!}\ .
\eeq
\end{thm}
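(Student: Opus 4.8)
The plan is to obtain this refined enumeration from the integrable structure underlying VSASMs, namely the six-vertex model with a reflecting (U-turn) boundary that encodes the vertical mirror symmetry, following the strategy of Kuperberg and of Razumov and Stroganov. First I would record the bijection between VSASMs of size $2n+1$ and configurations of the six-vertex model on a square lattice carrying domain-wall boundary conditions on three sides and a U-turn boundary along the symmetry axis. Under this correspondence the row $\ell$ of the unique $1$ in the first column of the VSASM is read off from the arrow configuration on a single distinguished boundary edge, and the total count $N_{\text{VSASM}}(2n+1)$ is the value of the U-turn domain-wall partition function at the combinatorial (ice) point where all Boltzmann weights are equal.

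Next I would attach a spectral parameter $z$ to the boundary row carrying the refinement, so that the generating polynomial $\sum_{\ell} N_{\text{VSASM}}(2n+1,\ell)\,z^{\ell-1}$ is extracted from the inhomogeneous partition function $Z_n(z)$ as a polynomial in $z$. The essential input is Tsuchiya's determinant formula for the U-turn domain-wall partition function, which writes $Z_n$ as a determinant whose entries are explicit rational functions of the spectral parameters. Passing to the homogeneous combinatorial limit, in which all bulk parameters collide, degenerates this determinant into a Wronskian-type object; the surviving dependence on $z$ yields, upon extraction of the coefficient of $z^{\ell-1}$, exactly the stated alternating sum, while the prefactor $N_{\text{VSASM}}(2n-1)/(4n-2)!$ emerges from the overall normalization of the degenerate determinant and records the recursion in size from $2n-1$ to $2n+1$.

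A useful consolidation, which doubles as an independent check, is to rewrite the summand using
\[
\frac{(2n+i-2)!\,(4n-i-1)!}{(i-1)!\,(2n-i)!} = (2n-1)!^2\binom{2n+i-2}{2n-1}\binom{4n-i-1}{2n-1}.
\]
The product $\binom{2n+i-2}{2n-1}\binom{4n-i-1}{2n-1}$ is, up to the normalizing binomial $\binom{3(2n)-2}{2n-1}$, precisely the $\ell$-dependent numerator of the refined ordinary-ASM count $N_{\text{ASM}}(2n,i)$ of the third theorem above. Thus the statement asserts that the refined VSASM number is an alternating partial sum of the refined ASM counts of \emph{even} size $2n$, a relation one can anticipate from the way the U-turn partition function factors through the ordinary domain-wall partition function at the ice point. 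This reformulation reduces the claim to the already-established refined ASM enumeration together with a telescoping identity in $\ell$, which I would attack with the convolution and coefficient-extraction lemmas (Lemma~\ref{prodlem} and Lemma~\ref{fourwaylem}) developed earlier in the paper.

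The main obstacle will be the combinatorial limit of the Tsuchiya determinant: at the ice point many spectral parameters coincide, so the determinant is a $0/0$ indeterminacy that must be resolved by successive Taylor expansions (equivalently, by passing to a Wronskian and applying L'Hopital-type arguments), and it is exactly this degeneration that converts the manifestly positive product formulas into the alternating-sum form of the statement. Controlling the bookkeeping of these expansions and matching the emergent normalization to $N_{\text{VSASM}}(2n-1)/(4n-2)!$ is the delicate step; the alternative route through the refined ASM identity trades this analytic difficulty for the purely combinatorial task of establishing the requisite alternating-sum recursion, which should be within reach of the generating-function techniques already in use here.
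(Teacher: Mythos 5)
There is no proof in the paper to compare against: the statement is quoted verbatim from Razumov--Stroganov \cite{RS}, and the authors explicitly decline to reprove it, remarking in a side note that the usual derivation goes through the six vertex model with U-turn boundary conditions and that they will only need Theorem~\ref{thm:raz-strog} from that circle of ideas. So your first route (Tsuchiya's determinant for the U-turn partition function, a spectral parameter on the refined boundary row, degeneration at the ice point) is a sketch of the proof in the cited literature, not of anything in this paper; and as written it has a genuine gap, since all of the actual content --- resolving the $0/0$ degeneration of the inhomogeneous determinant when the parameters collide, and matching the resulting normalization to $N_{\text{VSASM}}(2n-1)/(4n-2)!$ --- is described as an obstacle rather than carried out.

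Your second route, by contrast, does close, and is the cleanest thing one can do with the ingredients this paper actually quotes. Your binomial identity is correct, and combining Theorem~\ref{thm:raz-strog} with the refined ASM enumeration of \cite{zeilberger2} at size $2n$ requires nothing more than expanding $\frac{t}{1+t}=\sum_{j\ge 1}(-1)^{j-1}t^j$ and taking the coefficient of $t^{\ell-1}$, which gives
\[
N_{\text{VSASM}}(2n+1,\ell)=\frac{N_{\text{VSASM}}(2n-1)}{N_{\text{ASM}}(2n-1)}\sum_{i=1}^{\ell-1}(-1)^{\ell+i-1}N_{\text{ASM}}(2n,i)\ ;
\]
no further ``telescoping identity'' is needed. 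The prefactor then collapses to the stated $N_{\text{VSASM}}(2n-1)/(4n-2)!$ because the ASM product formula gives $N_{\text{ASM}}(2n)/N_{\text{ASM}}(2n-1)=(6n-2)!\,(2n-1)!/\bigl((4n-2)!\,(4n-1)!\bigr)$, which exactly cancels the factor $\binom{6n-2}{2n-1}(2n-1)!^2$ coming from your rewriting of the summand. The one caveat you should state openly: Theorem~\ref{thm:raz-strog} is itself a result of \cite{RS}, and the algebra above is reversible, so the generating-function identity and the alternating-sum formula are equivalent packagings of the same fact; your second route is therefore a consistency verification among the cited results rather than an independent proof. An independent proof must go through your first route, where the hard analytic work remains to be done.
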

Due the reflection symmetry of VSASMs, $N_{\text{VSASM}}(2n+1,\ell)$ is also the number of VSASMs of size $2n+1$ with the 
unique $1$ in the \emph{last} column located in row $\ell$. As mentioned before, such an alternating  formula is not suitable
for large size estimates, as terms may cancel out.
Fortunately, we will be able to use the following formula relating the refined enumerations of ASMs and VSASMs. More precisely, the
formula relates certain generating functions formed from the refined enumerations of ASMs and VSASMs. We have:
\begin{thm}{\cite{RS}}
\label{thm:raz-strog}
The following equality holds:
\beq
	\frac{1}{N_{\text{VSASM}}(2n-1)}\sum_{\ell=1}^{2n}N_{\text{VSASM}}(2n+1,\ell) t^{\ell-1}= \frac{1}{N_{\text{ASM}}(2n-1)}\frac{t}{t+1}\sum_{\ell=1}^{2n} N_{\text{ASM}}(2n,\ell)t^{\ell-1}\ .
\eeq
\end{thm}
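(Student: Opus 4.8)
The plan is to prove the identity by evaluating both generating polynomials directly from the explicit (refined and unrefined) enumeration formulas and matching coefficients after clearing the common denominator $t+1$. Write $A(t):=\sum_{\ell=1}^{2n}N_{\text{VSASM}}(2n+1,\ell)\,t^{\ell-1}$ and $B(t):=\sum_{\ell=1}^{2n}N_{\text{ASM}}(2n,\ell)\,t^{\ell-1}$; the claim is $A(t)/N_{\text{VSASM}}(2n-1)=\tfrac{t}{t+1}\,B(t)/N_{\text{ASM}}(2n-1)$. Both $A$ and $B$ have degree $2n-1$, so multiplying through by $(t+1)$ turns the statement into an equality of two polynomials of degree $2n$, namely $(t+1)A(t)/N_{\text{VSASM}}(2n-1)=t\,B(t)/N_{\text{ASM}}(2n-1)$, which I will check coefficient by coefficient for $t^k$, $k=0,1,\dots,2n$ (the constant terms vanish on both sides since the $\ell=1$ VSASM refinement is an empty sum and $t\,B(t)$ has no constant term).

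First I would simplify the right-hand side. Substituting $N_{\text{ASM}}(2n,\ell)=\binom{2n+\ell-2}{2n-1}\binom{4n-1-\ell}{2n-1}N_{\text{ASM}}(2n)/\binom{6n-2}{2n-1}$ and using the product formula for $N_{\text{ASM}}(n)$, the ratio $N_{\text{ASM}}(2n)/N_{\text{ASM}}(2n-1)$ telescopes to $\frac{(6n-2)!\,(2n-1)!}{(4n-1)!\,(4n-2)!}$, so that, after cancelling against $\binom{6n-2}{2n-1}=\frac{(6n-2)!}{(2n-1)!\,(4n-1)!}$, the entire prefactor collapses to the clean constant $\frac{((2n-1)!)^2}{(4n-2)!}$. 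Hence $t\,B(t)/N_{\text{ASM}}(2n-1)=\frac{((2n-1)!)^2}{(4n-2)!}\sum_{\ell=1}^{2n}\binom{2n+\ell-2}{2n-1}\binom{4n-1-\ell}{2n-1}t^{\ell}$.

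On the left-hand side, the refined VSASM formula is itself a partial alternating sum: $N_{\text{VSASM}}(2n+1,\ell)=\frac{N_{\text{VSASM}}(2n-1)}{(4n-2)!}\sum_{i=1}^{\ell-1}(-1)^{\ell+i-1}a_i$ with $a_i:=\frac{(2n+i-2)!\,(4n-i-1)!}{(i-1)!\,(2n-i)!}$. The key structural point is that multiplication by $(1+t)$ telescopes these partial sums: collecting powers in $(1+t)A(t)$, the coefficient of $t^k$ equals $a_k/(4n-2)!$ for $1\le k\le 2n-1$, while the top coefficient ($k=2n$) is the leftover full alternating sum $\frac{1}{(4n-2)!}\sum_{i=1}^{2n-1}(-1)^{i-1}a_i$ (this is precisely why the factor $\tfrac{t}{t+1}$ appears in the theorem). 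Rewriting $a_k=((2n-1)!)^2\binom{2n+k-2}{2n-1}\binom{4n-k-1}{2n-1}$ shows that for $1\le k\le 2n-1$ the two sides agree termwise with the right-hand side computed above; this range is pure bookkeeping.

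The only genuine computation, and the step I expect to be the main obstacle, is the top coefficient $k=2n$. After cancelling the common constant $\frac{((2n-1)!)^2}{(4n-2)!}$ (and using $\binom{2n-1}{2n-1}=1$ on the right), it reduces to the alternating binomial identity
\[
\sum_{i=1}^{2n-1}(-1)^{i-1}\binom{2n+i-2}{2n-1}\binom{4n-i-1}{2n-1}=\binom{4n-2}{2n-1}.
\]
I would prove this with the generating-function machinery of Section~\ref{truncsec}. By Lemma~\ref{fourwaylem} and the substitution $x\to-x$ one has $(-1)^{i-1}\binom{2n+i-2}{2n-1}=\frac{1}{(1+x)^{2n}}\big\vert_{x^{i-1}}$ and $\binom{4n-i-1}{2n-1}=\frac{1}{(1-y)^{2n}}\big\vert_{y^{2n-i}}$, whose total degree $(i-1)+(2n-i)=2n-1$ is constant. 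Lemma~\ref{prodlem} convolves these into $\frac{1}{(1+x)^{2n}(1-x)^{2n}}\big\vert_{x^{2n-1}}=\frac{1}{(1-x^2)^{2n}}\big\vert_{x^{2n-1}}$, which vanishes because $1/(1-x^2)^{2n}$ is even. The sum over $i=1,\dots,2n-1$ is this vanishing full convolution minus its single boundary term $i=2n$, namely $\frac{1}{(1+x)^{2n}}\big\vert_{x^{2n-1}}\cdot\frac{1}{(1-y)^{2n}}\big\vert_{y^{0}}=-\binom{4n-2}{2n-1}$; subtracting it gives exactly $\binom{4n-2}{2n-1}$. With this identity all coefficients of $t^k$ match, establishing the polynomial identity and hence Theorem~\ref{thm:raz-strog}.
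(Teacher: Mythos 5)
Your proof is correct, but it cannot be measured against ``the paper's own proof'' because there is none: Theorem~\ref{thm:raz-strog} is quoted from Razumov--Stroganov \cite{RS} without proof, precisely so that the asymptotic analysis can bypass the alternating-sum formula for $N_{\text{VSASM}}(2n+1,\ell)$. What you have actually done is derive one quoted result from the other quoted results (the alternating-sum refined VSASM formula, Zeilberger's refined ASM formula \cite{zeilberger2}, and the two product formulas), using only the paper's own generating-function toolkit (Lemmas~\ref{prodlem} and~\ref{fourwaylem}). I checked the three ingredients and all are sound: the right-hand prefactor does collapse to $((2n-1)!)^2/(4n-2)!$, since $N_{\text{ASM}}(2n)/N_{\text{ASM}}(2n-1)=(6n-2)!\,(2n-1)!/\bigl[(4n-1)!\,(4n-2)!\bigr]$; multiplication by $1+t$ does telescope the partial alternating sums, giving in your notation $c_k+c_{k+1}=a_k$ for $1\le k\le 2n-1$, which is indeed the structural origin of the factor $t/(t+1)$; and the top coefficient reduces to the identity
\begin{equation*}
\sum_{i=1}^{2n-1}(-1)^{i-1}\binom{2n+i-2}{2n-1}\binom{4n-i-1}{2n-1}=\binom{4n-2}{2n-1},
\end{equation*}
which your parity argument proves correctly: the full convolution over $i\in[1,2n]$ equals $(1-x^2)^{-2n}\big\vert_{x^{2n-1}}=0$ because the series is even and the exponent is odd, and the excluded boundary term $i=2n$ contributes $-\binom{4n-2}{2n-1}$. (Equivalently, $c_{2n}=a_{2n}=(4n-2)!$, so the coefficient pattern in fact extends to $k=2n$; the same identity re-derives the paper's assertion that $N_{\text{VSASM}}(2n+1,2n+1)=0$.) As for what each route buys: your derivation makes this step of the paper self-contained modulo its remaining citations, which is a genuine service to the reader; on the other hand, in \cite{RS} the generating-function relation and the alternating-sum refined formula are essentially two faces of the same U-turn-boundary computation, so your argument reverses the logical order of the original literature and should be presented as a consistency derivation within the paper's set of quoted facts rather than as an independent proof of the Razumov--Stroganov relation.
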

Note also that on the left-hand side of this equation we can extend the range of the sum to $\ell\in[1,2n+1]$ due the
fact that $N_{\text{VSASM}}(2n+1,1)= N_{\text{VSASM}}(2n+1,2n+1)=0$ by the properties of VSASMs. 

\subsection{Mapping to six vertex model configurations}

The ASMs of size $n$ are in bijection with configurations of the six vertex model with \emph{domain wall} boundary conditions 
(DWBCs) on an $n\times n$ portion of the square lattice $\mathbb{Z}^2$. We now give a short review of the six vertex model
and the DWBCs, and then discuss the mapping from configurations of the six vertex model with DWBCs to ASMs.

The six vertex model is a two-dimensional statistical mechanical model in which the degrees of freedom are
arrows placed on the edges of the square lattice $\mathbb{Z}^2$. 
In addition, the arrow configurations are constrained to satisfy the \emph{ice rule}, which states that at each vertex the
number of incoming arrows must be equal to the number of outgoing arrows. On the square lattice there are exactly six
possible arrow configurations around a vertex that satisfy the ice rule, thus explaining the name of this model. The six possible
vertex configurations are traditionally divided into three types called $a$, $b$, and $c$-type vertices. Finally, there is also a 
representation of configurations of the six vertex model in terms of lattice paths. In this representation each possible 
configuration of arrows around a vertex is mapped to a segment of one or two lattice paths. We display the six
possible vertex configurations of this model, along with their type ($a$, $b$, or $c$) and their representation in terms of lattice
paths in Fig.~\ref{fig:6V-arrows}. The additional data in the last row of Fig.~\ref{fig:6V-arrows} (namely, the numbers 
$0,1$ or $-1$) is related to the mapping to ASMs and will be explained shortly.

\begin{figure}
\centering
\includegraphics[width=12.cm]{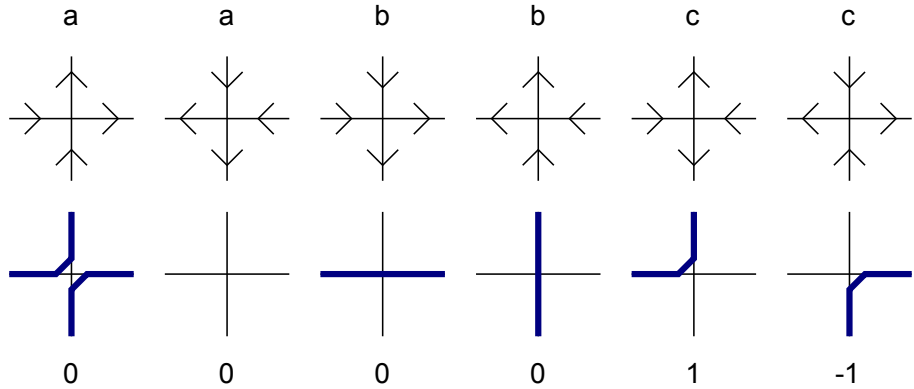}
\caption{\small The six kinds of vertex in the six vertex model, together with their corresponding lattice path configuration and
the matrix element (i.e., $1,0$, or $-1$) that they map to in the mapping from six vertex model configurations to ASMs.}
\label{fig:6V-arrows}
\end{figure}

We note here that in the lattice path formulation the six vertex model should be thought of as a model of \emph{interacting} 
lattice paths due to the fact that two paths are allowed to touch, but not cross, at a single vertex (see the first $a$-type 
vertex in  Fig.~\ref{fig:6V-arrows}). This should be contrasted with the other lattice path models studied in this
paper in which the paths are all non-intersecting. In non-intersecting lattice path models two paths are never allowed to meet 
at a single vertex. We may therefore characterize the non-intersecting lattice path models as models of \emph{non-interacting}
lattice paths.

We now describe the DWBCs for the six vertex model, which were first considered by Korepin~\cite{Korepin}. 
The six vertex model with DWBCs simply consists of the six vertex model formulated on an $n\times n$ region of the 
square lattice, in which the arrows on the edges which ``stick out" from the boundary are fixed to point \emph{in} towards
the domain on the left and right boundaries, and \emph{out} of the domain on the top and bottom boundaries. In this lattice
path formulation the DWBCs imply that on an $n\times n$ domain there are $n$ lattice paths which enter at the top of the
domain and exit on the left side of the domain. A sample configuration of the six vertex model with DWBCs, in both the
arrow and lattice paths formulations, is shown for a system of size $n=5$ in Fig.~\ref{fig:6V-config}.

To define the partition function for the six vertex model we assign weights to each of the six possible vertex configurations.
We will make a slight abuse of notation and use the letters $a$, $b$, and $c$ to also denote the weights of the vertices of
types $a$, $b$, and $c$, respectively. The weight $w(\mathcal{C})$ of a configuration $\mathcal{C}$ of the six vertex model
is defined to be equal to the product of the weights of all vertices in the configuration $\mathcal{C}$. The partition function
$Z_{\text{6VDMBC}}(n;a,b,c)$ for the six vertex model with DWBCs for a system of size $n$ is defined as
\beq
	Z_{\text{6VDMBC}}(n;a,b,c)= \sum_{\mathcal{C}}w(\mathcal{C})\ ,
\eeq
where the sum is taken over all configurations $\mathcal{C}$ consistent with the DWBCs. Here we have indicated explicitly that
the partition function $Z_{\text{6VDMBC}}(n;a,b,c)$ is a function of the weights $a$, $b$, and $c$. 

\begin{figure}
\centering
\includegraphics[width=12.cm]{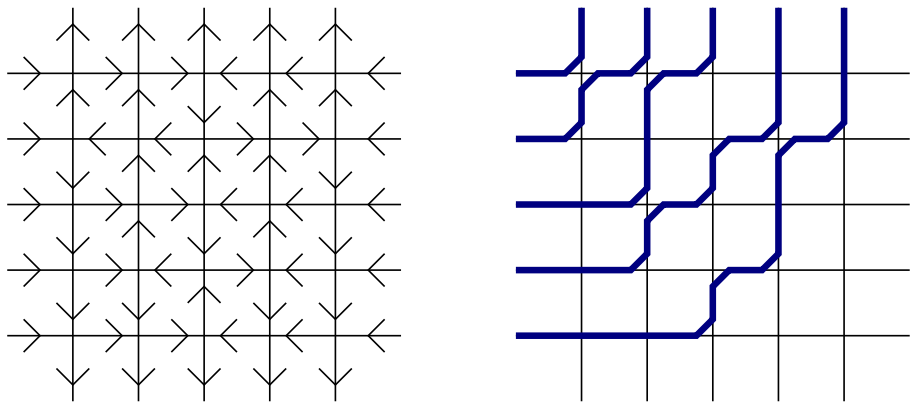}
\caption{\small The six vertex model configuration, in both the arrow and lattice path formulations, which is mapped to the
VSASM of size $5$ in Eq.~\eqref{eq:VSASM-size-5}.}
\label{fig:6V-config}
\end{figure}

It is known that the ASMs of size $n$ are in bijection with the configurations of the six vertex model on an $n\times n$ grid
and with DWBCs. 
The mapping from configurations of the six vertex model to ASMs is a set of rules that convert a vertex in the
six vertex model to a matrix element ($1,0$, or $-1$) in an ASM. We exhibit this rule in the last row of 
Fig.~\ref{fig:6V-arrows}. As an example, the six vertex model configuration shown in Fig.~\ref{fig:6V-config} maps to the
following ASM of size $5$:
\beq
	\begin{pmatrix}
	0 & 0&1&0&0 \\
	1&0&-1&0&1\\
	0&0&1&0&0\\
	0&1&-1&1&0 \\
	0&0&1&0&0
\end{pmatrix} \label{eq:VSASM-size-5}\ .
\eeq
We note here that this particular ASM is also a VSASM.

Since the ASMs of size $n$ are in bijection with the configurations of the six vertex model with DWBCs on a domain of size $n$,
we have that
\beq
	Z_{\text{6VDMBC}}(n;1,1,1)= N_{\text{ASM}}(n)\ .
\eeq
The particular parameter values $a=b=c=1$ correspond to the so-called \emph{ice point}. In what follows we will need a modified
partition function for the six vertex model with DWBCs in which we only sum over configurations $\tilde{\mathcal{C}}$
which can be mapped to a VSASM. Clearly, this modified partition function only exists for a square domain of odd size.
We denote this modified partition function by $\tilde{Z}_{\text{6VDMBC}}(2n+1;a,b,c)$, and we have
\beq
	\tilde{Z}_{\text{6VDMBC}}(2n+1;a,b,c)= \sum_{\tilde{\mathcal{C}}}w(\tilde{\mathcal{C}})\ ,
\eeq
where again the sum is only over those configurations of the six vertex model with DWBCs on a $(2n+1)\times(2n+1)$
grid which map to a VSASM. From the definition of this modified partition function it is also clear that
\beq
	\tilde{Z}_{\text{6VDMBC}}(2n+1;1,1,1)= N_{\text{VSASM}}(2n+1)\ .
\eeq

As a side note, we remark here that the usual approach to compute $N_{\text{VSASM}}(2n+1)$ and the refined enumeration
$N_{\text{VSASM}}(2n+1,\ell)$ uses a mapping from VSASMs to configurations of the six vertex model with ``U-turn" 
boundary conditions, as originally considered by Kuperberg~\cite{kuperberg2002symmetry}. 
We do not follow this approach here since the
only ingredient we will need for the application to the Tangent method is Theorem~\ref{thm:raz-strog} relating the refined
enumeration of VSASMs to that of ASMs. We now move on to the discussion of the Tangent method for this system.

\subsection{Tangent method}

We now apply the Tangent method to compute the arctic curve for VSASMs. We will prove that the arctic curve for VSASMs is
actually identical to the arctic curve of ordinary ASMs. Any possible differences between these two curves disappear in 
the scaling limit. We should mention that in the context of VSASMs, the arctic curve can be 
understood as separating a ``disordered" interior region of a VSASM, in which $1,0,$ and $-1$ entries occur, and an ordered
or frozen region in which only zeros appear. For VSASMs (and also for ASMs) the arctic curve will actually touch each of the four 
boundaries of the domain at a single point, and this point corresponds to the most likely location for the unique $1$ 
entry in the last row or column of the VSASM (or ASM). 

\begin{figure}
\centering
\includegraphics[width=12.cm]{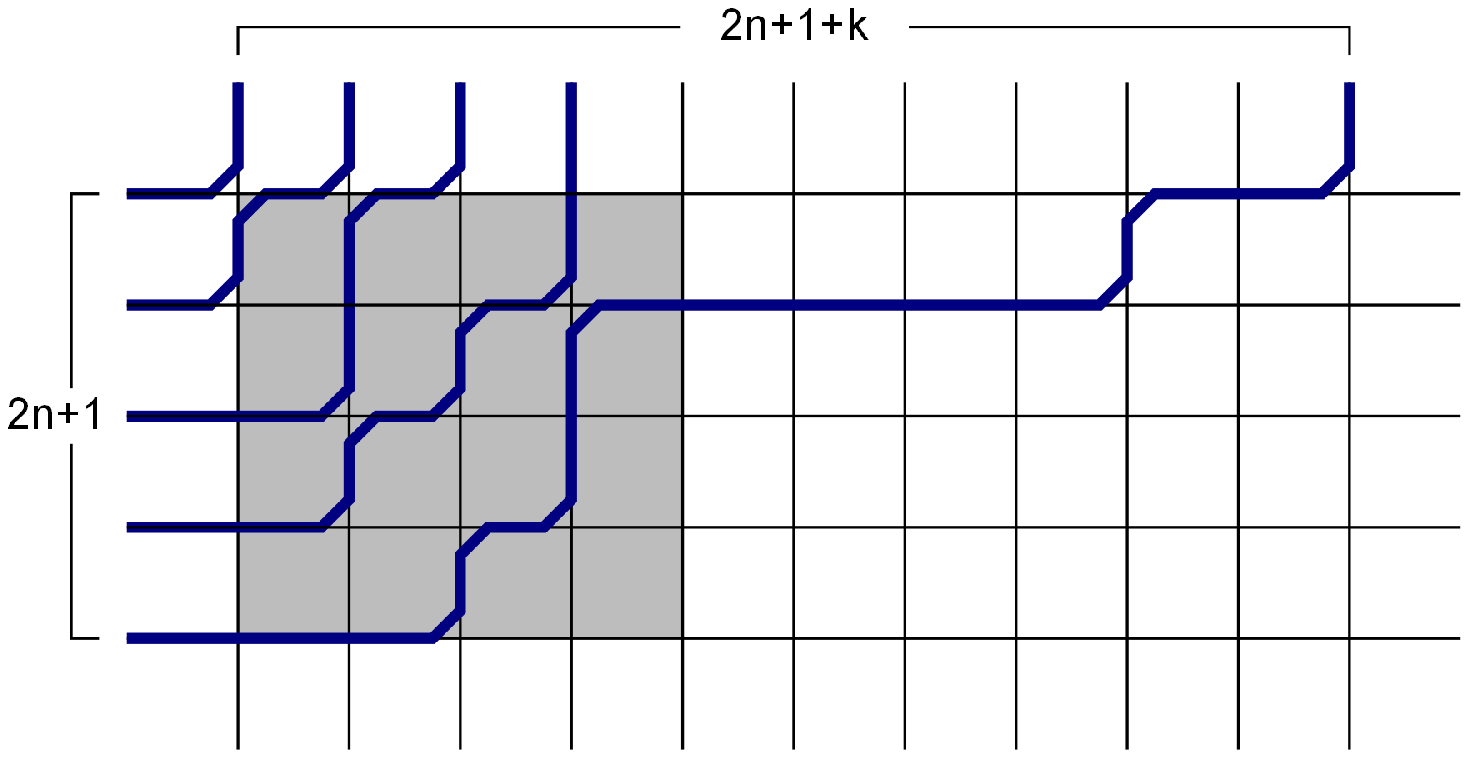}
\caption{\small The extended domain of size $(2n+1+k)\times n$ used for the Tangent method calculation for VSASMs (with
$2n+1=5$ and $k=6$).}
\label{fig:6V-extended}
\end{figure}

To apply the Tangent method we need to extend the original domain of the six vertex model with DWBCs. In their paper,
Colomo and Sportiello chose to extend the domain vertically. Here we choose to extend the domain horizontally instead, as
this type of extension is more natural for VSASMs. The reason is that in the case of VSASMs, the path (in the six 
vertex model formulation) which starts at the bottom left of the domain is constrained to turn upwards when it reaches
the center of the domain, and not before or after. Therefore it would not make sense to extend the domain vertically
by allowing this last path to dip below the original domain, since for VSASMs it would always be constrained to pass from the
original domain into the extended portion at the central column. On the other hand, this last path can arrive at the far right 
column of the original domain at any height $\ell>1$, so it does make sense to extend the domain horizontally and allow this last 
path to pass into the extended portion of the domain after hitting the far right column of the original domain at the height 
$\ell$. This is the approach that we take in this section.

To extend the domain we introduce a new integer $k>0$ and extend the domain to a region of the square lattice of 
size $(2n+1+k)\times (2n+1)$. We label the vertices in the domain by Cartesian coordinates $(x,y)=(i,j)$ with
$i=1,\dots,2n+1+k$ and $j= 1,\dots,2n+1$, and we consider $2n+1$ lattice paths on this domain. All $2n+1$ lattice 
paths enter at the left side of the domain. The top-most $2n$ lattice paths exit at the top of the domain at $x$ positions
$i=1,\dots,2n$, while the final lattice path exits the domain at the top of the very last column with coordinate $i=2n+1+k$.
In addition, we require that the portion of the lattice path configurations which lies is the original domain is such that  it
would map to a VSASM under the mapping from lattice paths/six vertex model configurations to ASMs (more precisely, it
should map to a VSASM if we turn the last path upwards at $i=2n+1$ so that it exits in the last column of the original 
domain). An example of an allowed lattice path configuration on this extended domain is shown in Fig.~\ref{fig:6V-extended}. 
It corresponds to the example shown in Fig.~\ref{fig:6V-config}, but where we let the last path cross into the extended part of 
the domain.

After rescaling all coordinates by a factor of $2n+1$, the original six vertex model domain is rescaled to lie in the unit square
$x\in[0,1]$, $y\in[0,1]$. Due the way we have chosen to extend the domain, the Tangent method will naturally 
produce the portion of the arctic curve for VSASMs which lies in the quadrant $x\in[\frac{1}{2},1]$, $y\in[0,\frac{1}{2}]$,
and we will have to keep this in mind when comparing our final expression for the arctic curve with other expressions in the
literature for the arctic curve of ASMs. Since any given VSASM has a ``mirror image" obtained by reflecting this VSASM
about the central \emph{row}, we can argue by symmetry that the portion of the arctic curve in the region
$x\in[\frac{1}{2},1]$, $y\in[\frac{1}{2},1]$ can be obtained from the curve in the region 
$x\in[\frac{1}{2},1]$, $y\in[0,\frac{1}{2}]$ by simply reflecting it over the line $y=\frac{1}{2}$. Finally, due to the defining
symmetry of VSASMs (which is the symmetry of reflection about the central \emph{column}), the entire arctic 
curve in the region with $x\leq \frac{1}{2}$ is obtained from the arctic curve in the region $x\geq \frac{1}{2}$ by reflection
over the line $x=\frac{1}{2}$.

As in previous sections, we compute the full partition function for the problem on the extended domain by expanding in the
vertical position $\ell$ at which the last path crosses from the original domain into the extended domain. In what follows
we use the notation $\tilde{n}:= 2n+1$ for convenience. If we denote by $N(\tilde{n},k)$ the partition function for the
six vertex model on the extended domain, then we can write
\beq
	N(\tilde{n},k)= \sum_{\ell=1}^{\tilde{n}}Z(\tilde{n},\ell)Y(\ell,k)\ .
\eeq
Here $Z(\tilde{n},\ell)$ is the partition function for paths on the $\tilde{n}\times\tilde{n}$ portion of the domain in which
the last path exits this portion of the domain at the height $\ell$. The other factor $Y(\ell,k)$ is the partition function
for the piece of the last path which lies outside of the original $\tilde{n}\times\tilde{n}$ portion of the domain. 
For the application to VSASMs we only need to consider the six vertex model at the ice point $a=b=c=1$, and so we restrict
our attention to this case in what follows.

Since we work at the ice point, we find that $Z(\tilde{n},\ell)$ is exactly equal to the number of VSASMs of size $\tilde{n}$
whose unique $1$ in the last column is at position $\ell$, 
\beq
	Z(\tilde{n},\ell)= N_{\text{VSASM}}(\tilde{n},\ell)\ .
\eeq
Next, $Y(\ell,k)$ is equal to the total number of six vertex model paths which start at $(\tilde{n},\ell)$, 
end at $(\tilde{n}+k,\tilde{n}+1)$, and are constrained to have the first step be horizontal and the last step be vertical. This
constraint then implies that $Y(\ell,k)$ is just equal to the total number of six vertex model paths
which start at $(\tilde{n}+1,\ell)$ and end at $(\tilde{n}+k,\tilde{n})$. By a ``six vertex model path" we mean a path
whose segments are constrained to be of the type illustrated in Fig.~\ref{fig:6V-arrows}. In addition, if we have only a single 
path, then we are restricted to the vertices of type $b$ and $c$ from Fig.~\ref{fig:6V-arrows}.
This calculation is presented in, for example, Appendix 1 of Ref.~\cite{COSPO}, and the answer can be expressed 
as
\beq
	Y(\ell,k)= \sum_{p=0}^{\text{min}(k-1,\tilde{n}-\ell)}\binom{k-1}{p}\binom{\tilde{n}-\ell}{p}\ .
\eeq
Here the summation variable $p$ can be interpreted as counting the number of ``north-east corners" of a path, where
we have a $(0,1)$ step immediately followed by a $(1,0)$ step (alternatively, the second type $c$ vertex from 
Fig.~\ref{fig:6V-arrows}). Thus, we find that the total partition function of the six vertex model (with our
restriction to configurations in the $\tilde{n}\times\tilde{n}$ portion such that they map to a VSASM) at the ice point on the 
extended domain is
\beq
	N(\tilde{n},k)= \sum_{\ell=1}^{\tilde{n}} \sum_{p=0}^{\text{min}(k-1,\tilde{n}-\ell)} N_{\text{VSASM}}(\tilde{n},\ell)\binom{k-1}{p}\binom{\tilde{n}-\ell}{p}\ .
\eeq
For later use we also define the one-point function $H(\tilde{n},\ell)$ as
\beq
	H(\tilde{n},\ell)= \frac{N_{\text{VSASM}}(\tilde{n},\ell)}{N_{\text{VSASM}}(\tilde{n})}\ .
\eeq

To compute the asymptotics of this partition function we first introduce new scaled variables $z,\xi$, and $\eta$ as
$k=\tilde{n}z$, $\ell=\tilde{n}\xi$, and $p= \tilde{n}\eta$. We then follow Colomo and Sportiello and define the
``free energy"
\beq
	F[z]= \lim_{\tilde{n}\to\infty}\frac{1}{\tilde{n}}{\rm Log}\left[\frac{N(\tilde{n},k)}{N_{\text{VSASM}}(\tilde{n})}\right]
\eeq
and the ``action"
\beq
	S(\xi,\eta;z)= \lim_{\tilde{n}\to\infty}\frac{1}{\tilde{n}}{\rm Log}\left[\binom{k-1}{p}\binom{\tilde{n}-\ell}{p} H(\tilde{n},\ell) \right]\ .
\eeq
After rescaling of the variables and applying Stirling's formula ${\rm Log}(q!)\approx q{\rm Log}(q)-q$ for large $q$, the action takes the form
\beq
	S(\xi,\eta;z)= f(z)-2f(\eta)-f(z-\eta) + f(1-\xi)-f(1-\xi-\eta) + \lim_{\tilde{n}\to\infty}\frac{1}{\tilde{n}}{\rm Log}\left[H(\tilde{n},\xi\tilde{n})\right]\ ,
\eeq
where we used the abbreviated notation $f(x):= x{\rm Log}(x)-x$.
In the scaling limit the free energy is dominated by a contribution from the stationary point of the action, which can
be found by solving the equations
\begin{align}
	\frac{\pd S(\xi,\eta;z)}{\pd \xi} &= 0 \\
	\frac{\pd S(\xi,\eta;z)}{\pd \eta} &= 0\ ,
\end{align}
for the saddle point solutions $\xi_{sp}(z)$ and $\eta_{sp}(z)$ which are functions of $z$.

Before rescaling of the coordinates, the tangent line we are looking for starts at the point $(\tilde{n},\ell)$ and ends at
$(\tilde{n}+k,\tilde{n})$. After rescaling and solving for $\xi_{sp}(z)$, the most likely value of the rescaled $y$
coordinate $\ell$ where the last path leaves the original domain, this line is defined by the equation
\beq
	y= \left(\frac{1-\xi_{sp}(z)}{z}\right)x +\xi_{sp}(z) - \left(\frac{1-\xi_{sp}(z)}{z}\right)\ .
\eeq
We now proceed with the solution of the saddle point equations.

After differentiating with respect to $\xi$ and $\eta$, the two saddle point equations take the form
\beq
	{\rm Log}\left(\frac{1-\xi-\eta}{1-\xi}\right) + \lim_{\tilde{n}\to\infty}\frac{1}{\tilde{n}}\frac{\pd}{\pd\xi}{\rm Log}\left[H(\tilde{n},\xi\tilde{n})\right]= 0
\eeq
and
\beq
	{\rm Log}(z-\eta)-2{\rm Log}(\eta)+{\rm Log}(1-\xi-\eta)= 0\ .
\eeq
The second equation can be solved immediately and allows us to express $\eta_{sp}(z)$ in terms of $\xi_{sp}(z)$ as
\beq
	\eta_{sp}(z)= \left(\frac{1-\xi_{sp}(z)}{1-\xi_{sp}(z)+z}\right)z\ .
\eeq
Solving the first saddle point equation is more complicated due to the presence of the term involving the one point function
$H(\tilde{n},\xi\tilde{n})$. Before presenting the solution we first eliminate $\eta$ from this equation using the solution 
of the second saddle point equation to find
\beq\label{sapot}
	{\rm Log}\left(\frac{1-\xi}{1-\xi+z}\right) + \lim_{\tilde{n}\to\infty}\frac{1}{\tilde{n}}\frac{\pd}{\pd\xi}{\rm Log}\left[H(\tilde{n},\xi\tilde{n})\right]= 0\ .
\eeq
To solve this equation we follow the method of Ref.~\cite{COSPO} and find
that the solution $\xi_{sp}(z)$ is given implicitly by the equation
\beq\label{csone}
	\xi_{sp}(z)= r(t)\ ,
\eeq
obtained by inverting the relation:
\beq\label{cstwo}
	t=\frac{1-\xi_{sp}(z)}{1-\xi_{sp}(z)+z}\ ,
\eeq
with
\beq
	r(t)= \lim_{\tilde{n}\to\infty}\frac{1}{\tilde{n}} t\frac{d}{dt}{\rm Log}(h_{\tilde{n}}(t))\ ,
\eeq
and where we introduced the generating function
\beq
	h_{\tilde{n}}(t):= \sum_{\ell=1}^{\tilde{n}}H(\tilde{n},\ell)t^{\ell-1}
\eeq
for the one point function $H(\tilde{n},\ell)$. To obtain eqns.(\ref{csone}-\ref{cstwo}), we simply estimate for large $\tilde n$:
$$
h_{\tilde{n}}(t)\sim   \int_0^1 d\xi\ H(\tilde{n},\xi\tilde{n})e^{{\tilde n}\xi {\rm Log}(t)}=\int_0^1 d\xi\ e^{{\tilde n}S(\xi,t)}
$$
with 
$$S(\xi,t)=\xi {\rm Log}(t) +\frac{1}{\tilde{n}}{\rm Log}\left[H(\tilde{n},\xi\tilde{n})\right] \ .
$$
If we compare the saddle-point equation $\partial_\xi S(\xi,t)=0$ with Eq.~\eqref{sapot}, then we find a relation between 
$\xi_{sp}(z)$ and $t$, namely \eqref{cstwo}.
Finally, the large $\tilde n$ estimate of $h_{\tilde{n}}(t)$ is given by:
$$h_{\tilde{n}}(t)\sim e^{{\tilde n} S(\xi_{sp}(z),t)}$$
therefore 
\begin{eqnarray*}
&&r(t):=\lim_{\tilde{n}\to\infty}\frac{1}{\tilde{n}} t\frac{d}{dt}{\rm Log}(h_{\tilde{n}}(t))=\lim_{\tilde{n}\to\infty} \left(\xi+t\frac{d\xi}{dt}{\rm Log}(t) +\frac{1}{\tilde{n}} t\frac{d}{dt}{\rm Log}\left[H(\tilde{n},\xi\tilde{n})\right]\right)\Big\vert_{\xi=\xi_{sp}(z)}\\
&&\qquad = \xi_{sp}(z)+t\frac{d\xi_{sp}(z)}{dt}\left({\rm Log}(t)+\lim_{\tilde{n}\to\infty}\frac{1}{\tilde{n}} \partial_{\xi} {\rm Log}\left[H(\tilde{n},\xi\tilde{n})\right]\right)\Big\vert_{\xi=\xi_{sp}(z)}=\xi_{sp}(z)
\end{eqnarray*}
by use of the saddle point equation \eqref{sapot} and the relation \eqref{cstwo}. This gives \eqref{csone}.

We now note that due to the relation
\beq
	\frac{1-\xi_{sp}(z)}{z}= \frac{t}{1-t}\ ,
\eeq
and the fact that $\xi_{sp}(z)=r(t)$, we can eliminate $\xi_{sp}(z)$ and $z$ in favor of $t$ and $r(t)$ in our expression for the 
tangent line. After this change of variables the equation for the tangent line takes the form
\beq
	y= \left(\frac{t}{1-t}\right)x + r(t) -\frac{t}{1-t}\ .
\eeq
To solve for the arctic curve we define the function 
\beq
	F(x,y;t)= - y + \left(\frac{t}{1-t}\right)x + r(t) -\frac{t}{1-t}
\eeq
such that the equation $F(x,y;t)= 0$ is the equation defining the tangent line. To extract the arctic curve
we then solve the simultaneous equations $F(x,y;t)= 0$ and $\frac{\pd}{\pd t}F(x,y;t)= 0$ to solve for 
$x$ and $y$ in terms of $t$, which then yields a parametric form $(x(t),y(t))$ for the arctic curve.

Our final task is to solve for the function $r(t)$. To start, we write out the generating function $h_{\tilde{n}}(t)$
in more detail (using the definition of the one point function) as
\beq
	h_{\tilde{n}}(t)= \sum_{\ell=1}^{\tilde{n}}\frac{N_{\text{VSASM}}(\tilde{n},\ell)}{N_{\text{VSASM}}(\tilde{n})}t^{\ell-1}\ .
\eeq
Next we apply Theorem~\ref{thm:raz-strog} to find that (recall that $\tilde{n}=2n+1$)
\beqa
	h_{\tilde{n}}(t) &=& \frac{N_{\text{VSASM}}(\tilde{n}-2)}{N_{\text{VSASM}}(\tilde{n})N_{\text{ASM}}(\tilde{n}-2)}\frac{t}{t+1}\sum_{\ell=1}^{\tilde{n}-1}N_{\text{ASM}}(\tilde{n}-1,\ell)t^{\ell-1}\nnb \\
	&=& \frac{N_{\text{VSASM}}(\tilde{n}-2)N_{\text{ASM}}(\tilde{n}-1)}{N_{\text{VSASM}}(\tilde{n})N_{\text{ASM}}(\tilde{n}-2)}\frac{t}{t+1}\sum_{\ell=1}^{\tilde{n}-1}\frac{N_{\text{ASM}}(\tilde{n}-1,\ell)}{N_{\text{ASM}}(\tilde{n}-1)}t^{\ell-1} \nnb \\
	&:=& \frac{N_{\text{VSASM}}(\tilde{n}-2)N_{\text{ASM}}(\tilde{n}-1)}{N_{\text{VSASM}}(\tilde{n})N_{\text{ASM}}(\tilde{n}-2)}\frac{t}{t+1} h^{\text{ASM}}_{\tilde{n}-1}(t)\ ,
\eeqa
where we defined
\beq
	h^{\text{ASM}}_{\tilde{n}-1}(t)= \sum_{\ell=1}^{\tilde{n}-1}\frac{N_{\text{ASM}}(\tilde{n}-1,\ell)}{N_{\text{ASM}}(\tilde{n}-1)}t^{\ell-1}\ .
\eeq

We now compute the function $r(t)$ as
\beqa
	r(t) &=& \lim_{\tilde{n}\to\infty}\frac{1}{\tilde{n}} t\frac{d}{dt}{\rm Log}(h_{\tilde{n}}(t)) \nnb \\
	&=& \lim_{\tilde{n}\to\infty}\frac{1}{\tilde{n}}\left[ \frac{1}{1+t} + t\frac{d}{dt}{\rm Log}(h^{\text{ASM}}_{\tilde{n}-1}(t))\right]\ .
\eeqa
The term $\frac{1}{1+t}$ came from differentiating the factor of $\frac{t}{t+1}$ which appeared in 
Theorem~\ref{thm:raz-strog}. In the $\tilde{n}\to\infty$ limit this term goes to zero due to the prefactor
of $\frac{1}{\tilde{n}}$ and, since $\tilde{n}-1$ can be replaced with $\tilde{n}$ in this limit, we find that
\beq
	r(t)= r^{\text{ASM}}(t)\ ,
\eeq
where
\beq
	 r^{\text{ASM}}(t) = \lim_{\tilde{n}\to\infty}\frac{1}{\tilde{n}} t\frac{d}{dt}{\rm Log}(h^{\text{ASM}}_{\tilde{n}}(t))\ .
\eeq
In the ASM case an explicit expression for this function is known (see, for example, Ref.~\cite{CP2010}) and we have
\beq
	 r^{\text{ASM}}(t) =  \frac{\sqrt{t^2-t+1}-1}{t-1}\ .
\eeq

To close this section we present the explicit form of the arctic curve for VSASMs which, due to the relation 
$r(t)= r^{\text{ASM}}(t)$, is identical to the arctic curve for ordinary ASMs. Solving the simultaneous equations
$F(x,y;t)= 0$ and $\frac{\pd}{\pd t}F(x,y;t)= 0$ using the explicit form of $r^{\text{ASM}}(t)$ yields the
parametric form of the arctic curve,
\beqa
	x(t) &=& \frac{1+t}{2\sqrt{1-t+t^2}} \\
	y(t) &=& \frac{-2+t+2\sqrt{1-t+t^2}}{2\sqrt{1-t+t^2}}\ .
\eeqa
It is now straightforward to verify that $x(t)$ and $y(t)$ satisfy the equation 
\beq
	4(1-x) - 4(1-x)^2 + 4y - 4y^2 + 4(1-x)y -1 = 0\ ,
\eeq
which is exactly the equation for the portion of the arctic curve for ASMs in the region $x\in[\frac{1}{2},1]$, 
$y\in[0,\frac{1}{2}]$ of the rescaled domain. Note that in Ref.~\cite{CP2010} Colomo and Pronko obtained the portion
of the arctic curve for ASMs in the region $x\in[0,\frac{1}{2}]$, $y\in[0,\frac{1}{2}]$, and so our curve differs from theirs
by the replacement $x\to1-x$, which implements the reflection over the line $x=\frac{1}{2}$. It is also interesting and
worth noting that we still obtain the correct arctic curve for ASMs and VSASMs by extending the domain of the six vertex
model horizontally instead of vertically as in Ref.~\cite{COSPO}. Finally, let us also note that due to the
term $\frac{t}{1+t}$ appearing in Theorem~\ref{thm:raz-strog}, one would naively think that the arctic curves for ASMs and
VSASMs would be different. However, the contribution of this extra term to the function $r(t)$ is suppressed in the
$\tilde{n}\to\infty$ limit, and so the arctic curves for ASMs and VSASMs turn out to be identical. However, \emph{finite-size} 
numerical studies of the arctic curve in VSASMs should show slight differences from the arctic curve for ASMs, due to this
extra term.

\section{Conclusion}\label{conclusec}

\subsection{Summary}
In this paper, we have explored four specific statistical models that can be rephrased into (weakly) non-intersecting path models,
and shown how to apply the tangent method for determining the arctic curve in the limit of large size. Our exact results
confirm the applicability of the method in the case of the domino tilings of the Aztec diamond (Section~\ref{aztecsec}), 
where the arctic circle was derived rigorously by other methods \cite{CEP,DFSG}. 
Our Dyck path model for the rhombus tiling of a half-hexagon (Sections~\ref{dycksecone} and \ref{dycksectwo})
and the osculating path model for VSASM (Section~\ref{sec:VSASM})
both lead to analogous results: the arctic curve for a half-domain with fixed 
boundary conditions along the cut is identical to that of the full domain (ellipse for the hexagon, and 4 quarters of ellipse
for ASM). Finally, the other path model of Section \ref{everyothersec} leads to a parabola, which, despite the fact that the
partition function matches that of domino tilings of the Aztec diamond, points to a very different asymptotic behavior.

\subsection{Open problems}
We are left with many open questions. The tangent method itself remains to be proved rigorously.
Regarding its range of applicability, it seems to not only apply to strictly non-intersecting paths, but to weakly interacting
ones as well, for which kissing or osculating points are allowed. We may perhaps test this on other path models, such
as osculating large Schroeder paths, corresponding to tilings of the Aztec diamond or other domains by means of $2\times 1$
and $1\times 2$ dominos and a finite set of extra larger tiles that account for the various kissing point configurations. Note in that
case the novel possibility for three paths to share an osculating vertex. 
Other cases of interest are the path models corresponding to higher spin versions of the 6 Vertex model (such as the spin-1
19 vertex model for instance). Finally, another set of problems concerns path models with inhomogeneous weights, namely
with steps of different weights depending on their positions. For instance periodic inhomogeneous weight domino tilings of the Aztec
diamond were studied in \cite{DFSG}, and shown to give rise to more involved arctic curves, including ``bubbles" of intermediate
disorder semi-crystalline phases. We intend to return to these problems in a future publication.

\bibliographystyle{amsalpha} 
\bibliography{tangent-dyck-refs}

\providecommand{\bysame}{\leavevmode\hbox to3em{\hrulefill}\thinspace}
\providecommand{\MR}{\relax\ifhmode\unskip\space\fi MR }
\providecommand{\MRhref}[2]{%
  \href{http://www.ams.org/mathscinet-getitem?mr=#1}{#2}
}
\providecommand{\href}[2]{#2}
\begin{thebibliography}{DFSG14}

\bibitem[CEP96]{CEP}
Henry Cohn, Noam Elkies, and James Propp, \emph{Local statistics for random
  domino tilings of the aztec diamond}, Duke Math. J. \textbf{85} (1996),
  no.~1, arXiv:math/0008243 [math.CO].

\bibitem[CNP11]{CNP}
F~Colomo, V~Noferini, and AG~Pronko, \emph{Algebraic arctic curves in the
  domain-wall six-vertex model}, J. Phys. A: Math. Theor. \textbf{44} (2011),
  no.~19, 195201, arXiv:1012.2555 [math-ph].

\bibitem[CP10]{CP2010}
Filippo Colomo and AG~Pronko, \emph{The limit shape of large alternating sign
  matrices}, SIAM J. Discrete Math. \textbf{24} (2010), no.~4, 1558--1571,
  arXiv:0803.2697 [math-ph].

\bibitem[CS16]{COSPO}
Filippo Colomo and Andrea Sportiello, \emph{Arctic curves of the six-vertex
  model on generic domains: the tangent method}, J. Stat. Phys. \textbf{164}
  (2016), no.~6, 1488--1523, arXiv:1605.01388 [math-ph].

\bibitem[dB81]{DeBruijn}
N.~G. de~Bruijn, \emph{Algebraic theory of {P}enrose's nonperiodic tilings of
  the plane. {I}, {II}}, Nederl. Akad. Wetensch. Indag. Math. \textbf{43}
  (1981), no.~1, 39--52, 53--66. \MR{609465}

\bibitem[DFSG14]{DFSG}
Philippe Di~Francesco and Rodrigo Soto-Garrido, \emph{Arctic curves of the
  octahedron equation}, J. Phys. A \textbf{47} (2014), no.~28, 285204, 34,
  arXiv:1402.4493 [math-ph]. \MR{3228361}

\bibitem[GV85]{GV}
Ira Gessel and G{\'e}rard Viennot, \emph{Binomial determinants, paths, and hook
  length formulae}, Adv. Math. \textbf{58} (1985), no.~3, 300--321.

\bibitem[JPS98]{JPS}
William Jockusch, James Propp, and Peter Shor, \emph{Random domino tilings and
  the arctic circle theorem}, arXiv:math/9801068 [math.CO] (1998).

\bibitem[Ken04]{Kdim}
Richard Kenyon, \emph{An introduction to the dimer model}, School and
  {C}onference on {P}robability {T}heory, ICTP Lect. Notes, XVII, Abdus Salam
  Int. Cent. Theoret. Phys., Trieste, 2004, arXiv:math/0310326 [math.CO],
  pp.~267--304. \MR{2198850}

\bibitem[KO06]{KO1}
Richard Kenyon and Andrei Okounkov, \emph{Planar dimers and harnack curves},
  Duke Math. J \textbf{131} (2006), no.~3, 499--524, arXiv:math/0311062
  [math.AG].

\bibitem[KO07]{KO2}
\bysame, \emph{Limit shapes and the complex burgers equation}, Acta Math.
  \textbf{199} (2007), no.~2, 263--302, arXiv:math-ph/0507007.

\bibitem[Kor82]{Korepin}
Vladimir~E Korepin, \emph{Calculation of norms of bethe wave functions},
  Commun. Math. Phys. \textbf{86} (1982), no.~3, 391--418.

\bibitem[KOS06]{KOS}
Richard Kenyon, Andrei Okounkov, and Scott Sheffield, \emph{Dimers and
  amoebae}, Ann. Math. (2006), 1019--1056, arXiv:math/0311062 [math.AG].

\bibitem[Kra10]{KrattCat}
C.~Krattenthaler, \emph{Determinants of (generalised) {C}atalan numbers}, J.
  Statist. Plann. Inference \textbf{140} (2010), no.~8, 2260--2270,
  arXiv:0709.3044 [math.CO]. \MR{2609485}

\bibitem[Kup96]{kuperberg1996another}
Greg Kuperberg, \emph{Another proof of the alternative-sign matrix conjecture},
  no.~3, 139--150, arXiv:math/9712207 [math.CO].

\bibitem[Kup02]{kuperberg2002symmetry}
\bysame, \emph{Symmetry classes of alternating-sign matrices under one roof},
  Ann. Math. (2002), 835--866, arXiv:math/0008184 [math.CO].

\bibitem[RS04]{RS}
Aleksandr~Vital'evich Razumov and Yu~G Stroganov, \emph{Refined enumerations of
  some symmetry classes of alternating-sign matrices}, Theor. Math. Phys.
  \textbf{141} (2004), no.~3, 1609--1630, arXiv:math-ph/0312071.

\bibitem[Zei96a]{zeilberger1}
Doron Zeilberger, \emph{Proof of the alternating sign matrix conjecture},
  Electron. J. Combin. \textbf{3} (1996), no.~2, R13, arXiv:math/9407211
  [math.CO].

\bibitem[Zei96b]{zeilberger2}
\bysame, \emph{Proof of the refined alternating sign matrix conjecture}, New
  York J. Math. \textbf{2} (1996), 59--68, arXiv:math/9606224 [math.CO].

\end{thebibliography}

\end{document}